\documentclass[11pt, letterpaper]{article}
\usepackage{amsmath, amssymb, amsthm, amsfonts, bbm, commath}
\usepackage{enumerate}
\usepackage{tikz}
\usetikzlibrary{arrows.meta,shapes}
\usepackage{authblk}
\usepackage{graphicx}
\usepackage{hyperref}
\usepackage{xcolor}
\usepackage[ruled]{algorithm2e}
\usepackage{cancel}
\usepackage{multirow}
\usepackage{calc}
\usepackage[letterpaper, margin=1in]{geometry}

\title{Exact Byzantine Consensus on Arbitrary Directed Graphs under Local Broadcast Model\footnote{This research is supported in part by the National Science Foundation awards 1409416 and 1733872, and Toyota InfoTechnology Center. Any opinions, findings, and conclusions or recommendations expressed here are those of the authors and do not necessarily reflect the views of the funding agencies or the U.S. government.}}

\author[1]{Muhammad Samir Khan}
\affil[1]{
	Department of Computer Science\protect\linebreak
	University of Illinois at Urbana-Champaign\protect\linebreak \texttt{mskhan6@illinois.edu}\protect\linebreak
}

\author[2]{Lewis Tseng}
\affil[2]{
	Department of Computer Science\protect\linebreak
	Boston College\protect\linebreak
	\texttt{lewis.tseng@bc.edu}\protect\linebreak
}

\author[3]{Nitin H. Vaidya}
\affil[3]{
	Department of Computer Science\protect\linebreak
	Georgetown University\protect\linebreak
	\texttt{nitin.vaidya@georgetown.edu}\protect\linebreak
}

\clearpage{}
\newtheorem{theorem}{Theorem}[section]
\newtheorem{lemma}[theorem]{Lemma}

\newtheorem{definition}[theorem]{Definition}

\newtheorem{observation}[theorem]{Observation}

\newenvironment{newproof}[1][Proof]{
	\begin{oldproof}[\textbf{\em #1:}]}
	{\end{oldproof}}
\renewenvironment{proof}{\begin{newproof}}{\end{newproof}}

\newcommand{\propagate}[1]{\overset{#1}{\rightsquigarrow}}
\newcommand{\notpropagate}[1]{\overset{#1}{\not \rightsquigarrow}}

\newcommand{\adjacent}[1]{\overset{}{\rightarrow}}
\newcommand{\notadjacent}[1]{\overset{}{\not \rightarrow}}

\newcommand{\inneighborhood}[3]{\Gamma_{#3}(#1 , #2)}

\newcommand{\floor}[1]{\left\lfloor {#1} \right\rfloor}

%\clearpage{}
\begin{document}
	\maketitle

	\begin{abstract}
		We consider Byzantine consensus in a synchronous system where nodes are connected by a network modeled as a \emph{directed graph}, i.e., communication links between neighboring nodes are not necessarily bi-directional. The directed graph model is motivated by wireless networks wherein asymmetric communication links can occur.
In the classical point-to-point communication model, a message sent on a communication link is private between the two nodes on the link.
This allows a Byzantine faulty node to {\em equivocate}, i.e., send inconsistent information to its neighbors.
This paper considers the {\em local broadcast} model of communication, wherein transmission by a node is received identically by {\em all of its outgoing neighbors}.
This allows such neighbors to detect a faulty node's attempt to equivocate, effectively depriving the faulty nodes of the ability to send conflicting information to different neighbors.

Prior work has obtained sufficient and necessary conditions on {\em undirected graphs} to be able to achieve Byzantine consensus under the {\em local broadcast} model. In this paper, we obtain tight conditions on {\em directed graphs} to be able to achieve Byzantine consensus with binary inputs under the {\em local broadcast} model. The results obtained in the paper provide insights into the trade-off between directionality of communication and the ability to achieve consensus. 	\end{abstract}

	\section{Introduction} \label{section introduction}
		Byzantine consensus \cite{Pease:1980:RAP:322186.322188} is a classical problem in distributed computing. We consider a synchronous system consisting of $n$ nodes, each with a {\em binary} input.
The objective is for the nodes to reach consensus in the presence of up to $f$ Byzantine faulty nodes.
The nodes are connected by a communication network represented by a graph.
In the classical point-to-point communication model, a message sent by a node to one of its neighboring nodes is received only by that node. This allows a Byzantine faulty node to send inconsistent messages to its
neighbors without the inconsistency being observed by the neighbors. For instance, a faulty node $u$ may report to neighbor $v$ that its input is 0, whereas
report to neighbor $w$ that its input is 1.
Node $v$ will not hear the message sent by node $u$ to node $w$.
This ability of a faulty node to send conflicting information on different communication links is called \emph{equivocation} \cite{Chun:2007:AAM:1323293.1294280}.
The problem of Byzantine consensus in  point-to-point networks is well-studied \cite{Attiya:2004:DCF:983102, DOLEV198214, Lamport:1982:BGP:357172.357176, Lynch:1996:DA:2821576, Pease:1980:RAP:322186.322188}.
In undirected graphs, it is known that $n>3f$ and connectivity  $\geq 2f+1$ are necessary and sufficient conditions to achieve Byzantine consensus \cite{DOLEV198214}.

This paper considers the {\em local broadcast} model of communication \cite{Bhandari:2005:RBR:1073814.1073841, Koo:2004:BRN:1011767.1011807}.
Under local broadcast, a message sent by a node is received identically by all neighbors of that node.
This allows the neighbors of a faulty node to detect its attempts to equivocate, effectively depriving the faulty node of the ability to send conflicting information to its neighbors. In the example above, in the local broadcast model, if node $u$ attempts to send different input values to different neighbors, the neighbors will receive all the messages, and can detect the inconsistency.
Recent work has shown that this ability to detect equivocation reduces network requirements for Byzantine consensus in \emph{undirected} graphs.
In particular, for the local broadcast model,
recent work \cite{Khan2019ExactBC, NaqviBroadcast} has identified that the following two conditions are both necessary and sufficient for Byzantine consensus in \emph{undirected} graphs:
 network connectivity $\geq \floor{3f/2} + 1$ and minimum node degree $\geq 2f$.

In this paper, we study Byzantine consensus in \emph{directed} graphs under the local broadcast model.
The directed graph model is motivated by wireless networks wherein asymmetric links may occur. Thus, the communication links between neighboring nodes are not necessarily bi-directional. Under local broadcast, when some node $u$ transmits a message, it is received identically by all of $u$'s {\em outgoing} neighbors (i.e., by nodes to whom there are outgoing links from node $u$).  The results obtained in the paper provide insights into the trade-off between directionality of communication and the ability to achieve consensus.

This paper makes two main contributions:
\begin{enumerate}
	\item \textbf{Necessity:}
	In Section \ref{section necessity}, we identify a necessary condition for directed graphs to solve Byzantine consensus.
	The proof is via a state machine based approach  \cite{Attiya:2004:DCF:983102, DOLEV198214, Fischer1986}.
	However, the communication by faulty nodes must follow the local broadcast model which restricts their behavior.
	The directed nature of the graph also adds to the difficulty.
	We handle this complexity via non-trivial arguments to show the desired result.

	\item \textbf{Sufficiency:}
	In Section \ref{section sufficiency}, we constructively show that the necessary condition is also sufficient by presenting a Byzantine consensus algorithm along with its proof of correctness.
	The key challenge for directed graphs is that communication may only exist in one direction between some pairs of nodes.
	Hence, it is not straightforward to adapt the prior algorithm \cite{Khan2019ExactBC} on undirected graphs.
	More specifically, in undirected graphs, each node has some path(s) to each of the other nodes\footnote{In undirected graphs, consensus is not possible if graph is not connected.}.
	Prior algorithm by Khan et. al. \cite{Khan2019ExactBC} utilizes this property to solve Byzantine consensus in undirected graphs under local broadcast.
	In the directed case, this property is not guaranteed due to the directionality of communication.
	We prove some non-trivial properties (Section 6.2) implied by the tight condition identified in this paper.
	This allows us to achieve consensus in a unique “source component” and then propagate that decision to the rest of the graph.
\end{enumerate}

The rest of the paper is organized as follows.
We discuss related work in Section \ref{section related work}.
Section \ref{section notation} formalizes the setting and introduces notation.
Our main result is presented in Section \ref{section main results}.
Necessity of the identified tight condition is shown in Section \ref{section necessity}
while
sufficiency is shown in Section \ref{section sufficiency}.
We summarize in Section \ref{section conclusion}.
 
	\section{Related Work} \label{section related work}
		Byzantine consensus is a well-studied problem \cite{Attiya:2004:DCF:983102, DOLEV198214, Lamport:1982:BGP:357172.357176, Lynch:1996:DA:2821576, Pease:1980:RAP:322186.322188} with tight conditions known for complete graphs \cite{Pease:1980:RAP:322186.322188}, undirected graphs \cite{DOLEV198214}, and  directed graphs \cite{LewisByzantineDirected} under the point-to-point communication model.
For brevity, we focus here on related work that restricts equivocation by faulty nodes.

Rabin and Ben-Or \cite{Rabin:1989:VSS:73007.73014} considered complete graphs with global broadcast under synchronous communication, while 
Clement et. al. \cite{Clement:2012:PN:2332432.2332490} looked at non-equivocation in complete graphs under asynchronous communication.
Amitanand et. al. \cite{Amitanand:2003:DCP:872035.872065} restricted equivocation by faulty nodes by partitioning, for each faulty node $w$, the remaining graph such that a message sent by $w$ to any node is received identically by all nodes in the corresponding partition.
However, the underlying graph in \cite{Amitanand:2003:DCP:872035.872065} is complete while we consider arbitrary directed graphs.
Several works \cite{Fitzi:2000:PCG:335305.335363, Jaffe:2012:PEB:2332432.2332491, Ravikant10.1007/978-3-540-30186-8_32} have used \emph{undirected} hypergraphs to model partial broadcast for the Byzantine consensus problem.
In this model, a message sent on an hyperedge is received identically by all nodes in the hyperedge.
The closest work to this paper is by Khan et. al. \cite{Khan2019ExactBC, NaqviBroadcast}, who obtained that minimum node degree $\geq 2f$ and network connectivity $\geq \floor{3f/2} + 1$ are tight conditions for Byzantine consensus in \emph{undirected} graphs under the local broadcast model.
Here, we consider arbitrary \emph{directed} graphs under the local broadcast model.

Restricted equivocation has also been used to study related problems.
\cite{Considine2005, Franklin2000, FranklinHypergraphsPrivacy2004, Wang2001} looked at reliability and privacy on partial broadcast networks.
\cite{Bhandari:2005:RBR:1073814.1073841, Koo:2004:BRN:1011767.1011807, Koo:2006:RBR:1146381.1146420}
have investigated the Byzantine \emph{broadcast} problem under local broadcast on both undirected and directed graphs.
In Byzantine broadcast, the goal is for a single source to transmit a binary value reliably throughout the network.
We consider the Byzantine \emph{consensus} problem, where the goal is for all nodes to agree on a common value.

Another line of work investigates iterative algorithms for \emph{approximate} Byzantine consensus.
In this problem, each node starts with a real value (or a vector of real values) and maintains a state variable.
The updates are ``memory less'', i.e., the update rules allow a state update in each round to depend only on the current state and the state values received from neighbors.
This problem has been investigated under the classical point-to-point communication model on directed graphs by Tseng and Vaidya \cite{TsengIterativeICDCN} and Vaidya et. al. \cite{VaidyaICDCN_Vector, Vaidya:2012:IAB:2332432.2332505}, under partial broadcast modeled via directed hypergraphs by Li et. al. \cite{Li7516067}, and under the local broadcast model on directed graphs by LeBlanc et. al. \cite{LeBlancIterative} as well as by Zhang and Sundaram \cite{ZhangIterative}.
The network conditions are different than the ones presented in this paper, since the algorithm structure is restricted (as summarized above) in these prior works. 
	\section{System Model and Notation} \label{section notation}
		We consider a synchronous system consisting of $n$ nodes. The communication network connecting the nodes is represented by a \emph{directed} graph $G = (V, E)$, where $\abs{V} = n$.
Each of the $n$ nodes is represented by a vertex $u \in V$.
We interchangeably use the terms \emph{node} and \emph{vertex}.
Every node in the graph knows the communication graph $G$.
Each directed edge $(u, v) \in E$ represents a FIFO link from $u$ to $v$.
When a message $m$ sent by node $u$ on edge $(u,v)$ is received by node $v$, node $v$ knows that the message $m$ was sent by node $v$.
This assumption is implicit in the previous related work as well.
We assume the \emph{local broadcast} model of communication wherein a message sent by any node $u$ is received identically and correctly by each node $v$ such that $(u, v) \in E$.

A \emph{Byzantine} faulty node may exhibit arbitrary behavior,
however, its communication is still governed by the local broadcast model.
We consider the \emph{Byzantine consensus problem}.
Each node starts with a binary input and must output a binary value.
There are at most $f$ Byzantine faulty nodes in the system, where $0 < f < n$\footnote{The case with $f = 0$ is trivial and the case with $n = f$ is not of interest.}.
The output at each node must satisfy the following conditions.

\begin{enumerate}
	\item \textbf{Agreement:}
	All non-faulty nodes must output the same value.

	\item \textbf{Validity:}
	The output of each non-faulty node must be an input of some non-faulty node.

	\item \textbf{Termination:}
	All non-faulty nodes must decide on their output in finite time.
\end{enumerate}

\begin{enumerate}[\null]
	\item \textbf{Neighborhood:}
	If $(u, v)\in E$, then $u$ is an \emph{in-neighbor} of $v$ and $v$ is an \emph{out-neighbor} of $u$.
	The \emph{in-neighborhood} of a node $v$ is the set of all in-neighbors of $v$, i.e., $\set{ u \mid (u, v) \in E }$.
	Similarly, the \emph{out-neighborhood} of a node $v$ is the set of all out-neighbors of $v$, i.e., $\set{ u \mid (v, u) \in E }$.
	In graph $G$, for node sets $A$ and $B$, we define in-neighborhood of set $B$ in set $A$, denoted $\inneighborhood{A}{B}{G}$,
	as the set of in-neighbors of nodes in $B$ that are in set $A$. That is, $\inneighborhood{A}{B}{G} = \set{ u \in A \mid \exists v \in B \: \text{s.t.} \: (u, v) \in E(G)}$.
	Note that $E(G)$ denotes the set of edges in graph $G$. We will use the above definition for different graphs, hence the subscript $G$ above is important. We may drop the subscript $G$ when it is clear from the context.

	We will say that $A \adjacent{F}_G B$ if $\abs[0]{ \inneighborhood{A}{B}{G} } > f$. Here as well, we may drop the subscript $G$ when it is clear from the context.

	\item \textbf{Paths in graph $G$:}
	A path is a sequence of nodes such that if $u$ precedes $v$ in the sequence, then $u$ is an in-neighbor of $v$ in $G$ (i.e., $(u, v)$ is an edge).
	\begin{itemize}
		\item
		For two nodes $u$ and $v$, a $uv$-path $P_{uv}$ is a path from $u$ to $v$.
		$u$ is called the \emph{source} and $v$ the \emph{terminal} of $P_{uv}$.
		Any other node in the path is called an \emph{internal} node of $P_{uv}$.
		Two $uv$-paths are \emph{node-disjoint} if they do not share a common internal node.

		\item
		For a set $U \subsetneq V$ and a node $v \not \in U$, a $Uv$-path is a $uv$-path for some node $u \in U$.
		All $Uv$-paths have $v$ as the terminal.
		Two $Uv$-paths are \emph{node-disjoint} if they do not have any nodes in common except terminal node $v$.
		In particular, two node-disjoint $Uv$-paths have different source nodes.
	\end{itemize}
	A path is said to \emph{\underline{exclude}} a set of nodes $X \subset V$ if no internal node of the path belongs to $X$; however, its source and terminal nodes may potentially belong to $X$.
	A path is said to be \emph{\underline{fault-free}} if none of its internal nodes are faulty.
	In other words, a path is fault-free if it excludes the set of faulty nodes.
	Note that a fault-free path may have a faulty node as either source or terminal.

	We use the notation $A \propagate{X}_G B$ if, for every node $u \in B$, there exist at least $f+1$ node-disjoint $Au$-paths in $G$ that exclude $X$, i.e., there exist $f+1$ node-disjoint $Au$-paths that have only $u$ in common and none of them contain any internal node from the set $X$.
	We may omit the subscript $G$ when it is clear from the context.
\end{enumerate}

With a slight abuse of terminology, we allow a \emph{partition} of a set to have empty parts.
That is, $(Z_1, \dots, Z_k)$ is a partition of a set $Y$ if $\cup_{i=1}^{k} Z_i = Y$ and $Z_i \cap Z_j = \emptyset$ for all $i \ne j$, but some $Z_i$'s can be possibly empty.

For a set of nodes $U \subsetneq V$,
\begin{itemize}
	\item $G[U]$ is the subgraph induced by the nodes in $U$.
	\item $G_{-U}$ is the graph obtained from $G$ by removing all edges $(v, u)$ such that $u \in U$, i.e., by removing all incoming edges to $U$.
	Observe that if $P$ is a path in $G_{-U}$, then $P$ is a path in $G$ that excludes $U$ and terminates in $V - U$.
	Conversely, if $P$ is a path in $G$ that excludes $U$ and terminates in $V - U$, then $P$ is a path in $G_{-U}$.
\end{itemize}

A directed graph $G$ is \emph{strongly connected} if for each pair of nodes $u, v$, there is both a $uv$-path and a $vu$-path in $G$.
A \emph{directed graph decomposition} of $G$ is a partition of $G$ into non-empty parts $H_1, \dots, H_k$, where $k > 0$, and each $H_i$ is a maximal strongly connected subgraph of $G$ -- each $H_i$ is assumed to be maximal in the sense that adding any nodes to $H_i$ will destroy its strong connectivity.
Let $\mathcal{H}$ be the graph obtained from the decomposition by contracting each $H_i$ into a node $c_i$, so that there is an edge $(c_i, c_j)$ in $\mathcal{H}$ if there is an edge from a node in $H_i$ to a node in $H_j$ in $G$.
Then, graph $\mathcal{H}$ is acyclic.
If a node $c_i$ has no in-neighbors, then $H_i$ is called a \emph{source component} of the decomposition.
Note that, since $\mathcal{H}$ is acyclic, there is always at least one source component of a directed graph decomposition.
 
	\section{Main Results} \label{section main results}
		The main result of this paper is a tight network condition for consensus in directed graphs under the local broadcast model.
The following definition presents the condition and the accompanying theorem states the result.

\begin{definition} \label{definition condition SC}
	A directed graph $G$ satisfies \underline{\emph{condition SC with parameter $F$}} if for every partition $(A, B)$ of $V$, where both $A - F$ and $B - F$ are non-empty, we have that either $A \propagate{F} B - F$ or $B \propagate{F} A - F$.
	We say that $G$ satisfies \underline{\emph{condition SC}}, if $G$ satisfies condition SC with parameter $F$ for every set $F \subseteq V$ of cardinality at most $f$.
\end{definition}

\begin{theorem} \label{theorem tight}
	Under the local broadcast model, Byzantine consensus tolerating at most $f$ Byzantine faulty nodes is achievable on a directed graph $G$ if and only if $G$ satisfies condition SC.
\end{theorem}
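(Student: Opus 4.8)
The statement is an equivalence, so the plan is to prove necessity and sufficiency separately. For necessity I would prove the contrapositive by the classical execution-indistinguishability (state-machine) method, and for sufficiency I would exhibit an explicit algorithm together with a proof of Agreement, Validity, and Termination.

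\textbf{Necessity.} Assume $G$ does not satisfy condition SC. Then there is a set $F$ with $\abs{F} \le f$ and a partition $(A, B)$ of $V$, with $A - F$ and $B - F$ both non-empty, such that $A \notpropagate{F} B - F$ and $B \notpropagate{F} A - F$. By the definition of $\propagate{F}$ together with a Menger-type argument, the failure of each relation exposes a non-faulty terminal node and a separator of at most $f$ vertices through which every qualifying ($F$-excluding) path to that node must pass. The plan is to treat $F$ as the faulty set and, following the state-machine approach, to splice several copies of $G$ along these small separators into an auxiliary system. One then arranges a cyclic chain of executions, each using at most $f$ faults, in which consecutive executions are indistinguishable to some non-faulty node; hence all non-faulty nodes must produce identical outputs throughout the chain. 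Since the two ends of the chain can be taken to have all non-faulty inputs equal to $0$ and all equal to $1$ respectively, this contradicts Validity. The subtle point, which I would treat explicitly, is that under local broadcast a faulty node cannot equivocate: all of its out-neighbors receive the same transmission in each round. Consequently the spliced transcripts must be realizable by genuine per-node broadcasts rather than by independent messages along individual edges, and reconciling this constraint with the separator structure and with the directionality of the edges is the main obstacle in this direction.

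\textbf{Sufficiency.} Assume $G$ satisfies condition SC. I would begin from the directed graph decomposition of $G$ and its acyclic contraction $\mathcal{H}$. The first structural claim is that the decomposition has a \emph{unique} source component: taking $F = \emptyset$ and setting $A$ to be one source component and $B$ the remainder, a second source component contained in $B$ has no incoming edges, so neither $A \propagate{\emptyset} B$ nor $B \propagate{\emptyset} A$ can hold (no path can reach into the second source component, and none can reach into the first), contradicting condition SC. The second structural claim, also derived from condition SC (the properties announced in Section 6.2), is that this unique source component is internally rich enough --- it admits $f+1$ node-disjoint excluding paths between the two sides of each of its internal partitions --- to run Byzantine consensus among its own nodes while tolerating $f$ faults. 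Given these two properties, the algorithm proceeds in two phases: (i) the non-faulty nodes of the source component agree on a common binary value $x$; and (ii) $x$ is propagated to the remaining components in a topological order of $\mathcal{H}$. Phase (ii) uses the $\propagate{F}$ guarantee directly: for the actual faulty set, each downstream non-faulty node receives $x$ along $f+1$ node-disjoint paths whose internal nodes avoid the faulty set, so a fault-free route exists, and local broadcast lets the node detect equivocation along these routes and commit to $x$.

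\textbf{Main obstacle.} I expect the structural lemmas underlying sufficiency to be the crux of the whole proof. Two points require care. First, deriving from condition SC alone that the source component carries enough node-disjoint excluding paths to support internal consensus is delicate, because the paths guaranteed by condition SC live in $G$ and may a priori leave the source component; one must use that the source component has no incoming edges to confine the relevant paths. Second, the outward propagation must preserve not only Agreement but also Validity: the value $x$ flooded downstream must remain traceable to a non-faulty node's input, even though a downstream node may have several faulty in-neighbors. Establishing these invariants --- and verifying that every non-faulty node, including those reachable only through long chains of components, reliably receives $x$ --- is where the directed, local-broadcast setting departs most sharply from the undirected algorithm of Khan et. al., and is where I would concentrate the bulk of the argument.
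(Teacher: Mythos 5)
Your necessity sketch follows the same state-machine/indistinguishability route as the paper (which splices copies of $G$ along the small separators exposed by the failure of the condition and derives three mutually indistinguishable executions), and you correctly flag the real difficulty there, namely that the spliced transcripts must respect per-node local broadcast. That half of the plan is sound in outline. The gap is in your sufficiency argument.

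Your plan runs consensus inside the unique source component of $G$ itself (the decomposition with $F=\emptyset$) and then relays the result downstream in topological order. Two things go wrong. First, the argument is circular at its core: you justify consensus inside the source component by asserting that it is ``internally rich enough \ldots to run Byzantine consensus among its own nodes while tolerating $f$ faults,'' but establishing that a graph satisfying an SC-like condition can achieve consensus is exactly the statement being proved; no base case or recursion bottoms this out. Second, the source component of $G$ may contain faulty nodes, so even if internal agreement were somehow reached, the agreed value need not trace back to a non-faulty input, and the downstream relay through intermediate components (each of which may itself contain faults) has no Validity guarantee. The paper's algorithm avoids both problems with a device your proposal lacks: it iterates over \emph{every} candidate fault set $F$ with $\abs{F}\le f$, and in each phase works with the unique source component $S$ of $G-F$ (not of $G$). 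In the phase where $F$ equals the actual fault set $F^*$, the set $S$ is entirely non-faulty, so no recursive consensus is needed --- nodes of $S$ merely flood their states and reconcile them along fault-free paths that exclude $F$ (guaranteed by the analogue of your structural claim, applied to $G[S\cup\inneighborhood{F}{S}{}]$), and then $S \propagate{F} V-S-F$ lets every remaining non-faulty node adopt the common value directly via $f+1$ node-disjoint $F$-excluding paths, with no component-by-component relay. A separate lemma shows that in phases where $F\ne F^*$ a non-faulty node only ever adopts the initial state of some non-faulty node, so wrong guesses preserve Validity and never undo an agreement already reached. Without this iterate-over-all-$F$ structure and the accompanying two per-phase invariants, your outline cannot be completed into a proof.
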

\begin{proof}
	The proof follows from Theorems \ref{theorem equivalence}, \ref{theorem necessity}, and \ref{theorem sufficiency} presented later.
\end{proof}

Intuitively, the above condition requires that at least one of the two partitions A and B should have the ability to ``propagate'' its state to the other partition reliably.
For the point-to-point communication model, Tseng and Vaidya \cite{LewisByzantineDirected} obtained an analogous network condition, which is that, for every partition $(A, B)$ of $V$ and a faulty set $F$, where both $A - F$ and $B - F$ are non-empty, either $A - F \propagate{F} B - F$ or $B - F \propagate{F} A - F$.
In the point-to-point communication model a faulty node can equivocate.
Thus, the condition in \cite{LewisByzantineDirected} does not allow nodes in set $F$ to be source nodes\footnote{Recall that, in a $uv$-path, $u$ is the source node and $v$ is the terminal.}, and requires $A-F$ or $B-F$ to propagate its state to the non-faulty
nodes in the other partition. On the other hand,
as discussed earlier, local broadcast effectively removes a faulty node's ability to equivocate.
Therefore, the condition in Definition \ref{definition condition SC} allows a node in set $F$ to be a source node in the propagation paths, but does not allow nodes in $F$ to be internal nodes on such paths.

Even though the conditions for point-to-point communication \cite{LewisByzantineDirected} and local broadcast seem similar, the algorithm in \cite{LewisByzantineDirected} is not immediately adaptable to the local broadcast model.
We discuss this at the end of Section \ref{section proposed algorithm}.

We prove necessity of condition SC via a state machine based approach  \cite{Attiya:2004:DCF:983102, DOLEV198214, Fischer1986} similar to the proofs of necessity in \cite{Khan2019ExactBC, LewisByzantineDirected}.
However, care must be taken to ensure that we do not break the local broadcast property.
The formal proof is given in Appendix \ref{section proof necessity} -- since it is somewhat difficult.
In Section \ref{section necessity}, we provide an intuitive sketch of the proof.
The sufficiency is proved constructively.
In Section \ref{section sufficiency}, we present an algorithm to achieve consensus when the communication graph satisfies condition SC, accompanied by a proof of correctness.

	\section{Necessity} \label{section necessity}
		In this section, we show that condition SC (Definition \ref{definition condition SC}) is necessary for consensus.
We first present another property, condition NC.
This condition is equivalent to condition SC, as stated in the next theorem, and we will use it to prove necessity in Theorem \ref{theorem necessity}.
Recall that we use $A \adjacent{F}_G B$ to denote $\abs[0]{ \inneighborhood{A}{B}{G} } > f$ (subscript $G$ is dropped when clear from context).

\begin{definition} \label{definition condition NC}
	A directed graph $G$ satisfies \underline{\emph{condition NC with parameter $F$}} if for every partition $(L, C, R)$ of $V$, where both $L - F$ and $R - F$ are non-empty, we have that either ${R \cup C} \adjacent{F}_G L - F$ or ${L \cup C} \adjacent{F}_G R - F$.
	We say that $G$ satisfies \underline{\emph{condition NC}}, if $G$ satisfies condition NC with parameter $F$ for every set $F \subseteq V$ of cardinality at most $f$.
\end{definition}

\begin{theorem} \label{theorem equivalence}
	A directed graph $G$ satisfies condition NC if and only if $G$ satisfies condition SC.
\end{theorem}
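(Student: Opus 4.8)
The plan is to prove the stronger per-parameter statement: for every fixed $F\subseteq V$ with $\abs{F}\le f$, condition NC with parameter $F$ holds if and only if condition SC with parameter $F$ holds. Since both NC and SC quantify over all such $F$, this yields the theorem. I would prove each direction by contraposition, and the bridge between the neighbourhood formulation ($\adjacent{F}$) and the path formulation ($\propagate{F}$) will be the fan form of Menger's theorem.

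For the direction SC $\Rightarrow$ NC, I would assume NC with parameter $F$ fails, witnessed by a partition $(L,C,R)$ with $L-F$ and $R-F$ nonempty, $\abs{\inneighborhood{R\cup C}{L-F}{G}}\le f$ and $\abs{\inneighborhood{L\cup C}{R-F}{G}}\le f$, and produce a $2$-partition violating SC, namely $(A,B)=(L,\,C\cup R)$. The key observation is that a small in-neighbourhood acts as a path separator: every path that excludes $F$ and runs from $R\cup C$ to a node $u\in L-F$ must, at the step where it first enters $L$, pass through a node of $\inneighborhood{R\cup C}{L-F}{G}$, because the node just before entry lies in $R\cup C$ and is an in-neighbour of the entry node, while the entry node cannot be in $F$ (it is internal and the path excludes $F$), hence lies in $L-F$. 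Since node-disjoint $(R\cup C)u$-paths meet only at $u$, they use distinct such separator nodes, so there are at most $\abs{\inneighborhood{R\cup C}{L-F}{G}}\le f$ of them; thus $R\cup C\notpropagate{F}L-F$. The symmetric argument together with monotonicity of $\propagate{F}$ under shrinking the source set ($L\subseteq L\cup C$) and enlarging the target set ($R-F\subseteq (C\cup R)-F$) gives $L\notpropagate{F}(C\cup R)-F$, so both SC alternatives fail.

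For the direction NC $\Rightarrow$ SC, I would assume SC with parameter $F$ fails, witnessed by a partition $(A,B)$ with $A\notpropagate{F}B-F$ and $B\notpropagate{F}A-F$, and build a $3$-partition violating NC. From $A\notpropagate{F}B-F$ there is $u\in B-F$ admitting fewer than $f+1$ node-disjoint $Au$-paths that exclude $F$. To apply Menger, I first encode the ``excludes $F$'' constraint into an auxiliary digraph $G'$ obtained from $G$ by deleting the vertices of $F\setminus A$ (which can never be source, internal, or terminal of such a path) and deleting all in-edges to vertices of $F\cap A$ (forcing them to appear only as sources); the $F$-excluding $Au$-paths in $G$ then correspond exactly to $Au$-paths in $G'$. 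Menger's theorem yields a separating set $S$ with $\abs{S}\le f$. Taking the sink side $R^{*}$, the set of vertices that can still reach $u$ after deleting $S$ from $G'$, one checks $R^{*}\subseteq B-F$ (it avoids $A$ because $S$ is a separator, and avoids $F$ by construction of $G'$) and that every in-neighbour of $R^{*}$ in $G$ lying outside $R^{*}$ is either in $S$ or in $F\cap B$. I then absorb the stray fault-vertices by setting $R:=R^{*}\cup(F\cap B)$, so that $R-F=R^{*}$ and $\inneighborhood{V\setminus R}{R-F}{G}\subseteq S$, giving $\abs{\inneighborhood{V\setminus R}{R-F}{G}}\le f$. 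Running the symmetric construction on the failure of $B\propagate{F}A-F$ at some $v\in A-F$ produces $L:=L^{*}\cup(F\cap A)$ with $L-F=L^{*}\ni v$ and the analogous bound. Since $R\subseteq B$ and $L\subseteq A$ are disjoint, setting $C:=V\setminus(L\cup R)$ gives a valid $3$-partition for which both NC alternatives fail.

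The main obstacle I expect is precisely the bookkeeping around the set $F$ in the Menger step: because the local broadcast model permits a faulty node to be the source of a propagation path (and hence to count as an in-neighbour measured by $\adjacent{F}$) but not an internal relay, the naive min-cut does not by itself bound the in-neighbourhood that NC measures. The remedy, cleanly separating the two roles of $F$-vertices via the auxiliary graph $G'$ and then reassigning each side's faulty vertices into $L$ or $R$ so that they are not counted as external in-neighbours, is the delicate part; getting the direction of each edge deletion right, and verifying that the sink sides $R^{*}$ and $L^{*}$ land inside $B-F$ and $A-F$ respectively, is where the argument must be checked most carefully.
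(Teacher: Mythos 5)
Your proof is correct, and while it rests on the same two pillars as the paper's (reducing to a per-parameter statement for each fixed $F$, and using Menger's theorem to translate between the path condition $\propagate{F}$ and the neighbourhood condition $\adjacent{F}$), it is organized differently. For SC $\Rightarrow$ NC the paper (Lemma \ref{lemma condition SC implies condition NC}) also argues by contraposition with $(A,B)=(L,\,C\cup R)$, but invokes Menger to show that the small in-neighbourhood is an $st$-cut; your ``first entry into $L$'' counting argument replaces this with the elementary direction of Menger and is, if anything, cleaner. The real divergence is in NC $\Rightarrow$ SC: the paper factors it through two asymmetric lemmas --- Lemma \ref{lemma B does not propagate A} converts the failure $B\notpropagate{F}A-F$ into a refined partition $(A',B')$ with $B'\notadjacent{F}A'-F$, and Lemma \ref{lemma A propagates B} then shows that under NC a small in-neighbourhood forces $A'\propagate{F}B'-F$, with monotonicity finishing --- whereas you take the direct contrapositive and build the NC-violating triple $(L,C,R)$ symmetrically from two Menger cuts, one per failed propagation, absorbing $F\cap A$ into $L$ and $F\cap B$ into $R$ exactly as the paper does inside its lemmas (its $G_{-F}$ plus auxiliary source $s$ plays the role of your $G'$). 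Your symmetric version avoids the detour through the intermediate ``small neighbourhood implies propagation'' lemma and makes the role of the two sink sides transparent; the paper's factorization has the mild advantage that Lemma \ref{lemma A propagates B} is a reusable positive statement. The one point to make explicit in a final write-up is the exact fan form of Menger you rely on: the separator $S$ may contain vertices of the source set but not the terminal $u$, which is what guarantees $u\in R^{*}=R-F$ and keeps $S$ disjoint from $R$.
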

\begin{proof}
	From Lemma \ref{lemma simple equivalence} in Appendix \ref{section proof equivalence}
\end{proof}

The following theorem states that condition NC is necessary for consensus.
Since condition NC and condition SC are equivalent, as a corollary we get the necessity part of Theorem \ref{theorem tight}.

\begin{theorem} \label{theorem necessity}
	If there exists a Byzantine consensus algorithm under the local broadcast model on a directed graph $G$ tolerating at most $f$ Byzantine faulty nodes, then $G$ satisfies condition NC.
\end{theorem}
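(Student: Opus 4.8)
The plan is to argue by contradiction using the standard state-machine (scenario) method, adapted to respect the local broadcast constraint. Suppose a consensus algorithm exists but $G$ violates condition NC. Then there is a set $F$ with $\abs{F}\le f$ and a partition $(L,C,R)$ of $V$, with $L-F$ and $R-F$ both non-empty, for which \emph{both} clauses of Definition \ref{definition condition NC} fail; that is, $\abs{\inneighborhood{R\cup C}{L-F}{G}}\le f$ and $\abs{\inneighborhood{L\cup C}{R-F}{G}}\le f$. Write $S_L=\inneighborhood{R\cup C}{L-F}{G}$ and $S_R=\inneighborhood{L\cup C}{R-F}{G}$; each has size at most $f$. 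The key structural fact I would extract is that \emph{every} edge entering $L-F$ from the ``right'' block $R\cup C$ passes through $S_L$, and symmetrically for $R-F$ and $S_R$. Thus making $S_L$ faulty completely severs the non-faulty influence of $R\cup C$ on $L-F$, and dually for $S_R$.

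Next I would set up the two ``pure'' worlds. In world $W_0$ the actual faulty set is $S_L$ and every non-faulty node has input $0$; since $S_L\subseteq R\cup C$ is disjoint from $L-F$, all of $L-F$ is non-faulty, so by Validity and Agreement every node of $L-F$ decides $0$. Symmetrically, in world $W_1$ the faulty set is $S_R$, every non-faulty node has input $1$, and every node of $R-F$ decides $1$. The goal is to interpolate between $W_0$ and $W_1$ by a chain of executions $E_0=W_0,E_1,\dots,E_m=W_1$ in which consecutive executions are indistinguishable to some non-faulty node, while the inputs sweep from all-$0$ to all-$1$ across the middle block $C$ and the ``lying boundary'' slides from $S_L$ to $S_R$. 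Because each cut has size at most $f$, at every step the set of nodes whose local views change can be absorbed by a fault set of size at most $f$, so each $E_j$ is a legitimate execution with at most $f$ faults and $E_j$, $E_{j+1}$ agree at a common non-faulty node. Chaining the forced equalities of decisions from the decision $0$ in $E_0$ to the decision $1$ in $E_m$ then contradicts Agreement, completing the argument.

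The hard part will be carrying out this interpolation \emph{without violating local broadcast}. In the point-to-point model the boundary nodes may send tailored, mutually inconsistent messages on their several outgoing links, which makes the splicing routine; here each boundary node must transmit a \emph{single} message received identically by all of its out-neighbors, so one cannot independently fool a node of $L-F$ and a node of $R-F$ that happen to share an incoming simulated neighbor. I would therefore have to design each $E_j$ so that every faulty node's broadcast is simultaneously consistent with the views demanded on both sides of the current boundary, and verify that no single execution in the chain ever needs more than $f$ liars. The one-directional nature of the crossing edges compounds this: an edge from $R\cup C$ into $L-F$ carries no reverse channel, and the middle set $C$---whose nodes are counted in \emph{both} $S_L$ and $S_R$---must be split carefully between the two sides at each step. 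Managing these consistency constraints precisely is exactly where the delicate, formal state-machine tiling (deferred to the appendix) becomes necessary; the intuitive sketch above hides this bookkeeping.
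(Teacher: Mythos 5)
Your setup is right and your two extreme worlds are exactly the paper's $E_1$ and $E_2$: with $S_L=\inneighborhood{R\cup C}{L-F}{}$ faulty and all inputs $0$, every node of $L-F$ decides $0$; with $S_R=\inneighborhood{L\cup C}{R-F}{}$ faulty and all inputs $1$, every node of $R-F$ decides $1$. But the proof stops where the theorem begins. The entire content of the necessity argument is the construction of the execution(s) linking these two worlds, and you explicitly defer that: you assert that a chain $E_0,\dots,E_m$ exists in which each step uses at most $f$ liars whose broadcasts are ``simultaneously consistent with the views demanded on both sides,'' and then acknowledge that verifying this is ``exactly where the delicate, formal state-machine tiling becomes necessary.'' That is the gap, not bookkeeping around it. Note also that a sliding-boundary sweep is risky here: $S_L$, $S_R$, and $F$ are three sets each of size up to $f$ that need not overlap, so an intermediate execution that tries to keep parts of two boundaries lying at once can exceed the fault budget; nothing in the proposal rules this out.

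The paper needs no long chain. It uses exactly one more execution, $E_3$, whose faulty set is $F\cap(L\cup R)\subseteq F$ (hence trivially of size at most $f$), with inputs $0$ on $L$ and $1$ on $R\cup C$; in $E_3$ the faulty nodes in $F\cap L$ replay their behavior from $E_1$ and those in $F\cap R$ replay $E_2$. Nodes of $L-F$ then cannot distinguish $E_3$ from $E_1$ (all their in-neighbors in $R\cup C$ are faulty in $E_1$ and behave identically), so they decide $0$; symmetrically $R-F$ decides $1$, contradicting agreement. The local-broadcast consistency you worry about is discharged not by per-step consistency checks but by a single covering network $\mathcal{G}$: each node of $G$ gets up to three copies (nodes of $C$ outside both in-neighborhoods get copies $u_0,u_1,u_1'$), edges are wired so that every copy of $v$ has exactly one copy of each in-neighbor $u$, every copy runs $\mathcal{A}_u$, and all three executions are read off as slices of one honest execution $\mathcal{E}$ on $\mathcal{G}$. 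Since each copy broadcasts a single message per round to all its out-neighbors in $\mathcal{G}$, every simulated faulty node automatically obeys local broadcast in every execution it appears in. If you want to complete your proof, you must either carry out this covering-network construction (or an equivalent one) or exhibit your intermediate executions explicitly with their fault sets and verify the $\le f$ bound and broadcast consistency at each link; as written, the claim that the interpolation exists is an assertion of the theorem, not a proof of it.
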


The proof uses a state machine based approach \cite{Attiya:2004:DCF:983102, DOLEV198214, Fischer1986}, along the lines of the proofs of necessity in \cite{Khan2019ExactBC, LewisByzantineDirected}.
However, care must be taken to ensure that all the communication is via local broadcast, which makes it somewhat difficult.
The formal proof of this theorem is given in Appendix \ref{section proof necessity}.
Here we give a sketch of the proof.

Suppose for the sake of contradiction that there exists an algorithm that
solves Byzantine consensus under the local broadcast model on a graph $G$
which does not satisfy condition NC.
Then there exists a set $F$ of cardinality at most $f$ and a partition
$\del{ L, C, R }$ of $G$, where both $L - F$ and $R - F$ are non-empty, such that $R \cup C \notadjacent{F} L - F$ and $L \cup C \notadjacent{F} R - F$.
We create three executions $E_1$, $E_2$, and $E_3$ using the algorithm as follows.

\begin{enumerate}[$E_1$:]
	\item
	$\inneighborhood{R \cup C}{L - F}{}$, the in-neighborhood of $L - F$ in $R \cup C$, is the faulty set.
	We partition the faulty set into two parts: the in-neighborhood of $L - F$ in $R - F$ and $C$, $\inneighborhood{(R-F) \cup C}{L - F}{}$, and the in-neighborhood of $L - F$ in $R \cap F$, $\inneighborhood{R \cap F}{L - F}{}$.
	Both these sets have different behavior.
	In each round, a faulty node in $\inneighborhood{(R-F) \cup C}{L - F}{}$ broadcasts the same messages as the corresponding non-faulty node in $E_3$, while a faulty node in $\inneighborhood{R \cap F}{L - F}{}$ broadcasts the same messages as the corresponding non-faulty node in $E_2$.
	All non-faulty nodes have input $0$.
	So by validity, all non-faulty nodes decide on output $0$ in finite time.

	\item
	$\inneighborhood{L \cup C}{R - F}{}$, the in-neighborhood of $R - F$ in $L \cup C$, is the faulty set.
	We partition the faulty set into two parts: the in-neighborhood of $R - F$ in $L - F$ and $C$, $\inneighborhood{(L-F) \cup C}{R - F}{}$, and the in-neighborhood of $R - F$ in $L \cap F$, $\inneighborhood{L \cap F}{R - F}{}$.
	Both these sets have different behavior.
	In each round, a faulty node in $\inneighborhood{(L-F) \cup C}{R - F}{}$ broadcasts the same messages as the corresponding non-faulty node in $E_3$, while a faulty node in $\inneighborhood{L \cap F}{R - F}{}$ broadcasts the same messages as the corresponding non-faulty node in $E_1$.
	All non-faulty nodes have input $1$.
	So by validity, all non-faulty nodes decide on output $1$ in finite time.

	\item
	$F \cap (L \cup R)$ is the faulty set.
	We partition the faulty set into two parts: $F \cap L$ and $F \cap R$.
	Both these sets have different behavior.
	In each round, a faulty node in $F \cap L$ broadcasts the same messages as the corresponding non-faulty node in $E_1$, while a faulty node in $F \cap R$ broadcasts the same messages as the corresponding non-faulty node in $E_2$.
	All non-faulty nodes in $L$ have input $0$ and all non-faulty nodes in $R \cup C$ have input $1$.
	The output of the non-faulty nodes in this execution will be described later.
\end{enumerate}

\begin{figure}[tb]
	\centering
	\begin{tikzpicture}
    	\node[draw, circle, minimum size=4cm, label={left:$L$}] at (0, 0) (L) {};
    	\node[draw, circle, minimum size=4cm, label={right:$R$}] at (5, 0) (R) {};
    	\node[draw, ellipse, minimum width=4cm, minimum height=3cm, red, label={center:$F$}] at (2.5, 0) (F) {};
    	\node[draw, ellipse, minimum width=2cm, minimum height=1cm, blue, label={below right:$\inneighborhood{R}{L}{}$}, rotate=45] at (4, 1) (NL) {};
    	\node[draw, ellipse, minimum width=2cm, minimum height=1cm, green, label={above left:$\inneighborhood{R}{L}{}$}, rotate=45] at (1, -1) (NR) {};

	    \draw[-{Latex[width=3mm,length=3mm]}, out=90, in=60] (NL) to (L);
	    \draw[-{Latex[width=3mm,length=3mm]}, out=270, in=240] (NR) to (R);
	\end{tikzpicture}
	\caption{Necessity proof: the simple case when $C = \emptyset$. The nodes in blue, green, and red depict the three faulty sets for executions $E_1$, $E_2$, and $E_3$ respectively.}
	\label{figure necessity sketch}
\end{figure}
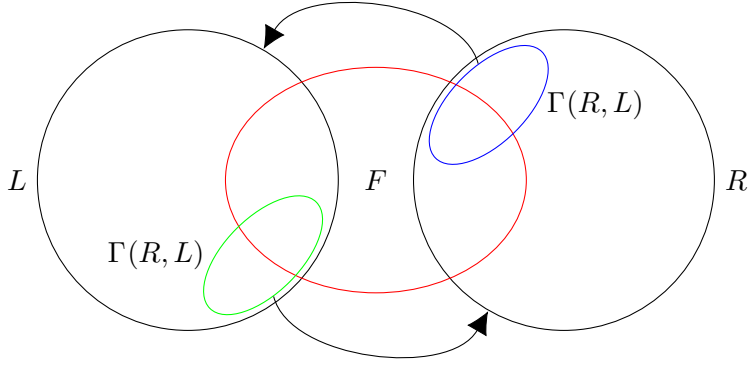

The formal proof in Appendix \ref{section proof necessity} makes the above description of the three executions precise.
For the simple case when $C = \emptyset$, Figure \ref{figure necessity sketch} depicts the faulty nodes in the three executions.

To see the output of non-faulty nodes in $E_3$, note that non-faulty nodes in $L - F$ receive the same messages in each round, from their in-neighbors, as the corresponding nodes in $E_1$.
They also have the same input, $0$, so they decide on the same output in both the executions, i.e., $0$.
Similarly, the non-faulty nodes in $R - F$ receive the same messages in each round, from their in-neighbors, as the corresponding nodes in $E_2$.
They also have the same input, $1$, so they decide on the same output in both the executions, i.e., $1$.
Since both $L - F$ and $R - F$ are non-empty, this violates agreement, a contradiction.
 
	\section{Sufficiency} \label{section sufficiency}
		Next, we constructively prove the sufficiency portion of Theorem \ref{theorem tight}.
Together with the necessity result shown in Theorem \ref{theorem necessity}, we have that this result is tight.
We present a Byzantine consensus algorithm in Section \ref{section proposed algorithm}.
The algorithm utilizes some non-trivial graph properties implied by condition SC.
We show these in Section \ref{section graph properties}.
In Section \ref{section proof of correctness}, we give a proof of correctness of the algorithm, assuming that the graph $G$ satisfies condition SC. 		\subsection{Proposed Algorithm} \label{section proposed algorithm}
			Algorithm \ref{algorithm consensus} presents pseudocode for the proposed algorithm.
Each node $v\in V$ maintains a local state variable named $\gamma_v$.
At the beginning of the algorithm, this is initialized to equal node $v$'s binary input.
$\gamma_v$ is modified during the execution of the algorithm.
The output of each node $v$ is the value of its state variable $\gamma_v$ at the end of the algorithm.
The algorithm execution is viewed as being divided into \emph{phases}, each
phase consisting of one iteration of the {\tt for} loop in the pseudocode.
Each phase has an associated distinct subset $F \subseteq V$ of cardinality at most $f$.

The algorithm draws inspiration from the strategy used in the algorithms in \cite{Khan2019ExactBC} and \cite{LewisByzantineDirected}.
However, the details are significantly different, which we discuss at the end of this section.
In particular, each phase (i.e., each iteration of the {\tt for} loop) in the algorithm considers a candidate faulty set $F$, and the nodes attempt to reach consensus by the end of that phase assuming that $F$ is indeed the set of all the faulty nodes.
Let $F^*$ denote the set of faulty nodes that are actually faulty in a given execution of the algorithm.
Then each node updates its state variable in such a manner so that
\begin{enumerate}[(i)]
	\item when $F = F^*$, the state variable at all non-faulty nodes is identical at the end of this phase, i.e., all non-faulty nodes reach consensus in this phase (Lemma \ref{lemma agreement}), and
	\item when $F \ne F^*$, the value of the state variable at a non-faulty node at the end of the phase equals the state of some non-faulty node at the start of the phase (Lemma \ref{lemma validity}).
\end{enumerate}
The first objective ensures that nodes reach agreement.
The second inductively implies that this agreement, once achieved, is not lost, as well as that nodes decide on an input of some non-faulty node, i.e., validity.
Termination follows from the fact that there are only finite number of executions.

We now discuss the steps performed by each node $v$ in a given phase.
Some of the steps of the algorithm are based on those in \cite{Khan2019ExactBC} and are explained here again for completeness.

\begin{algorithm}[t]
\SetAlgoLined
\SetKwFor{For}{For}{do}{end}
Each node $v$ has a binary input value in $\set{ 0, 1 }$ and maintains a binary state $\gamma_v \in \set{ 0, 1 }$.\\
{\em Initialization:} $\gamma_v$ := input value of node $v$.\\
\For{each $F \subseteq V$ such that $\abs{F} \le f$}
{
	\smallskip
	\setlength{\linewidth}{\columnwidth-\widthof{Step (a):}}
	\begin{enumerate}[Step (a):]
		\item
		Perform directed graph decomposition on $G - F$.
		Let $S$ be the unique source component (Lemma \ref{lemma source component unique}).
		
		\item
		If $v \in S \cup \inneighborhood{F}{S}{}$, then flood value $\gamma_v$
		(the steps taken to achieve flooding are described in Appendix \ref{section flooding}).
		
		\item
		If $v \in S$, for each node $u \in S \cup \inneighborhood{F}{S}{}$, identify a single $uv$-path $P_{uv}$ that excludes $F$.
		Let,
		\begin{flalign*}
		Z_v &:= \set{ u \in S \cup \inneighborhood{F}{S}{} \mid \text{$v$ received value $0$ from $u$ along $P_{uv}$ in step (b)} },&\\
		N_v &:= S \cup \inneighborhood{F}{S}{} - Z_v.&
		\end{flalign*}
		
		\item
		If both $Z_v - F$ and $N_v - F$ are non-empty, then\\
		\smallskip
		\hspace*{0.3in} If $Z_v \propagate{F} N_v - F$,\\
		\hspace*{0.6in} then set $A_v := Z_v$ and $B_v := N_v - F$,\\
		\hspace*{0.6in} else set $A_v := N_v$ and $B_v := Z_v - F$.\\
		\smallskip
		\hspace*{0.3in} If $v \in B_v$ and $v$ received value $\delta\in\{0,1\}$, in step (b), identically\\
		\hspace*{0.3in} along any $f+1$ node-disjoint $A_v v$-paths that exclude $F$, then set\\
		\hspace*{0.3in} $\gamma_v := \delta$.
		
		\item
		If $v \in S$, then flood value $\gamma_v$.
		
		\item
		If $v \in V - S - F$ and $v$ received value $\delta \in \set{0, 1}$, in step (e), identically along any $f+1$ node-disjoint $Sv$-paths that exclude $F$, then set $\gamma_v := \delta$.
	\end{enumerate}
}
Output $\gamma_v$.

\caption{
	Proposed algorithm for Byzantine consensus under the local broadcast model in directed graphs: Steps performed by node $v$ are shown here.
}
\label{algorithm consensus}
\end{algorithm} 
\begin{enumerate}[\null]
	\setlength{\itemsep}{\smallskipamount}
	\item \underline{\em Graph decomposition:}
	In step (a) of a given phase,
	each node performs a directed graph decomposition on $G - F$.
	Recall that we assume that each node knows the topology of graph $G$, so each node can perform this step locally.
	Since $G$ satisfies condition SC with parameter $F$, it turns out that $G - F$ has a unique source component $S$ (Lemma \ref{lemma source component unique}).
	The rest of the steps in the phase are aimed at nodes in $S$ attempting to agree on some common value, and then propagating that value to the rest of the graph.

	\item \underline{\em Flooding:}
	In Step (b), nodes in $S$ and their in-neighbors in $F$, $\inneighborhood{F}{S}{}$, flood the value of their $\gamma$ state variables. The ``flooding'' procedure used here
	is analogous to that in \cite{Khan2019ExactBC} for undirected graphs. Without much modification, it can be adapted for directed graphs. This procedure is presented in Appendix \ref{section flooding} for completeness.

	Consider a node $u \in S \cup \inneighborhood{F}{S}{}$.
	In the flooding procedure, $u$ attempts to transmit its state variable to every node $v$ such that there exists a $uv$-path in $G$.
	At the end of the flooding procedure, for each such node $v$ and a $uv$-path $P_{uv}$, $v$ will have received a binary value $b$ along $P_{uv}$.
	If $P_{uv}$ is fault-free, then $b = \gamma_u$, the state variable of $u$.
	However, if $P_{uv}$ is not fault-free, then an intermediary faulty node may tamper with the messages, so it is possible that $b \ne \gamma_u$.

	\item \underline{\em Consensus in the source component $S$:}
	Next, using steps (c) and (d), the nodes in the unique source component try to reach consensus. Based on the values flooded by each node, set $S \cup \inneighborhood{F}{S}{}$ can be partitioned into nodes that flooded $0$, namely set $Z$, and nodes that flooded $1$, namely set $N$.
	In step (c), node $v$ attempts to estimate the sets $Z$ and $N$ using the values received on paths excluding $F$, i.e., ignoring paths that have nodes from $F$ as intermediaries.
	When $F \ne F^*$, faulty nodes may tamper the messages and $v$ may incorrectly categorize some nodes.
	However, when $F = F^*$, all non-faulty nodes correctly determine $Z$ and $N$.

	In particular, in step (c), each node $v \in S$ partitions $S \cup \inneighborhood{F}{S}{}$ as follows.
	Recall that a path is said to exclude set $F$ if none of its {\em internal} nodes are in $F$.
	For each $u \in S \cup \inneighborhood{F}{S}{}$, node $v$ chooses an arbitrary $uv$-path $P_{uv}$ that excludes $F$.
	It can be shown (Lemma \ref{lemma paths}) that such a path always exists.
	For the purpose of step (c), node $v$ is deemed to have received its own $\gamma_v$ value along path $P_{vv}$ (containing only node $v$).
	In step (c), as shown in the pseudo-code, node $v$ partitions $S \cup \inneighborhood{F}{S}{}$ into sets $Z_v$ and $N_v$, its estimates of sets $Z$ and $N$, based
	on values received along the above paths.

	Step (d) specifies the rules for updating $\gamma_v$ value.
	$\gamma_v$ is not necessarily updated in each phase.
	If $F = F^*$, then all nodes in $S$ will have the same $\gamma_v$ value after this step (Lemma \ref{lemma agreement}). That is, in the phase in which $F=F^*$,
	the nodes in $S$ achieve consensus in step (d).

	\item \underline{\em Propagating decision to rest of the graph:}
	Since we iterate over all possible faulty sets, in one phase of the algorithm, $F$ is correctly chosen to be exactly $F^*$, the set of actual faulty nodes.
	In this phase, nodes in $S$ will reach consensus by step (d).
	In steps (e) and (f), nodes in $S$ propagate their state to the rest of the graph.
	In particular, in step (e), nodes in $S$ flood their $\gamma_v$ value, as in step (b), while step (f) specifies the rules for nodes in $V - S - F$ to update their $\gamma_v$ state variable.

	\item \underline{\em Output:}
	After all the phases (i.e., all iterations of the {\tt For} loop) are completed,
	the value of $\gamma_v$ is chosen as the output of node $v$.
\end{enumerate}

As mentioned earlier, the proposed algorithm uses the same strategy as in \cite{Khan2019ExactBC} and \cite{LewisByzantineDirected} by iterating over all possible faulty sets.
However, the steps performed in each iteration are significantly different than both.

\begin{itemize}
	\item
	The algorithm for directed graphs under point-to-point communication in \cite{LewisByzantineDirected} is not immediately adaptable to the local broadcast setting.
	One key challenge is that it requires nodes to send messages in multiple rounds in a single iteration of the main for loop.
	While non-equivocation provided by local broadcast prevents a faulty node from sending conflicting messages in one round, it does not directly stop a faulty node from lying inconsistently across rounds, even to the same neighbor.
	In our algorithm, when $F = F^*$, the faulty nodes are only allowed to flood their states once in the entire iteration, preventing them from lying inconsistently across rounds in this iteration.

	\item
	On the other hand, the algorithm for undirected graphs under local broadcast \cite{Khan2019ExactBC} does indeed require each faulty node to send a single message for each iteration of the main for loop.
	However, in directed graphs, in contrast with the undirected setting, communication may only exist in one direction between some pairs of nodes.
	Our algorithm first achieves consensus in a unique source component and then propagates this to the rest of the graph.
	The existence of a \emph{unique} source component or its ability to propagate its state to the rest of the graph is not immediately obvious.
	In the next section, we show that these and other non-trivial properties are guaranteed by condition SC, which our algorithm relies to solve consensus.
\end{itemize} 		\subsection{Graph Properties} \label{section graph properties}
			Algorithm \ref{algorithm consensus} relies on some non-trivial properties of graphs that satisfy condition SC.
In this section, we prove these properties followed by the proof of correctness of the algorithm in Section \ref{section proof of correctness}.
We first show that there is indeed a unique source component in the directed graph decomposition of $G - F$, performed in step (a) of each phase.
Suppose, for the sake of contradiction, that there are two source components $S_1$ and $S_2$ of the decomposition.
Note that, by construction, there are no edges into $S_1$ and $S_2$, except from $F$.
By appropriately selecting sets $A$ and $B$, the only paths into $A$ (resp. $B$) are via $F$ and so are limited to at most $f$, violating condition SC with parameter $F$, a contradiction.

\begin{lemma} \label{lemma source component unique}
	For any choice of set $F$ in the algorithm the directed graph decomposition of $G - F$ has a unique source component.
\end{lemma}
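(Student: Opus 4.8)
The plan is to argue by contradiction using condition SC with parameter $F$. Suppose the directed graph decomposition of $G - F$ (the subgraph induced on $V - F$) has two distinct source components $S_1$ and $S_2$; being parts of the decomposition they are non-empty, and being distinct components they are disjoint. The first step is to record the structural consequence of being a source component: in the contracted acyclic graph $\mathcal{H}$, no component has an edge into the node representing $S_i$, so in $G - F$ every edge entering $S_i$ originates inside $S_i$. Since every vertex of $G$ lies either in $V - F$ or in $F$, this upgrades to the statement that in the full graph $G$ every in-neighbor of a node of $S_i$ lies in $S_i \cup F$; equivalently $\inneighborhood{V - S_i - F}{S_i}{} = \emptyset$.

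Next I would fix the partition $(A, B) = (S_1,\ V - S_1)$, so that $S_1 \subseteq A$ and $S_2 \subseteq B$. Then $A - F = S_1$ and $B - F \supseteq S_2$ are both non-empty, so condition SC with parameter $F$ applies and forces $A \propagate{F} B - F$ or $B \propagate{F} A - F$; the goal is to contradict both. The heart of the argument is the following claim about paths that exclude $F$: any path terminating at a node $u \in S_i$ whose source lies outside $S_i$ must enter $S_i$ along an edge $(w, w')$ with $w' \in S_i$ and $w \notin S_i$, whence $w \in F$ by the structural observation, and $w$ is an internal node of the path unless it is the source. Consequently, a path into $S_i$ from outside that excludes $F$ must have its source in $\inneighborhood{F}{S_i}{}$.

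Applying the claim with $i = 1$ and $u \in S_1 = A - F$: every $Bu$-path excluding $F$ has source in $B$, hence outside $S_1$, hence among the at most $\abs{F} \le f$ nodes of $\inneighborhood{F}{S_1}{}$; since node-disjoint $Bu$-paths have distinct sources, there are at most $f < f+1$ of them, so $B \notpropagate{F} A - F$. Applying it with $i = 2$ and $u' \in S_2 \subseteq B - F$: any $Au'$-path excluding $F$ has source in $A = S_1 \subseteq V - F$, which cannot lie in $\inneighborhood{F}{S_2}{} \subseteq F$, so no such path exists and $A \notpropagate{F} B - F$. Together these contradict condition SC with parameter $F$.

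The main obstacle — and the place to be careful — is the boundary case in the path claim: because a path may \emph{exclude} $F$ and yet still have its \emph{source} in $F$, I cannot simply assert that reaching $S_i$ from the outside forces an internal passage through $F$. The clean way to handle this is to pass to the \emph{maximal suffix} of the path lying in $S_i$ and analyze the single edge entering that suffix: its tail is either an internal $F$-vertex (excluded by the hypothesis that the path excludes $F$) or the source of the path itself, which yields exactly the ``source in $\inneighborhood{F}{S_i}{}$'' conclusion and hence the at-most-$f$ bound. A secondary point to verify is that node-disjoint $Bu$-paths really do have distinct sources, so that counting admissible sources bounds the number of such paths, and that the chosen $A$ and $B$ satisfy the non-emptiness hypotheses required to invoke condition SC.
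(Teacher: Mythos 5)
Your proof is correct and follows essentially the same route as the paper's: both take the partition $(A,B) = (S_1,\, V - S_1)$, observe that each source component's only in-neighbors outside itself lie in $F$, and derive a contradiction with condition SC with parameter $F$. The only cosmetic difference is that the paper bounds the number of node-disjoint paths by invoking Menger's Theorem with $F$ as a cut set, whereas you count admissible sources of $F$-excluding paths directly (carefully handling the case where a path's source itself lies in $F$); both yield the same at-most-$f$ bound.
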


\begin{proof}
	Fix an arbitrary set $F$.
	Suppose for the sake of contradiction that $G - F$ has two source components $S_1$ and $S_2$.
	Let $C := V - S_1 - S_2$ be the rest of the nodes.
	We create a partition $(A, B)$ that violates the requirements of condition SC with parameter $F$ (Definition \ref{definition condition SC}).
	Let $A := S_1$ and $B := S_2 \cup C = V - S_1$.
	First note that both $A - F = A = S_1$ and $B - F \supseteq S_2$ are non-empty.
	
	Now, by construction, we have that $S_2$ has no incoming edges except from $F$.
	Therefore, $F$ is a cut set that separates $S_2$ from $S_1$, i.e., there are no paths from a node in $S_1$ to a node in $S_2$ in $G - F$.
	By Menger's Theorem, there are at most $\abs{F} \le f$ node-disjoint paths from $S_1 = A$ to any node in $S_2 \subseteq B - F$.
	Therefore, for any node $v \in S_2 \subseteq B - F$, there are at most $f$ node-disjoint $A v$-paths that exclude $F$, and so $A \notpropagate{F}{} B - F$.
	Similarly, by construction, we have that $S_1$ has no incoming edges except from $F$.
	Therefore, $F$ is a cut set that separates $S_1$ from $S_2 \cup C - F$.
	By Menger's Theorem, there are at most $\abs{F} \le f$ node-disjoint paths from $S_2 \cup C = B$ to any node in $S_1 = A - F$.
	Therefore, for any node $v \in S_1 = A - F$, there are at most $f$ node-disjoint $B v$-paths that exclude $F$, and so $B \notpropagate{F}{} A - F$.
	This violates condition SC, a contradiction.
\end{proof}

Next, this unique source component $S$, along with its in-neighborhood in $F$, satisfies condition SC with parameter $F$.
We follow the same approach as in proof of Lemma \ref{lemma source component unique}, assuming for the sake of contradiction that $G[ S \cup \inneighborhood{F}{S}{} ]$ does not satisfy condition SC with parameter $F$, and showing that this implies that $G$ does not satisfy condition SC with parameter $F$.

\begin{lemma} \label{lemma source component satisfies condition SC}
	For any choice of set $F$ in the algorithm, let $S$ be the unique source component of $G - F$.
	Then $G[S \cup \inneighborhood{F}{S}{}]$ satisfies condition SC with parameter $F$.
\end{lemma}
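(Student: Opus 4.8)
The plan is to argue by contradiction, following the template of Lemma~\ref{lemma source component unique}: I assume that $H := G[S \cup \inneighborhood{F}{S}{}]$ fails condition SC with parameter $F$ and manufacture from this failure a partition of the whole vertex set $V$ that witnesses the failure of condition SC with parameter $F$ in $G$, contradicting the hypothesis that $G$ satisfies condition SC. Write $W := S \cup \inneighborhood{F}{S}{}$. By assumption there is a partition $\del{A, B}$ of $W$ with both $A - F$ and $B - F$ non-empty such that $A \notpropagate{F}_H B - F$ and $B \notpropagate{F}_H A - F$. Since $\inneighborhood{F}{S}{} \subseteq F$ and $S \cap F = \emptyset$, note that both target sets $A - F$ and $B - F$ are contained in $S$, and $F \cap W = \inneighborhood{F}{S}{}$.

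The crux of the argument is a structural observation about how $S$ sits inside $G$. Because $S$ is the source component of $G - F$, the only edges of $G$ entering $S$ from outside $S$ originate in $F$, hence in $\inneighborhood{F}{S}{}$ (this is exactly the fact used in Lemma~\ref{lemma source component unique}). I would use this to prove that \emph{every} path of $G$ that excludes $F$ and terminates at a node of $S$ lies entirely within $W$, and is therefore a path of $H$. The argument traces such a path backward from its terminal: each internal node lies outside $F$ by definition of ``excludes $F$'', so an internal predecessor of a node already shown to be in $S$ must itself be in $S$ (its only alternative would be a node of $F$); propagating this backward shows all internal nodes lie in $S$, and the source, being an in-neighbor of a node of $S$, lies in $S \cup \inneighborhood{F}{S}{} = W$. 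Conversely, every path of $H$ is a path of $G$. Hence, for any terminal $u \in S$, the $F$-excluding paths into $u$ are literally the same objects in $G$ and in $H$; in particular the maximum number of node-disjoint such paths from a fixed source set is identical in the two graphs.

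With this in hand I would lift the partition by setting $A' := A$ and $B' := B \cup \del{V - W}$, a partition of $V$ with $A' - F = A - F$ and $B' - F \supseteq B - F$ both non-empty. Take the witness $u \in B - F \subseteq S$ guaranteed by $A \notpropagate{F}_H B - F$. By the structural observation, the $F$-excluding $A' u$-paths in $G$ are exactly the $F$-excluding $Au$-paths in $H$ (the source of any such path lies in $W \cap A' = A$), so fewer than $f + 1$ of them are node-disjoint, giving $A' \notpropagate{F}_G B' - F$. Symmetrically, the witness $u \in A - F \subseteq S$ from $B \notpropagate{F}_H A - F$ yields $B' \notpropagate{F}_G A' - F$, after observing that $F$-excluding $B' u$-paths in $G$ have their source in $W \cap B' = B$. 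Thus $\del{A', B'}$ violates condition SC with parameter $F$ in $G$, the desired contradiction.

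The main obstacle I anticipate is establishing the structural observation cleanly — in particular, stating and using ``$F$-excluding paths into $S$ coincide in $G$ and $H$'' correctly for the source endpoint (which may lie in $F \cap W = \inneighborhood{F}{S}{}$) as opposed to the internal nodes (which never lie in $F$), and handling the degenerate short paths. Once that correspondence of path families is pinned down, translating each non-propagation witness from $H$ to $G$ and verifying that $\del{A', B'}$ is an admissible partition are routine.
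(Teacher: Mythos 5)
Your proof is correct, and while it shares the paper's overall strategy (assume a bad partition $(A,B)$ of $W := S \cup \inneighborhood{F}{S}{}$ and lift it to a bad partition of $V$, exploiting the fact that the only edges into $S$ from outside originate in $F$), the execution differs in two genuine ways. First, the paper absorbs the leftover set $C = V - W$ into the $A$-side, forming $(A \cup C,\, B)$, whereas you absorb it into the $B$-side, forming $(A,\, B \cup C)$; both choices work precisely because of the fact you isolate — no $F$-excluding path from $C$ reaches $S$ at all, so $C$ contributes nothing to either propagation count no matter where it is placed. Second, and more substantively, the paper proves only the weaker statement that there is no edge from $C$ to $S$ and then invokes Menger's theorem twice, converting each non-propagation witness in $H_{-F}$ into a cut set $X$ with $\abs{X} \le f$ and arguing that $X$ remains a cut in $G_{-F}$. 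Your stronger structural observation — every $F$-excluding path of $G$ terminating in $S$ has all internal nodes in $S$ and source in $W$, hence is literally a path of $H$ — yields an exact correspondence between the relevant path families in $G$ and $H$, so the two witnesses transfer directly with no appeal to Menger; the backward induction you sketch is sound, since an internal node is by definition outside $F$ and an in-neighbor of a node of $S$ that avoids $F$ must itself lie in $S$, while the source, as an in-neighbor of a node of $S$, lies in $S \cup \inneighborhood{F}{S}{}$. Your route is the more elementary and transparent one; the paper's route needs only the single easily stated fact about edges from $C$ to $S$, at the cost of the min-cut machinery.
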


\begin{proof}
	Suppose for the sake of contradiction that $H := G[S \cup \inneighborhood{F}{S}{}]$ does not satisfy condition SC with parameter $F$.
	So there exists a partition $(A, B)$ of $S \cup \inneighborhood{F}{S}{}$, such that $A - F$ and $B - F$ are non-empty, and $A \notpropagate{F}_H B - F$ and $B \notpropagate{F}_H A - F$.
	Let the rest of the nodes in $G$ be denoted by $C := V - S \cup \inneighborhood{F}{S}{}$.
	Let $A' := A \cup C$.
	Then $(A', B)$ is a partition of $V$, with $A' - F$ and $B - F$ both non-empty.
	
	We first show that there is no edge from $C$ to $S$ in $G$.
	Observe that $C$ is disjoint from $S \cup \inneighborhood{F}{S}{}$.
	If a node $u \in C \cap F$ has an edge to a node in $S$, then $u \in \inneighborhood{F}{S}{}$, a contradiction since $C \cap \inneighborhood{F}{S}{} = \emptyset$.
	On the other hand, if $u \in C - F$ has an edge to a node in $S$, then the edge exists in $G - F$, which is a contradiction since $S$ is a source component in $G - F$.
	
	Now, since $A \notpropagate{F}_H B - F$, we have that for some node $v \in B - F$ there are at most $f$ node-disjoint $Av$-paths in $H_{-F}$\footnote{Recall from Section \ref{section notation} that $G_{-U}$ is the graph obtained from $G$ by removing all incoming edges to $U$ so that if $P$ is a path in $G_{-U}$, then $P$ is a path in $G$ that excludes $U$ and terminates in $V - U$, and if $P$ is a path in $G$ that excludes $U$ and terminates in $V - U$, then $P$ is a path in $G_{-U}$.}.
	By Menger's Theorem, there exists a cut set $X \not \ni v$ of cardinality at most $f$ that separates $v$ from $A - X$ in $H_{-F}$.
	Since there is no edge from $C$ to $S$ in $G$, we have that $X$ also separates $v$ from $(A - X) \cup C = A' - X$ in $G_{-F}$.
	It follows that there are at most $f$ node-disjoint $A' v$-paths that exclude $F$ in $G$, and so $A' \notpropagate{F}_G B - F$.
	
	Similarly, since $B \notpropagate{F}_H A - F$, we have that for some node $v \in A - F$ there are at most $f$ node-disjoint $Bv$-paths in $H_{-F}$.
	By Menger's Theorem, there exists a cut set $X \not \ni v$ of cardinality at most $f$ that separates $v$ from $B - X$ in $H_{-F}$.
	Since there is no edge from $C$ to $S$ in $G$, we have that $X$ also separates $v$ from $B - X$ in $G_{-F}$.
	Note that $v \in A - F \subseteq A' - F$.
	It follows that there are at most $f$ node-disjoint $B v$-paths that exclude $F$ in $G$, and so $B \notpropagate{F}_G A' - F$.
	This violates condition SC, a contradiction.
\end{proof}

We now show that the paths identified in steps (c), (d), and (e) of the algorithm do indeed exist.
The existence of paths in step (c) follows by construction of $S$.

\begin{lemma} \label{lemma paths}
	For any choice of set $F$ in the algorithm, let $S$ be the unique source component of $G - F$.
	Then, for any two nodes $u \in S \cup \inneighborhood{F}{S}{}$ and $v \in S$, there exists a $uv$-path that excludes $F$.
\end{lemma}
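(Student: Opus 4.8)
The plan is to split into two cases according to whether the source node $u$ lies in $S$ or in $\inneighborhood{F}{S}{}$; in both the essential ingredient is that $S$, being a component of the directed graph decomposition of $G - F$, induces a strongly connected subgraph on the nodes of $V - F$. First I would note that, since $S$ is a component of $G - F$, we have $S \cap F = \emptyset$, so $S$ and $\inneighborhood{F}{S}{}$ are disjoint and the two cases are exhaustive.

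For the first case, suppose $u \in S$. Since $v \in S$ as well and $S$ is strongly connected in $G - F$, there is a $uv$-path $P$ lying entirely within $S$, and hence within $G - F$. Every node of $P$ belongs to $S \subseteq V - F$, so no internal node of $P$ is in $F$; that is, $P$ excludes $F$. Indeed, any $uv$-path in $G - F$ automatically has all of its nodes in $V - F$, so the excludes-$F$ property is immediate once such a path is produced.

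For the second case, suppose $u \in \inneighborhood{F}{S}{}$, so that $u \in F$ and there is some $s \in S$ with $(u, s) \in E$. Applying the first case to $s$ and $v$ (both in $S$) yields an $sv$-path $Q$ contained in $S$ that excludes $F$; if $s = v$ we simply take $Q$ to be the trivial one-node path. Prepending the edge $(u, s)$ to $Q$ produces a $uv$-path whose internal nodes are $s$ together with the internal nodes of $Q$, all of which lie in $S$ and hence outside $F$. The source $u$ is itself in $F$, but this is permitted: the definition of a path that excludes $F$ constrains only the internal nodes, not the source or terminal. Hence this $uv$-path excludes $F$, as required.

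I do not expect a serious obstacle; the construction is short. The one point needing care is the bookkeeping around what excluding $F$ allows: in the second case the source $u$ is faulty by hypothesis, and the argument succeeds precisely because excluding $F$ forbids faulty internal nodes while tolerating a faulty source. A secondary subtlety is justifying that strong connectivity of $S$ can be realized by a path that stays inside $S$; this holds because any node on a $uv$-path with $u, v$ in the same strongly connected component is mutually reachable with $u$ and therefore lies in that component.
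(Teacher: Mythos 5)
Your proof is correct and follows essentially the same two-case argument as the paper: use strong connectivity of $S$ in $G - F$ when $u \in S$, and prepend the edge from $u$ to an out-neighbor in $S$ when $u \in \inneighborhood{F}{S}{}$, noting that excluding $F$ constrains only internal nodes. No issues.
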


\begin{proof}
	If $u \in S$, then there exists a $uv$-path in $G - F$ since $S$ is strongly connected in $G - F$ by construction.
	If $u \in \inneighborhood{F}{S}{}$, then there exists a node $w \in S$ such that $(u, w)$ is an edge in $G$.
	Also, there exists a $wv$-path $P_{wv}$ in $G - F$ since $S$ is strongly connected in $G - F$.
	So $u \operatorname{-} P_{wv}$ is a $uv$-path in $G$ that excludes $F$.
\end{proof}

The existence of paths in step (d) follows from Lemma \ref{lemma source component satisfies condition SC} and definition of condition SC (Definition \ref{definition condition SC}).

\begin{lemma} \label{lemma propagate within source}
	For any non-faulty node $v$, and any given phase with the corresponding set $F$, in step (d), if $v \in B_v$, then there exist $f+1$ node-disjoint $A_v v$-paths that exclude $F$.
\end{lemma}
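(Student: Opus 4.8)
The plan is to collapse the whole statement to the definition of $\propagate{F}$ by establishing, for both branches of the conditional in step (d), the single relation $A_v \propagate{F}_G B_v$. Once that is in hand the lemma is immediate: by definition $A_v \propagate{F}_G B_v$ means that for \emph{every} node $u \in B_v$ there are $f+1$ node-disjoint $A_v u$-paths in $G$ excluding $F$, and applying this with $u = v$ — which is legitimate precisely because the hypothesis is $v \in B_v$ — gives the desired $f+1$ node-disjoint $A_v v$-paths excluding $F$. So all the content lies in verifying $A_v \propagate{F}_G B_v$ in each branch.

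First I would dispose of the ``then'' branch, which is essentially free. There step (d) sets $A_v := Z_v$ and $B_v := N_v - F$ exactly under the tested condition $Z_v \propagate{F} N_v - F$, so $A_v \propagate{F}_G B_v$ holds by the very choice of branch and nothing further is needed.

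The ``else'' branch is the main obstacle, and it is where condition SC must enter. Here $A_v := N_v$ and $B_v := Z_v - F$, and the branch is taken exactly when $Z_v \notpropagate{F}_G N_v - F$; I must therefore produce the relation $N_v \propagate{F}_G Z_v - F$, which the algorithm never directly tests. The idea is to invoke Lemma \ref{lemma source component satisfies condition SC}, which guarantees that $H := G[S \cup \inneighborhood{F}{S}{}]$ satisfies condition SC with parameter $F$. By the definitions in step (c), $(Z_v, N_v)$ is a partition of the vertex set $S \cup \inneighborhood{F}{S}{}$ of $H$, and step (d) is reached only when both $Z_v - F$ and $N_v - F$ are non-empty; hence condition SC applied to this partition of $H$ yields $Z_v \propagate{F}_H N_v - F$ or $N_v \propagate{F}_H Z_v - F$.

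To finish I would rule out the first alternative and lift the second from $H$ to $G$, both steps resting on the elementary observation that every path of $H$ is a path of $G$, so $X \propagate{F}_H Y$ implies $X \propagate{F}_G Y$. Thus $Z_v \propagate{F}_H N_v - F$ would force $Z_v \propagate{F}_G N_v - F$, contradicting the else-branch hypothesis; hence $N_v \propagate{F}_H Z_v - F$, and lifting this to $G$ gives $N_v \propagate{F}_G Z_v - F$, i.e. $A_v \propagate{F}_G B_v$, as required. The one point I would state carefully is exactly this direction of transfer between the subgraph $H$ and $G$: propagation is monotone when passing to a supergraph, which is precisely what reconciles the test negated in $G$ with the guarantee supplied in $H$ by Lemma \ref{lemma source component satisfies condition SC}.
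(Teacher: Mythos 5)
Your proof is correct and follows essentially the same route as the paper's: the then-branch is immediate from the branch condition, and the else-branch invokes Lemma \ref{lemma source component satisfies condition SC} to apply condition SC to the partition $(Z_v, N_v)$ of $S \cup \inneighborhood{F}{S}{}$. Your explicit justification of the monotonicity of $\propagate{F}$ when lifting from the subgraph $H$ to $G$ (and when ruling out the alternative $Z_v \propagate{F}_H N_v - F$) is a detail the paper leaves implicit, but the underlying argument is the same.
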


\begin{proof}
Fix a phase of the algorithm and the corresponding set $F$.
Consider an arbitrary non-faulty node $v$ such that $v \in B_v$ in step (d).
By construction, $B_v$ is either $N_v - F$ or $Z_v - F$, both of which are non-empty.
In the first case, we have that $A_v = Z_v \propagate{F} N_v - F = B_v$.
In the second case, we have that $Z_v \notpropagate{F} N_v - F$.
By Lemma \ref{lemma source component satisfies condition SC}, we have that $G[Z_v \cup N_v] = G[S \cup \inneighborhood{F}{S}{}]$ satisfies condition SC with parameter $F$.
Therefore $A_v = N_v \propagate{F} Z_v - F = B_v$.
\end{proof}

For paths in step (e), note that, by construction of $S$, the in-neighbors of $S$ are contained entirely in $F$.
So there can only be at most $f$ paths into $S$ from $V - S$.
Since $G$ satisfies condition SC with parameter $F$, we get that $S \propagate{F} V - S - F$.

\begin{lemma} \label{lemma source component propagates}
	For any choice of set $F$ in the algorithm, let $S$ be the unique source component of $G - F$.
	Then $S \propagate{F} V - S - F$.
\end{lemma}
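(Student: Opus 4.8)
For any choice of $F$, letting $S$ be the unique source component of $G - F$, we have $S \propagate{F} V - S - F$.The plan is to prove this by contradiction, exactly mirroring the structure of the proofs of Lemma \ref{lemma source component unique} and Lemma \ref{lemma source component satisfies condition SC}: assume $S \notpropagate{F} V - S - F$, and construct a partition of $V$ that violates condition SC with parameter $F$. The natural partition is $A := S$ and $B := V - S$; then $A - F = S$ is non-empty (since $S$ is a source component, hence a non-empty part of the decomposition), and $B - F = V - S - F$ is non-empty precisely in the interesting case (if it were empty the claim $S \propagate{F} V - S - F$ holds vacuously, so I may assume $V - S - F \ne \emptyset$). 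With this partition, the failure assumption is exactly $A \notpropagate{F} B - F$, so to contradict condition SC it suffices to also establish $B \notpropagate{F} A - F$.

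The key structural fact I would invoke is the one already used repeatedly in the preceding lemmas: by construction of the graph decomposition, the source component $S$ has no incoming edges in $G - F$, i.e., every in-neighbor of a node of $S$ lies in $F$. Concretely, $\inneighborhood{V - S}{S}{G - F} = \emptyset$, so all in-edges into $S$ from outside $S$ originate in $F$. This means $F$ is a cut set separating $S = A - F$ from the rest of the graph: in $G_{-F}$ (the graph with incoming edges to $F$ removed, using the notation from Section \ref{section notation}) there are no paths from $B = V - S$ into $A - F = S$ that avoid $F$ entirely. Hence for any node $v \in S = A - F$, the number of node-disjoint $Bv$-paths that exclude $F$ is at most $|F| \le f$. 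By the definition of $\propagate{F}$ (which requires $f + 1$ such node-disjoint paths), this gives $B \notpropagate{F} A - F$ directly — I can even argue this without Menger's Theorem, since the cut is literally the set $F$ and there are no fault-free paths into $S$ at all from the outside.

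Putting the two pieces together: the assumption gives $A \notpropagate{F} B - F$, and the source-component structure gives $B \notpropagate{F} A - F$, with both $A - F$ and $B - F$ non-empty. This is exactly the negation of the conclusion of condition SC with parameter $F$ for the partition $(A, B) = (S, V - S)$, contradicting the hypothesis that $G$ satisfies condition SC. Therefore $S \propagate{F} V - S - F$, as desired.

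I do not expect any serious obstacle here; the argument is essentially a one-directional specialization of Lemma \ref{lemma source component unique}, where the ``hard'' direction (establishing $B \notpropagate{F} A - F$) is actually the \emph{easy} one because it follows immediately from the definition of a source component, and the ``easy'' direction is simply the contradiction hypothesis we are assuming. The only point requiring a moment of care is the non-emptiness bookkeeping: I must handle the degenerate case $V - S - F = \emptyset$ separately (where the claim is vacuous) so that, in the main argument, $B - F$ is genuinely non-empty and condition SC is applicable. Everything else is a direct invocation of the no-incoming-edges property of $S$ together with $|F| \le f < f + 1$.
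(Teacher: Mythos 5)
Your proposal is correct and follows essentially the same argument as the paper: take the partition $A = S$, $B = V - S$, observe that the in-neighbors of $S$ in $B$ all lie in $F$ so that $B \notpropagate{F} A - F$, and conclude from condition SC that $S \propagate{F} V - S - F$ (the paper states this directly rather than by contradiction, but the content is identical). Your explicit handling of the degenerate case $V - S - F = \emptyset$ is a small point of care the paper leaves implicit.
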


\begin{proof}
Let $A = S$ and $B = F \cup (V - S) = V - A$.
Now, since $A = S$ is the unique source component of $G - F$, we have that $\inneighborhood{B}{A}{} \subseteq F$, i.e., nodes in $F$ are the only ones in $B$ to have an edge into $A$.
So $B$ can have at most $f$ node-disjoint paths to any node in $A$.
Thus $B \notpropagate{F} A - F$.
Since $G$ satisfies condition SC, we have that $S = A \propagate{F} B - F = V - S - F$, as required.
\end{proof} 		\subsection{Proof of Correctness} \label{section proof of correctness}
			We provide a proof of correctness of Algorithm \ref{algorithm consensus} by proving Theorem \ref{theorem sufficiency} below.

\begin{theorem} \label{theorem sufficiency}
	Under the local broadcast model, Byzantine consensus tolerating at most $f$ Byzantine faulty nodes is achievable on a directed graph $G$ if $G$ satisfies condition SC.
\end{theorem}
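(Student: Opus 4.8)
The plan is to prove that Algorithm \ref{algorithm consensus} satisfies \textbf{Agreement}, \textbf{Validity}, and \textbf{Termination} when $G$ satisfies condition SC. Termination is immediate: the {\tt for} loop ranges over the finitely many subsets $F \subseteq V$ of cardinality at most $f$, and each phase consists of a bounded number of flooding rounds, so every non-faulty node halts. The remaining two properties I would reduce to two per-phase invariants that are exactly the claims forward-referenced in the algorithm discussion: an \emph{agreement} statement (Lemma \ref{lemma agreement}) asserting that in the single phase where the candidate set equals the true faulty set $F^*$ all non-faulty nodes end the phase with a common state, and a \emph{validity} statement (Lemma \ref{lemma validity}) asserting that in \emph{every} phase a non-faulty node's end-of-phase state equals some non-faulty node's start-of-phase state. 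Given these, I would conclude by induction over phases, so the bulk of the work is establishing the two lemmas, and I expect the agreement lemma to be the main obstacle.

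I would prove the validity lemma first, since it is the cleaner of the two and relies on a single pigeonhole idea. The only steps that modify a state variable are steps (d) and (f), and both have the form ``adopt $\delta$ only if $\delta$ is received identically along $f+1$ node-disjoint paths that exclude $F$.'' Such paths exist by Lemma \ref{lemma propagate within source} (for step (d), via $A_v \propagate{F} B_v$) and Lemma \ref{lemma source component propagates} (for step (f), via $S \propagate{F} V - S - F$). The key observation is that these $f+1$ paths are internally node-disjoint and have distinct sources, while there are at most $f$ actually faulty nodes; hence at least one of the paths contains \emph{no} faulty node at all, neither as source nor as an internal node. Along that fully non-faulty path the delivered value is precisely the value flooded by its non-faulty source, i.e.\ that source's start-of-phase state; since all $f+1$ paths carry the same $\delta$, we get $\delta$ equal to some non-faulty node's start-of-phase state. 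Nodes that do not update trivially retain their own value. Inducting over phases then shows every non-faulty state is always the input of some non-faulty node, giving output validity.

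The agreement lemma concerns the phase with $F = F^*$, where every path excluding $F$ is fault-free. Here I would argue in four movements. First, flooding in step (b) delivers to each $v \in S$ exactly the value each source $u \in S \cup \inneighborhood{F}{S}{}$ broadcast; crucially, \emph{local broadcast} forces each faulty in-neighbor in $\inneighborhood{F}{S}{}$ to commit to a \emph{single} flooded value, so all non-faulty nodes of $S$ compute identical sets $Z$ and $N$ in step (c), and hence identical $A$ and $B$ in step (d), because which side propagates is a fixed property of $G[S \cup \inneighborhood{F}{S}{}]$. Second, by Lemma \ref{lemma source component satisfies condition SC} this induced subgraph satisfies condition SC with parameter $F$, so for the partition $(Z, N)$ at least one of $Z \propagate{F} N - F$ or $N \propagate{F} Z - F$ holds, making step (d) well-defined. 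Third, since every node of $A$ flooded the same value $\delta_A$ (namely $0$ if $A = Z$ and $1$ if $A = N$) and all $A_v v$-paths excluding $F$ are fault-free, each $v \in B - F$ receives $\delta_A$ on all $f+1$ paths and sets $\gamma_v := \delta_A$; together with the nodes of $A - F$, which already hold $\delta_A$, this drives all of $S$ to the single value $\delta_A$ (the boundary cases where $Z - F$ or $N - F$ is empty already have consensus in $S$, so step (d) is vacuous there). Fourth, steps (e)--(f) with Lemma \ref{lemma source component propagates} ($S \propagate{F} V - S - F$) propagate this common value along fault-free paths to every node of $V - S - F$, yielding a common state across all non-faulty nodes.

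Finally I would combine the two lemmas: in the phase $F = F^*$ all non-faulty nodes reach a common value $\delta^\ast$ by the agreement lemma, and in every subsequent phase the validity lemma shows any update replaces a state only by some non-faulty node's current value, which is $\delta^\ast$, so agreement once achieved is never destroyed; hence all non-faulty nodes output $\delta^\ast$, and by the validity induction $\delta^\ast$ is the input of a non-faulty node. The hard part will be the consistency argument inside the agreement lemma, namely verifying that all non-faulty nodes of $S$ independently compute the \emph{same} $Z$, $N$, $A$, and $B$ even though $\inneighborhood{F}{S}{}$ consists of faulty nodes; this is exactly where local broadcast is indispensable (it removes equivocation and pins each faulty source to one flooded value), and it is interwoven with the step (d) case analysis, so care is needed to ensure the local broadcast property is never implicitly broken.
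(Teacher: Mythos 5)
Your proposal follows essentially the same route as the paper: the same decomposition into a per-phase agreement lemma (for the phase with $F = F^*$) and a per-phase validity lemma (for every phase), the same pigeonhole argument that among $f+1$ node-disjoint paths excluding $F$ at least one is entirely non-faulty, and the same use of Lemmas \ref{lemma source component unique}, \ref{lemma source component satisfies condition SC}, \ref{lemma propagate within source}, and \ref{lemma source component propagates} to make steps (a), (d), (e), and (f) well-defined and to show that local broadcast pins each node in $\inneighborhood{F}{S}{}$ to a single flooded value so that all non-faulty nodes of $S$ compute identical $Z$, $N$, $A$, $B$.

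One sentence in your validity argument is literally false as written and needs a small repair. For step (f) you assert that the value delivered along a fully non-faulty path ``is precisely the value flooded by its non-faulty source, i.e.\ that source's start-of-phase state.'' The source $w \in S$ floods in step (e), so what it floods is its state \emph{after} step (d), which may differ from its start-of-phase state if $w$ updated in step (d). The conclusion survives, but only by chaining through your step (d) case: if $w$ updated, then $\gamma_w$ at step (e) equals some non-faulty node's start-of-phase state by the step (d) analysis, and otherwise it equals $\gamma_w$ at the start of the phase. The paper makes exactly this two-level composition explicit (Case 2, sub-cases i and ii, in the proof of Lemma \ref{lemma validity}); you should do the same rather than identifying the step (e) flooded value with a start-of-phase state directly. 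With that fix the argument is complete and matches the paper's.
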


Let us assume that $G$ satisfies condition SC.
For convenience, we will refer to the state variable $\gamma_v$ as the ``state of node $v$''.
We remind the reader that the algorithm proceeds in phases and each phase has an associated unique set $F$, with $\abs{F} \le f$.
We use $F^*$ to denote the actual set of faulty nodes in a given execution.

As mentioned earlier, the algorithm attempts to balance two objectives.
We formalize them in the two lemmas below.

\begin{lemma} \label{lemma agreement}
	Consider the unique phase of Algorithm \ref{algorithm consensus} where the corresponding set $F = F^*$.
	At the end of this phase, ever pair of non-faulty nodes $u, v \in V$ have identical state, i.e., $\gamma_u = \gamma_v$.
\end{lemma}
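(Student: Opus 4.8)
The plan is to follow the single phase with $F = F^*$ through its six steps and show that all non-faulty nodes end holding a common value. The first thing I would record is the consequence of $F = F^*$: the faulty nodes are exactly $F$, so $S$ and $V - S - F$ consist entirely of non-faulty nodes, and any path that excludes $F$ is automatically fault-free. The unique source component $S$ exists by Lemma \ref{lemma source component unique}, and the propagation paths invoked below exist by Lemmas \ref{lemma paths}, \ref{lemma propagate within source}, and \ref{lemma source component propagates}.

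The crux, and the step I expect to be the main obstacle, is to show that in step (b) every node $u \in S \cup \inneighborhood{F}{S}{}$ commits to a single broadcast value $b_u \in \{0,1\}$, so that all non-faulty nodes in $S$ compute an identical partition of $S \cup \inneighborhood{F}{S}{}$. For non-faulty $u$ this is immediate: $b_u = \gamma_u$, and by the flooding guarantee any fault-free $uv$-path delivers $\gamma_u$. For a faulty source $u \in \inneighborhood{F}{S}{} \subseteq F$ this is exactly where local broadcast is essential: because each such node floods only once in the entire phase and a single transmission is received identically by all out-neighbors (non-equivocation), $u$ cannot send conflicting values, so there is still a well-defined $b_u$ that every non-faulty node receives along any fault-free path. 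Setting $Z := \{u : b_u = 0\}$ and $N := \{u : b_u = 1\}$ and using Lemma \ref{lemma paths} (which supplies, for each $v \in S$, an $F$-excluding and hence fault-free $uv$-path), I would conclude $Z_v = Z$ and $N_v = N$ for every non-faulty $v \in S$. Thus all non-faulty nodes in $S$ agree on the partition, and since they all know $G$, they also agree on whether $Z \propagate{F} N - F$ holds and hence on the sets $A$ and $B$ formed in step (d); note $B$ is $N - F$ or $Z - F$, so $B \subseteq S$.

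Next I would show that after step (d) all non-faulty nodes in $S$ hold a common value $\sigma$. If $Z - F$ or $N - F$ is empty, then every non-faulty node of $S$, being outside $F$, lies on the non-empty side and therefore already flooded, and still holds, the same value. In the remaining case both sides are non-empty; here $A$ is \emph{monochromatic} — every node of $A = Z$ flooded $0$, or every node of $A = N$ flooded $1$ — so let $\delta_A$ be this common value. Any non-faulty $v \in S$ lies in $A$ or in $B$. If $v \in A \cap S$, then $v$ does not update in step (d) and its flooded value satisfies $\gamma_v = b_v = \delta_A$. If $v \in B$, then by Lemma \ref{lemma propagate within source} there are $f+1$ node-disjoint, fault-free $A v$-paths, each originating at a node of $A$ (so flooding $\delta_A$) and relayed correctly by its non-faulty internal nodes, hence $v$ receives $\delta_A$ identically along all of them and sets $\gamma_v := \delta_A$. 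Either way $\gamma_v = \sigma := \delta_A$.

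Finally I would propagate $\sigma$ to the rest of the graph through steps (e) and (f). In step (e) every node of $S$ is non-faulty and floods $\sigma$. For $v \in V - S - F$, Lemma \ref{lemma source component propagates} gives $S \propagate{F} V - S - F$, i.e.\ $f+1$ node-disjoint $F$-excluding and hence fault-free $Sv$-paths; each begins at a node of $S$ that flooded $\sigma$ and is relayed correctly by its non-faulty internal nodes, so $v$ receives $\sigma$ identically along all of them and sets $\gamma_v := \sigma$ in step (f). Since every non-faulty node lies in $S \cup (V - S - F) = V - F$, all of them hold $\sigma$ at the end of the phase, so $\gamma_u = \gamma_v$ for every pair of non-faulty nodes, as required.
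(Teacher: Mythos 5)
Your proof is correct and follows essentially the same route as the paper's: establish that every non-faulty $v \in S$ computes the true partition $Z_v = Z$, $N_v = N$ via fault-free paths, handle the empty-side case trivially, use the propagation guarantee of Lemma \ref{lemma propagate within source} to give all of $S$ the common value flooded by the monochromatic set $A$, and then push that value to $V - S - F$ via steps (e) and (f) and Lemma \ref{lemma source component propagates}. Your explicit emphasis on non-equivocation forcing each faulty source in $\inneighborhood{F}{S}{}$ to commit to a single flooded value is the same observation the paper makes when noting that a faulty node lies in at most one of $Z$ and $N$.
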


\begin{lemma} \label{lemma validity}
	For a non-faulty node $v$, its state $\gamma_v$ at the end of any given phase equals the state of some non-faulty node at the start of that phase.
\end{lemma}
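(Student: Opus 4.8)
The plan is to track exactly where a non-faulty node's state $\gamma_v$ can change during a phase and to argue that each such change overwrites $\gamma_v$ with the start-of-phase state of some non-faulty node. Fix a phase with candidate set $F$, and recall that $F^*$ denotes the actual faulty set, with $\abs{F^*} \le f$. Inspecting the pseudocode, a node's state is modified only in step (d) (for nodes in $S$) and in step (f) (for nodes in $V - S - F$); steps (a)--(c) and (e) only perform decomposition, partitioning, and flooding. Nodes in $F$ are never updated, since $S \cap F = \emptyset$ (as $S$ is a component of $G - F$) and step (f) explicitly excludes $F$; and any node that fails the update precondition keeps its state. For any such unchanged non-faulty node, its end-of-phase state is trivially its own (non-faulty) start-of-phase state, so only the two update rules need real work.

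The crux is a combinatorial observation about both update rules. Whenever a non-faulty node $v$ updates, it does so because it received a common value $\delta$ identically along a collection of $f+1$ node-disjoint paths that exclude $F$ (such a collection is guaranteed to exist by Lemma \ref{lemma propagate within source} for step (d) and Lemma \ref{lemma source component propagates} for step (f)). I would call a path \emph{clean} if its source and all its internal nodes are non-faulty. Since these $f+1$ paths share only the terminal $v$, each node of $F^*$ lies on at most one of them; as $\abs{F^*} \le f$, at least one of the $f+1$ paths must be clean. Along a clean path, the source $u$ is non-faulty and the path is fault-free, so by the correctness of the flooding procedure (the value delivered on a fault-free path equals the source's flooded state) the value $v$ receives along it is precisely $u$'s state at the time of that flood. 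Hence the updated value $\delta$ equals the flooded state of some non-faulty node.

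I would then assemble the argument in two layers. First (Claim~1): after step (d), every non-faulty node's state equals the start-of-phase state of some non-faulty node. Indeed, the step-(b) flood carries start-of-phase states (nothing has changed yet), so the clean-path argument applied to the step-(d) update forces any updated $\gamma_v$ to equal some non-faulty node's start-of-phase state, while nodes that do not update retain their own start-of-phase state. Second (Claim~2): after step (f) the same property holds. Here step (e) floods the \emph{post-step-(d)} states, which by Claim~1 are already start-of-phase states of non-faulty nodes; applying the clean-path argument to the step-(f) update then shows any updated $\gamma_v$ equals such a value, and nodes in $S$ or those that do not update retain states already validated by Claim~1. Since every non-faulty node lands in one of these cases, the lemma follows.

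The step I expect to be the main obstacle is the two-layer bookkeeping forced by the fact that $F$ need not equal $F^*$. The paths in the update rule exclude only the \emph{candidate} set $F$, so they may still carry genuinely faulty nodes from $F^* \setminus F$; consequently the clean path must be located using node-disjointness together with $\abs{F^*} \le f$, not via the path-excludes-$F$ property. Moreover, the value flooded in step (e) is not a start-of-phase value but a post-step-(d) value, so the order matters: validity must be established through step (d) (Claim~1) before it can be invoked to handle step (f) (Claim~2). Once this ordering and the clean-path count are in place, the flooding correctness statement and Lemmas \ref{lemma propagate within source} and \ref{lemma source component propagates} supply the remaining pieces.
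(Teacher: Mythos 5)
Your proposal is correct and follows essentially the same route as the paper: both hinge on the pigeonhole observation that among $f+1$ node-disjoint paths at least one is fault-free with a non-faulty source, and both handle step (f) by first establishing the property through step (d) and then chaining it through the step-(e) flood. Your two-claim layering is just a repackaging of the paper's case analysis ($v\in S$ versus $v\in V-S-F$, with the sub-cases on whether the source $w$ updated).
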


The correctness of Algorithm \ref{algorithm consensus} follows from these two lemmas, as shown next.

\begin{proof}[Proof of Theorem \ref{theorem sufficiency}]
	To prove the correctness of Algorithm \ref{algorithm consensus}, we have to prove the three properties of Agreement, Validity, and Termination, as specified in Section \ref{section notation}.
	\begin{enumerate}[\null]
		\item $\underline{\emph{Termination:}}$
		The algorithm satisfies the termination property because there are a finite number of phases in the algorithm, and each of them completes in finite time.
		
		\item $\underline{\emph{Agreement:}}$
		The total number of faulty nodes is bounded by $f$.
		Therefore, in any execution, there exists at least one phase in which the set $F = F^*$.
		From Lemma \ref{lemma agreement}, all non-faulty nodes have the same state at the end of this phase.
		Lemma \ref{lemma validity} implies that the state of the non-faulty nodes will remain unchanged in the subsequent phases.
		So all non-faulty nodes will have identical outputs, satisfying the agreement property.
		
		\item $\underline{\emph{Validity:}}$
		The state of each non-faulty node is initialized to its own input at the start of the algorithm.
		So the state of each non-faulty node is its own input at the start of the first phase.
		By applying Lemma \ref{lemma validity} inductively, we have that the state of a non-faulty node always equals the \emph{input} of some non-faulty node.
		This satisfies the validity property.
	\end{enumerate}
	This completes the proof of correctness of Algorithm \ref{algorithm consensus}.
\end{proof}

The rest of the section focuses on proving Lemmas \ref{lemma agreement} and \ref{lemma validity}.
We assume that the graph $G$ satisfies condition SC, even if it is not explicitly stated.
We use $F^*$ to denote the actual faulty set.
The following observation follows from the rules used for flooding (Appendix \ref{section flooding}).

\begin{observation} \label{observation fault-free reliable}
	For any phase of Algorithm \ref{algorithm consensus}, for any two nodes $u, v \in V$ (possibly faulty), if $v$ receives value $b$ along a fault-free $uv$-path then $u$ broadcast the value $b$ to its neighbors during flooding.
\end{observation}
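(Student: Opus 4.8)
The plan is to prove the statement by induction on the length of the fault-free $uv$-path, relying on two ingredients: (i) under the local broadcast model, a message placed by a node on an outgoing edge is received identically and correctly by the recipient, and (ii) the flooding rules of Appendix \ref{section flooding} require a non-faulty node to re-broadcast a value tagged with a given path-prefix only when it genuinely received that value along that prefix. Ingredient (i) is part of the system model, and ingredient (ii) is the ``honest relay'' rule built into the flooding procedure; together they let us trace the value at $v$ back, one honest hop at a time, to whatever $u$ originally broadcast.

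For notation, I would write the fault-free $uv$-path as $P_{uv} = (u = w_0, w_1, \ldots, w_k = v)$. By the definition of a fault-free path, every internal node $w_1, \ldots, w_{k-1}$ is non-faulty, while the endpoints $u$ and $v$ may be faulty (a faulty terminal still \emph{receives} whatever arrives on its incoming edges, so the statement is well defined even when $v$ is faulty). In the flooding procedure, the value ``received along $P_{uv}$'' is the value carried in the message whose recorded path field equals $P_{uv}$; such a message reaches $v$ from its in-neighbor $w_{k-1}$.

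The base case is $k = 1$: here $P_{uv} = (u, v)$ has no internal nodes, and the value $v$ receives along it is exactly the value $u$ placed on edge $(u,v)$ during flooding, which by ingredient (i) equals the value $u$ broadcast. For the inductive step $k \ge 2$, the penultimate node $w_{k-1}$ is an internal node of $P_{uv}$ and is therefore non-faulty. Since $v$ received $b$ along $P_{uv}$, node $w_{k-1}$ must have broadcast a flooding message carrying value $b$ with path-prefix $(w_0, \ldots, w_{k-1})$, and by (i) this is precisely what $v$ received. Because $w_{k-1}$ is non-faulty, the relay rule (ii) guarantees it broadcasts such a message only if it itself received value $b$ along the shorter path $(w_0, \ldots, w_{k-1})$. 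That shorter path is a fault-free $u w_{k-1}$-path, since its internal nodes $w_1, \ldots, w_{k-2}$ are a subset of the internal nodes of $P_{uv}$ and hence non-faulty, so the induction hypothesis applies and yields that $u$ broadcast value $b$.

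The main obstacle I anticipate is purely bookkeeping: pinning down the exact message format and relay semantics of the flooding procedure precisely enough that ingredient (ii) is literally available as stated; once the appendix's rules are invoked, the argument is a routine induction. I would also flag, for the reader's intuition, that it is the local broadcast assumption that makes ingredient (i) \emph{exact} in this directed setting, so that no node other than the sender can tamper with a single edge's transmission; however, this integrity property does not by itself use non-equivocation, which is instead exploited in the agreement argument (Lemma \ref{lemma agreement}) rather than here.
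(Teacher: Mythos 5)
Your proposal is correct and matches the paper's reasoning: the paper does not spell out a proof, simply noting that the observation ``follows from the rules used for flooding,'' and your induction along the fault-free path---each non-faulty internal node relays only what it actually received, and local broadcast delivers each hop's transmission intact---is exactly the argument those rules are designed to support. The only cosmetic caveat is the bookkeeping you already flag (the appendix records the path up to the forwarding in-neighbor rather than up to $v$, and a node is deemed to receive its own value along the trivial path $P_{vv}$), neither of which affects correctness.
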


Fix a phase in the algorithm along with the corresponding set $F$.
For Lemma \ref{lemma agreement}, the correctness relies on the local broadcast property.
Suppose, as stated in the statement of the lemma, that $F = F^*$ in a given phase.
There are two cases to consider for any non-faulty node $v$.
In the first case $v \in S$.
In this case, the paths used by $v$ in step (c) of the phase exclude $F$.
So these paths are fault-free (i.e., none of their internal nodes are faulty).
Then, the properties of flooding imply that any two non-faulty nodes $u,v$ will obtain $Z_u=Z_v$ and $N_u=N_v$ in step (c).
By a similar argument, all the paths used in step (d) of this phase are also fault-free, and any two non-faulty nodes will end step (d) with an identical state.
In the second case $v \in V - S - F$, a repeat of the above argument implies that the paths used in step (f) are both fault-free and have non-faulty sources.
Since the sources are all from $S$, they have an identical state at the end of step (d).
So $v$ correctly updates its state in step (f) to match the identical state in $S$.

\begin{proof}[Proof of Lemma \ref{lemma agreement}]
	Fix a phase of the algorithm and the corresponding set $F$ such that $F = F^*$.
	We first show that all nodes in $S$ have identical state, at the end of the phase, and then consider nodes in $V - S - F^*$.
	
	Let $Z$ be the set of nodes that flooded $0$ in step (b) and let $N$ be the set of nodes that flooded $1$ in step (b).
	Note that $Z$ and $N$ may contain faulty nodes, but due to the broadcast property, a faulty node is in at most one of these sets.
	Consider any non-faulty node $v \in S$.
	Then $Z_v = Z$ and $N_v = N$, as follows.
	Let $w \in S \cup \inneighborhood{F^*}{S}{}$ be an arbitrary node that flooded $0$ (resp. $1$) in step (b), i.e., $w \in Z$ (resp. $w \in N$).
	Now the $wv$-path $P_{wv}$ identified by $v$ in step (b) excludes $F^*$ and is fault-free.
	So, by Observation \ref{observation fault-free reliable}, $v$ receives $0$ (resp. $1$) along $P_{wv}$ and correctly sets $w \in Z_v$ (resp. $w \in N_v$).
	
	Therefore, we have that for any two non-faulty nodes $u, v \in S$, $Z_u = Z_v = Z$ and $N_u = N_v = N$.
	If $Z - F^*$ (resp. $N - F^*$) is empty, then $S = N - F^*$ (resp. $S = Z - F^*$).
	So all non-faulty nodes in $S$ have identical state, which is not updated, and the claim is trivially true.
	So suppose both $Z - F^*$ and $N - F^*$ are non-empty.
	Since $Z_u = Z_v$ and $N_u = N_v$, we have that $A_u = A_v$ and $B_u = B_v$.
	Let $A := A_u$ and $B := B_u$.
	By construction $A \propagate{F^*} B$.
	Now all nodes in $A$ flooded identical value in step (b), say $\alpha$.
	If $u \in A$, then $u$'s state is $\alpha$ at the beginning of the phase and stays unchanged in step (d), i.e., $\gamma_u = \alpha$.
	If $u \in B$, then the $f+1$ node-disjoint $A u$-paths identified by $u$ in step (d) exclude $F^*$ and so are all fault-free.
	By Observation \ref{observation fault-free reliable}, it follows that $u$ receives $\alpha$ identically along these $f+1$ paths and updates $\gamma_u = \alpha$.
	Similarly, $\gamma_v = \alpha$.
	
	So we have shown that all non-faulty nodes in $S$ have identical state $\alpha$ at the end of step (d).
	Since these nodes do not update their state in the rest of the phase, so we have that all nodes in $S$ have state $\alpha$ at the end of the phase.
	Since all nodes in $S$ are non-faulty, we have that each node in $S$ floods $\alpha$ in step (e).
	Consider any non-faulty node $v \in V - S - F^*$.
	The $f+1$ node-disjoint $Sv$-paths identified by $v$ in step (f) exclude $F^*$ and so are all fault-free.
	Recall that $S$ contains only nodes in $V - F^*$, i.e., only non-faulty nodes, so the source nodes in these $f+1$ paths are also non-faulty.
	By Observation \ref{observation fault-free reliable}, it follows that $v$ receives $\alpha$ identically along these $f+1$ paths and updates $\gamma_v = \alpha$.
	So by the end of the phase, all non-faulty nodes have identical state $\alpha$, as required.
\end{proof}

Lemma \ref{lemma validity} follows from the following observation.
In steps (d) and (f), if a node $v$ updates its state, then it must have received identical value along $f+1$ node-disjoint $A_v v$-paths.
Therefore, at least one of these paths must both be fault-free and have a non-faulty source.

\begin{proof}[Proof of Lemma \ref{lemma validity}]
	Fix a phase of the algorithm and the corresponding set $F$.
	We use $\gamma_u^{\operatorname{start}}$ and $\gamma_u^{\operatorname{end}}$ to denote the state $\gamma_u$ of node $u$ at the beginning and end of the phase, respectively.
	Let $v$ be an arbitrary non-faulty node.
	If $v$ does not update its state in this phase, then $\gamma_u^{\operatorname{end}} = \gamma_u^{\operatorname{start}}$ and the claim is trivially true.
	So suppose that $v$ did update its state in this phase.
	This implies that $v \in V - F$.
	
	There are now two cases to consider
	\begin{enumerate}[\text{Case} 1:]
		\item $v \in S$.
		It follows that $v$ updated its state in step (d).
		Therefore, $v \in B_v$ and $v$ received identical values along $f+1$ node-disjoint $A_v v$-paths in step (b).
		Now, at least one of these paths 1) is fault-free, and 2) has a non-faulty source $u$.
		By Observation \ref{observation fault-free reliable}, we have that the value received by $v$ along this $uv$-path in step (b) is the value flooded by $u$ in step (b).
		So $\gamma_v^{\operatorname{end}} = \gamma_u^{\operatorname{start}}$, where $u$ is a non-faulty node.
		
		\item $v \in V - S - F$.
		It follows that $v$ updated its state in step (f).
		Therefore, $v$ received identical values along $f+1$ node-disjoint $S v$-paths in step (e).
		Now, at least one of these paths 1) is fault-free, and 2) has a non-faulty source $w \in S$.
		By Observation \ref{observation fault-free reliable}, we have that the value received by $v$ along this $wv$-path in step (e) is the value flooded by $w$ in step (e).
		Note that $w \in S$ and so the value flooded by $w$ in step (e) is the state of $w$ at the end of this phase, i.e., $w$ flooded $\gamma_w^{\operatorname{end}}$ in step (e).
		There are further two cases to consider
		\begin{enumerate}[\text{Case} i:]
			\item
			$w$ did not update its state in this iteration.
			Then $\gamma_v^{\operatorname{end}} = \gamma_w^{\operatorname{end}} = \gamma_w^{\operatorname{start}}$.
			Recall that $w$ is a non-faulty node.
			
			\item
			$w$ did update its state in this iteration.
			Then, from Case 1 above, we have that there exists a non-faulty node $u$ such that $\gamma_w^{\operatorname{end}} = \gamma_u^{\operatorname{start}}$.
			So $\gamma_v^{\operatorname{end}} = \gamma_w^{\operatorname{end}} = \gamma_u^{\operatorname{start}}$, where $u$ is a non-faulty node.
		\end{enumerate}
	\end{enumerate}
	In all cases, we have that $\gamma_v^{\operatorname{end}}$ equals $\gamma_u^{\operatorname{start}}$ for some non-faulty node $u$.
\end{proof} 
	\section{Summary} \label{section conclusion}
		In this work, we have presented a tight sufficient and necessary condition for binary valued exact Byzantine consensus in directed graphs under the local broadcast model.
We presented two versions of this condition, condition SC in Definition \ref{definition condition SC} and condition NC in Definition \ref{definition condition NC}.
The sufficiency proof in Section \ref{section sufficiency} is constructive.
However, the algorithm has exponential round complexity.
We leave finding a more efficient algorithm for future work.
The following question is also open: does there exist an efficient algorithm to check if a given directed graph satisfies condition SC? 
	\bibliographystyle{plainurl}
	\bibliography{bib}

	\appendix
	\section{Flooding Procedure} \label{section flooding}
		Any message transmitted during flooding has the form $(b, \Pi)$, where $b \in \set{ 0, 1 }$ and $\Pi$ is a path.
Flooding proceeds in synchronous rounds, with each node possibly forwarding messages received in the previous round, following the rules presented below.
Flooding will end after $n$ rounds, as should be clear from the following description.

To initiate flooding of a value $b \in \set{0, 1}$, a node $v$ transmits message $(b,\bot)$ to its out-neighbors, where $\bot$ represents an empty path.
If $v$ is faulty and does not initiate flooding, then non-faulty out-neighbors of $v$ replace the missing message with the default message of $(1,\bot)$.
Therefore, we can assume that a value is indeed flooded when required by the protocol, by each node even if it is faulty.
When node $v$ receives from an in-neighbor $u$ a message $(\beta,\Pi)$, where $\beta \in \set{0,1}$ and $\Pi$ is a path, it takes the following steps sequentially.
In the following note that $\Pi \operatorname{-} u$ denotes a path obtained by concatenating identifier $u$ to path $\Pi$.
\begin{enumerate}[(i)]
	\item If path $\Pi \operatorname{-} u$ does not exist in graph $G$, then node $v$ discards the message $(\beta,\Pi)$.
	 Recall that each node knows graph $G$, and the message $(\beta,\Pi)$ was received by node $v$ from node $u$.
	\item Else if, in the current phase, node $v$ has previously received from $u$ another message containing path $\Pi$ (i.e., a message of the form $(\beta',\Pi)$),
	then node $v$ discards the message $(\beta,\Pi)$.
	\item Else if path $\Pi$ already includes node $v$'s identifier, node $v$ discards the message $(\beta,\Pi)$.
	\item Else node $v$ is \underline{said to have received value $\beta$ along path $\Pi \operatorname{-} u$} and node $v$ forwards message $(\beta,\Pi \operatorname{-} u)$ to its out-neighbors. Recall that $v$ received message $(\beta,\Pi)$ from its in-neighbor $u$.
\end{enumerate}

Rules (i) and (ii) above are designed to prevent a faulty node from sending spurious messages.
Recall that under the local broadcast model, all out-neighbors of any node receive all its transmissions.
Thus, rule (ii) above essentially prevents a faulty node from successfully delivering mismatching messages to its out-neighbors (i.e., this prevents equivocation).
Rule (iii) ensures that flooding will terminate after $n$ rounds.

Rule (ii) above crucially also ensures the following useful property due to the local broadcast model: even if node $u$ is Byzantine faulty, but paths $P_{uv}$ and $P_{uw}$ are fault-free, then nodes $v$ and $w$ will receive identical value in the message from $u$ forwarded along paths $P_{uv}$ and $P_{uw}$, respectively.
Recall that a path is fault-free if none of the internal nodes on the path are faulty. 
	\section{Proof of Theorem \ref{theorem equivalence}} \label{section proof equivalence}
		Theorem \ref{theorem equivalence} follows from the following lemma.

\begin{lemma} \label{lemma simple equivalence}
	For any set $F$, $G$ satisfies condition NC with parameter $F$ if and only if $G$ satisfies condition SC with parameter $F$.
\end{lemma}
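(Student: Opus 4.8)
The plan is to fix the set $F$ throughout and prove the two implications separately, arguing the contrapositive in each case. The bridge between the conditions is Menger's theorem applied in $G_{-F}$: since a path that excludes $F$ and terminates outside $F$ is exactly a path in $G_{-F}$, the relation $A \propagate{F} B - F$ holds precisely when, for every $v \in B - F$, there are $f+1$ node-disjoint $Av$-paths in $G_{-F}$. By Menger this fails exactly when some $v \in B - F$ admits a vertex set $X$ with $v \notin X$ and $|X| \le f$ separating $A$ from $v$ in $G_{-F}$. Condition NC, by contrast, is phrased through the boundary sets $\inneighborhood{R \cup C}{L - F}{G}$ and $\inneighborhood{L \cup C}{R - F}{G}$, so the whole argument amounts to translating between separators and boundaries.

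For the direction SC $\Rightarrow$ NC I would assume NC fails for $F$, fixing a partition $(L, C, R)$ with $L - F, R - F$ non-empty, $|\inneighborhood{R \cup C}{L - F}{G}| \le f$, and $|\inneighborhood{L \cup C}{R - F}{G}| \le f$. Setting $A := L$ and $B := R \cup C$ yields a partition of $V$ with $A - F$ and $B - F$ non-empty. A \emph{first-crossing} argument shows each boundary set is a separator: any $F$-excluding $Bv$-path to a node $v \in L - F$ has a last vertex $w'$ before it first enters $L$, and the node $w$ it enters lies in $L - F$ (being either the terminal $v$ or an internal node, hence outside $F$), so $w' \in \inneighborhood{R \cup C}{L - F}{G}$; as $w' \ne v$, this boundary set separates $B$ from $v$. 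The elementary half of Menger (distinct paths meet a separator in distinct vertices) then bounds the number of node-disjoint paths by $f$, giving $B \notpropagate{F} A - F$. The symmetric argument with $\inneighborhood{L \cup C}{R - F}{G}$ and a node $v \in R - F$ gives $A \notpropagate{F} B - F$, so SC fails.

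For the converse, NC $\Rightarrow$ SC, I would assume SC fails and fix $(A, B)$ with $A \notpropagate{F} B - F$ and $B \notpropagate{F} A - F$. Menger now supplies separators: there are $v_B \in B - F$ and $X_B$ with $|X_B| \le f$ separating $A$ from $v_B$ in $G_{-F}$, and $v_A \in A - F$ and $X_A$ with $|X_A| \le f$ separating $B$ from $v_A$. The key construction is to let $W_A$ be the set of vertices that can reach $v_A$ in $G_{-F} - X_A$, and $W_B$ those that can reach $v_B$ in $G_{-F} - X_B$. Since $X_A$ separates $B$ from $v_A$, no vertex of $B$ lies in $W_A$, so $W_A \subseteq A$; symmetrically $W_B \subseteq B$, so $W_A$ and $W_B$ are disjoint. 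I then set $L := W_A$, $R := W_B$, $C := V - W_A - W_B$, with $v_A \in L - F$ and $v_B \in R - F$ so both are non-empty. Closure of reachability shows $\inneighborhood{V - W_A}{W_A}{G_{-F}} \subseteq X_A$; restricting to targets in $W_A - F$, which retain all their in-edges in $G_{-F}$, transfers this back to $G$ and gives $|\inneighborhood{R \cup C}{L - F}{G}| \le |X_A| \le f$. The symmetric statement yields $|\inneighborhood{L \cup C}{R - F}{G}| \le |X_B| \le f$, so NC fails.

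The main obstacle is this second direction: producing a \emph{single} three-way partition that simultaneously certifies both boundary bounds of NC from two independently chosen separators. The reachability sets $W_A, W_B$ are what make this work, since their automatic disjointness ($W_A \subseteq A$, $W_B \subseteq B$) is exactly what lets them serve as $L$ and $R$ with $C$ absorbing the remainder, whereas the first direction needs only the easy half of Menger. A secondary point requiring care throughout is the bookkeeping between $G$ and $G_{-F}$: separators and reachability are natural in $G_{-F}$, but NC is stated in $G$, and the two agree on in-edges precisely because the relevant targets ($W_A - F$ and $W_B - F$) avoid $F$ and hence keep all their incoming edges in $G_{-F}$.
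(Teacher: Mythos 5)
Your proof is correct, and both directions rest on the same underlying tool as the paper's (Menger's theorem applied in $G_{-F}$ after attaching a super-source to the would-be propagating side). The SC~$\Rightarrow$~NC direction is essentially identical to the paper's Lemma~\ref{lemma condition SC implies condition NC}: you verify that $\inneighborhood{R \cup C}{L-F}{G}$ and $\inneighborhood{L \cup C}{R-F}{G}$ are separators via a first-crossing argument, where the paper phrases the same fact as these boundary sets being $st$-cuts. The NC~$\Rightarrow$~SC direction is organized genuinely differently. The paper argues directly: from $B \notpropagate{F} A - F$ it first extracts a refined partition $(A', B')$ with $B' \notadjacent{F} A' - F$ (Lemma~\ref{lemma B does not propagate A}), then shows that under NC a small boundary forces propagation, $A' \propagate{F} B' - F$ (Lemma~\ref{lemma A propagates B}, itself a proof by contradiction building a tripartition), and finally transfers the conclusion back to $(A,B)$ by the monotonicity $A' \subseteq A$, $B \subseteq B'$. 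You instead prove the contrapositive in one shot: when both propagations fail, the two Menger separators $X_A, X_B$ yield reachability sets $W_A \subseteq A$ and $W_B \subseteq B$ that are automatically disjoint, and $(W_A,\, V - W_A - W_B,\, W_B)$ simultaneously certifies both boundary bounds of an NC violation. Your construction is more symmetric and avoids both the nested contradiction and the monotonicity step; the paper's route isolates the reusable intermediate fact that $B \notadjacent{F} A - F$ together with NC implies $A \propagate{F} B - F$, though that fact is not in fact used elsewhere. The one place that demands care in your version --- that the bound $\inneighborhood{V - W_A}{W_A}{G_{-F}} \subseteq X_A$ transfers from $G_{-F}$ back to $G$ because the targets in $W_A - F$ retain all their incoming edges --- is handled correctly.
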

\begin{proof}
	Directly from Lemmas \ref{lemma condition NC implies condition SC} and \ref{lemma condition SC implies condition NC} below.
\end{proof}

\begin{lemma} \label{lemma A propagates B}
	Suppose $G$ satisfies condition NC with parameter $F$.
	For any partition $(A, B)$ of $V(G)$, with both $A - F$ and $B - F$ non-empty, if $B \notadjacent{F} A - F$, then $A \propagate{F} B - F$.
\end{lemma}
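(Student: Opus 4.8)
The plan is to argue by contradiction: assume $A \notpropagate{F} B - F$ and manufacture a partition of $V$ that violates condition NC with parameter $F$ (Definition \ref{definition condition NC}). Since $A \notpropagate{F} B - F$, there is a node $v \in B - F$ admitting at most $f$ node-disjoint $Av$-paths that exclude $F$; equivalently (using $v \in V - F$ and the correspondence between paths excluding $F$ that terminate outside $F$ and paths in $G_{-F}$), at most $f$ node-disjoint $Av$-paths in $G_{-F}$. First I would invoke Menger's Theorem in $G_{-F}$, exactly as in the proof of Lemma \ref{lemma source component satisfies condition SC}, to obtain a cut set $X$ with $\abs{X} \le f$ and $v \notin X$ that separates $A - X$ from $v$, i.e., in $(G_{-F}) - X$ no node of $A - X$ has a path to $v$.

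Next I would let $T$ be the set of nodes that still have a path to $v$ in $(G_{-F}) - X$ (so $v \in T$ and $T \cap X = \emptyset$). The crucial structural observation is that $T \subseteq B$: no node of $A - X$ reaches $v$ by choice of $X$, and $T$ avoids $X$, so $T \cap A = \emptyset$. In particular $v \in T - F$, so $T - F$ is non-empty. I would then take the partition $(L, C, R) := (A,\ B - T,\ T)$; this is a genuine partition of $V$ precisely because $T \subseteq B$, and both $L - F = A - F$ and $R - F \ni v$ are non-empty, so condition NC with parameter $F$ applies to it.

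The heart of the argument is to show that \emph{both} alternatives of condition NC fail for this partition. For the first, note $L \cup C = A \cup (B - T) = V - T$, so $\inneighborhood{L \cup C}{R - F}{} = \inneighborhood{V - T}{T - F}{}$; I would show every such in-neighbor lies in $X$, because an in-neighbor $w \notin X$ of some $r \in T - F$ would yield $w \to r \rightsquigarrow v$ inside $(G_{-F}) - X$ and hence $w \in T$, a contradiction. Thus this set has size at most $\abs{X} \le f$, giving $L \cup C \notadjacent{F} R - F$. For the second, observe $R \cup C = T \cup (B - T) = B$ and $L - F = A - F$, so $\inneighborhood{R \cup C}{L - F}{} = \inneighborhood{B}{A - F}{}$, which has size at most $f$ by the hypothesis $B \notadjacent{F} A - F$, giving $R \cup C \notadjacent{F} L - F$. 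Since both alternatives fail, the partition $(L, C, R)$ contradicts condition NC with parameter $F$, completing the proof.

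The main obstacle is choosing the partition so that each of the two in-neighborhood bounds is discharged by exactly one of the two available resources. The tempting choice of absorbing the Menger cut $X$ into the middle set $C$ does not work: it leaves stray non-faulty nodes of $B$ (those that cannot reach $v$) inside $L$, whose in-neighbors are controlled neither by the cut nor by the hypothesis. Keeping $L = A$ and $R$ equal to the reach-set of $v$, while dumping the entire remainder of $B$ (including the $B$-part of $X$) into $C$, is what makes $R \cup C$ exactly $B$ and $L - F$ exactly $A - F$, so that the hypothesis applies verbatim on one side while the cut bounds the in-neighbors of $T$ on the other. The one point needing care is the reach-set claim $\inneighborhood{V - T}{T - F}{} \subseteq X$, which quietly relies on each terminal $r$ lying outside $F$ so that its incoming edge from $w$ survives in $G_{-F}$.
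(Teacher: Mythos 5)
Your proof is correct and follows essentially the same route as the paper's: both argue by contradiction, extract a Menger cut $X$ of size at most $f$ separating $A$ from some $v \in B - F$ in $G_{-F}$, and build the partition $(L, C, R) = (A,\, B - T,\, T)$ so that the hypothesis $B \notadjacent{F} A - F$ kills one alternative of condition NC while the cut kills the other. The only cosmetic difference is that the paper introduces an auxiliary super-source $s$ with edges into $A$ before applying Menger, whereas you define $T$ directly as the reach set of $v$ in $(G_{-F}) - X$; the resulting set $T$ and its key property (all in-neighbors of $T - F$ outside $T$ lie in $X$) are the same.
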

\begin{proof}
	Consider an arbitrary set $F$.
	Suppose $G$ satisfies condition NC with parameter $F$ and consider a partition $(A, B)$ with $A - F$ and $B - F$ non-empty.
	Suppose, for the sake of contradiction that $B \notadjacent{F}_G A - F$ and $A \notpropagate{F}_G B - F$.
	Obtain $G'$ from $G_{-F}$ by adding a node $s$ with no incoming edges but an outgoing edge to each node in $A$.
	Now $A \notpropagate{F}_G B - F$ so that there exists a node $t \in B - F$ such that there are at most $f$ node-disjoint $At$-paths in $G$ that exclude $F$.
	Recall that this implies at most $f$ node-disjoint $At$-paths in $G_{-F}$.
	Therefore, there are at most $f$ node-disjoint $st$-paths in $G'$.
	By Menger's Theorem, there exists a cut set $X$, of size at most $f$, that partitions $G' - X$ into $(S, T)$ with $s \in S$, $t \in T$ and no edge from $S$ to $T$.
	Observe that $T \subseteq B$, since otherwise we have an edge from $s$ to a node in $T$.
	
	We now create a partition $(L, C, R)$ that violates the requirements of condition NC, achieving the desired contradiction.
	Let $L = A$, $C = B - T$, and $R = B \cap T = T$.
	Note that $L - F = A - F$ is non-empty and $t \in T - F = R - F$ and so $R - F$ is non-empty as well.
	Now $R \cup C = B \notadjacent{F}_G A - F = L - F$.
	We show that $L \cup C \notadjacent{F}_G R - F$, completing the contradiction.
	By the property of the $st$-cut $X$, $T$ can only have in-neighbors in $X$, since there is no edge from $S$ to $T$.
	Now, the in-neighbors of $R - F$ in $G$ are the same as the in-neighbors of $R - F$ in $G'$ since no edge was added or removed from any node in $R - F = T - F$.
	Therefore, $T - F$ can only have in-neighbors in $X$ and $T$ in the graph $G$.
	So the in-neighbors of $T - F$ in $L \cup C = V - T$ are contained entirely in $X$ and we have that $\abs[1]{ \inneighborhood{L \cup C}{T - F}{G} } \le \abs{X} \le f$.
	That is $L \cup C \notadjacent{F}_G T - F = R - F$, as required.
\end{proof}

\begin{lemma} \label{lemma B does not propagate A}
	Suppose $G$ satisfies condition NC with parameter $F$.
	For any partition $(A, B)$ of $V(G)$, with both $A - F$ and $B - F$ non-empty, if $B \notpropagate{F} A - F$, then there exists a partition $(A', B')$ of $V(G)$ such that
	\begin{itemize}
		\item $A' - F$ and $B' - F$ are both non-empty,
		\item $A' \subseteq A$ and $B \subseteq B'$, and
		\item $B' \notadjacent{F} A' - F$.
	\end{itemize}
\end{lemma}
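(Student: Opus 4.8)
The plan is to mirror the Menger-theorem argument of Lemma \ref{lemma A propagates B}, but run it in the ``dual'' direction: rather than certifying propagation, I will extract from the \emph{failure} of propagation a small vertex separator and read the desired partition off of it. First I would unpack the hypothesis $B \notpropagate{F} A - F$: by definition there is a node $v \in A - F$ admitting at most $f$ node-disjoint $Bv$-paths that exclude $F$. Using the correspondence between paths that exclude $F$ and terminate in $V - F$ and paths in $G_{-F}$ (the footnoted property), and noting $v \in A - F \subseteq V - F$, these are exactly the node-disjoint $Bv$-paths in $G_{-F}$, so the reformulation is lossless.

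Next I would build an auxiliary graph $G'$ from $G_{-F}$ by adjoining a super-source $s$ with an outgoing edge to every node of $B$ and no incoming edges. Since $s$ is adjacent only to $B$ and $v \notin B$, the node-disjoint $Bv$-paths in $G_{-F}$ are in one-to-one correspondence (with matching maximum counts) with the internally vertex-disjoint $sv$-paths in $G'$; hence there are at most $f$ of the latter. As $s$ and $v$ are non-adjacent, the vertex version of Menger's Theorem yields a cut $X$ with $\abs{X} \le f$ and $s, v \notin X$, splitting $G' - X$ into $(S, T)$ with $s \in S$, $v \in T$, and no edge from $S$ to $T$.

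I would then take $A' := T$ and $B' := V(G) - T$. The key structural observation is $T \subseteq A$: every node of $B - X$ is a direct out-neighbor of $s$ and therefore lies in $S$, so $B \cap T = \emptyset$ and $T \subseteq V(G) - B = A$. This immediately gives $A' \subseteq A$ and $B \subseteq B'$, and it makes $(A', B')$ a genuine partition of $V(G)$. Non-emptiness is clear, since $v \in T - F = A' - F$ and $B' - F \supseteq B - F$, the latter non-empty by hypothesis.

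The final and, I expect, most delicate step is verifying $B' \notadjacent{F} A' - F$, i.e. $\abs{\inneighborhood{B'}{T - F}{G}} \le f$. Here I must push the ``no $S$-to-$T$ edge'' property of $G'$ back into $G$. For any $t \in T - F$, every in-edge $(w, t)$ of $G$ survives in $G_{-F}$ (we deleted only in-edges of $F$, and $t \notin F$), hence is an edge of $G'$; the absence of $S$-to-$T$ edges then forces $w \in T \cup X$. Consequently any in-neighbor of $T - F$ that lies in $B' = V(G) - T$ must lie in $X$, yielding the bound $\abs{X} \le f$. The main obstacle is precisely this translation: one must ensure that the edge deletions defining $G_{-F}$ do not conceal any in-edge of the sink side, which is exactly why restricting to $T - F$ (rather than all of $T$) and invoking $t \notin F$ is essential. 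I note in passing that condition NC is never actually used; the conclusion follows from Menger's Theorem alone, with the NC hypothesis present only to situate the lemma within the equivalence proof.
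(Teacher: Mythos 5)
Your proposal is correct and follows essentially the same route as the paper: the same auxiliary graph $G'$ built from $G_{-F}$ with a super-source $s$ into $B$, the same application of Menger's Theorem to extract a cut $X$ with sides $(S,T)$, and the same partition (your $B' = V(G) - T$ is exactly the paper's $B' = (S-s)\cup X$), with the same care in arguing that in-edges of $T - F$ are preserved in $G_{-F}$. Your closing observation is also accurate — the paper's proof of this lemma likewise never invokes condition NC; the hypothesis is carried only for uniformity with the surrounding equivalence argument.
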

\begin{proof}
	Consider an arbitrary set $F$.
	Suppose $G$ satisfies condition NC with parameter $F$ and a partition $(A, B)$ with $A - F$ and $B - F$ non-empty.
	Suppose that $B \notpropagate{F}_G A - F$.
	Obtain $G'$ from $G_{-F}$ by adding a node $s$ with no incoming edges but an outgoing edge to each node in $B$.
	Since $B \notpropagate{F}_G A - F$, there exists a node $t \in A - F$ such that there are at most $f$ node-disjoint $st$-paths in $G'$.
	By Menger's Theorem, there exists a cut set $X$, of size at most $f$, that partitions $G' - X$ into $(S, T)$ with $s \in S$, $t \in T$ and no edge from $S$ to $T$.
	Observe that $T \subseteq A$, since otherwise we have an edge from $s$ to a node in $T$.
	
	We now create a partition $(A', B')$ of $V(G)$.
	Let $B' = (S - s) \cup X \supseteq B$ (since $B = V - A \subseteq V - T = (S - s) \cup X$) and $A' = T \subseteq A$.
	Observe that $V = T \cup X \cup S - s = A' \cup B'$ and $A' \cap B' = \emptyset$, so that $(A', B')$ is a partition of $V$.
	Since $B - F$ is non-empty, so $B' - F$ is also non-empty.
	$A'$ is non-empty since $t \in T - F = A' - F$.
	By the property of the $st$-cut $X$, $T$ can only have in-neighbors in $X$, since there is no edge from $S$ to $T$.
	Now $A' - F = T - F$ had no edges removed or added in $G'$ so its in-neighbors are the same in $G$ and $G'$.
	Therefore, $T - F$ can only have in-neighbors in $X$ and $T$ in the graph $G$.
	So the in-neighbors of $T - F$ in $B' = V - T$ are contained entirely in $X$ and we have that $\abs[1]{ \inneighborhood{B'}{T - F}{G} } \le \abs{X} \le f$.
	That is $B' \notadjacent{F}_G T - F = A' - F$, as required.
\end{proof}

\begin{lemma} \label{lemma condition NC implies condition SC}
	For any set $F$, if $G$ satisfies condition NC with parameter $F$, then $G$ satisfies condition SC with parameter $F$.
\end{lemma}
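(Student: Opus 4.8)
The plan is to prove the statement directly by contradiction, leaning entirely on the two preceding lemmas. I would fix a set $F$ and assume $G$ satisfies condition NC with parameter $F$. Let $(A,B)$ be an arbitrary partition of $V$ with both $A - F$ and $B - F$ non-empty; the goal is to show that either $A \propagate{F} B - F$ or $B \propagate{F} A - F$. Suppose, for contradiction, that \emph{both} fail, so that $A \notpropagate{F} B - F$ and $B \notpropagate{F} A - F$.

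Starting from $B \notpropagate{F} A - F$, I would invoke Lemma \ref{lemma B does not propagate A} to obtain a refined partition $(A', B')$ of $V$ satisfying $A' \subseteq A$, $B \subseteq B'$, both $A' - F$ and $B' - F$ non-empty, and crucially $B' \notadjacent{F} A' - F$. This last inequality is exactly the hypothesis needed to apply Lemma \ref{lemma A propagates B} to the partition $(A', B')$, which immediately yields $A' \propagate{F} B' - F$.

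It then remains to transfer this propagation back to the original partition. Since $B \subseteq B'$ we have $B - F \subseteq B' - F$, so $A' \propagate{F} B' - F$ already guarantees, for every node $u \in B - F$, the existence of $f+1$ node-disjoint $A' u$-paths that exclude $F$. Because $A' \subseteq A$, each such path is also an $A u$-path excluding $F$, and node-disjointness is preserved verbatim. Hence $A \propagate{F} B - F$, contradicting our assumption. Therefore at least one of $A \propagate{F} B - F$ or $B \propagate{F} A - F$ holds for every admissible partition, which is precisely condition SC with parameter $F$.

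The genuine mathematical work lives inside Lemmas \ref{lemma A propagates B} and \ref{lemma B does not propagate A} (the Menger-cut constructions), so the only obstacle at this level is the bookkeeping in the final step: I must be careful that the monotonicity of $\propagate{F}$ — enlarging the source set from $A'$ to $A$ and shrinking the terminal set from $B' - F$ to $B - F$ — indeed preserves the $f+1$ node-disjoint excluding-$F$ paths. This is immediate once one observes that relabelling the source endpoint of a path from a vertex of $A'$ to the same vertex viewed inside $A$ changes neither the path nor its disjointness from the others, but it should be stated explicitly rather than assumed.
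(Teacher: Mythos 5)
Your proof is correct and follows essentially the same route as the paper: apply Lemma \ref{lemma B does not propagate A} to get the refined partition $(A',B')$, feed its conclusion into Lemma \ref{lemma A propagates B}, and then use the monotonicity of $\propagate{F}$ under enlarging the source set and shrinking the target set. The only cosmetic difference is that you phrase it as a contradiction where the paper argues directly ("if $B \propagate{F} A-F$ we are done, else\dots"), and you spell out the monotonicity step that the paper leaves implicit.
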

\begin{proof}
	Consider an arbitrary set $F$, a graph $G$ that satisfies condition NC with parameter $F$.
	Let $(A, B)$ be any partition such that $A - F$ and $B - F$ are both non-empty.
	If $B \propagate{F}_{G} A - F$, then we are done.
	So suppose $B \notpropagate{F}_{G} A - F$.
	Then, by Lemma \ref{lemma B does not propagate A}, there exists a partition $(A', B')$ as specified in the statement of Lemma \ref{lemma B does not propagate A}.
	Since $B' \notadjacent{F}_G A' - F$, by Lemma \ref{lemma A propagates B} we have that $A' \propagate{F}_G B' - F$.
	Since $A' \subseteq A$ and $B \subseteq B'$, it follows that $A \propagate{F}_G B - F$, as required.
\end{proof}

\begin{lemma} \label{lemma condition SC implies condition NC}
	For any set $F$, if $G$ satisfies condition SC with parameter $F$, then $G$ satisfies condition NC with parameter $F$.
\end{lemma}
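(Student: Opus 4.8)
The plan is to prove the contrapositive: assuming $G$ fails condition NC with parameter $F$, I will exhibit a partition witnessing that $G$ fails condition SC with parameter $F$. So suppose there is a partition $(L, C, R)$ of $V$ with $L - F$ and $R - F$ both non-empty such that $R \cup C \notadjacent{F} L - F$ and $L \cup C \notadjacent{F} R - F$; equivalently, $\abs{\inneighborhood{R \cup C}{L-F}{G}} \le f$ and $\abs{\inneighborhood{L \cup C}{R-F}{G}} \le f$. I would then set $A := L$ and $B := R \cup C$, obtaining a partition $(A, B)$ of $V$ with $A - F = L - F$ non-empty and $B - F \supseteq R - F$ non-empty. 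It remains to establish $A \notpropagate{F} B - F$ and $B \notpropagate{F} A - F$, which together contradict condition SC with parameter $F$.

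The heart of the argument is that each small in-neighborhood supplied by the failure of NC serves directly as a vertex cut, so Menger's theorem is not even needed here. For $B \notpropagate{F} A - F$, I would show that every $Bv$-path excluding $F$, for any $v \in A - F = L - F$, passes through a node of $\inneighborhood{R \cup C}{L-F}{G}$: tracing such a path from its source in $B$ to its terminal $v \in L$, let $w$ be the last node lying in $B = R \cup C$; its successor lies in $A = L$ and, being either an internal node (hence outside $F$) or $v$ itself, lies in $L - F$, so $w \in \inneighborhood{R \cup C}{L-F}{G}$. Since node-disjoint $Bv$-paths share only $v$, and $w \ne v$ as $w \in B$ while $v \in A$, distinct paths meet this cut in distinct nodes, bounding the number of node-disjoint $Bv$-paths by $\abs{\inneighborhood{R \cup C}{L-F}{G}} \le f$. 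Hence no $v \in A - F$ admits $f+1$ such paths, giving $B \notpropagate{F} A - F$.

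For $A \notpropagate{F} B - F$ it suffices to produce a single node of $B - F$ with at most $f$ node-disjoint $Av$-paths excluding $F$, and I would take $v \in R - F$ (non-empty). The analogous tracing argument now uses $\inneighborhood{L \cup C}{R-F}{G}$ as the cut: following an $Av$-path from its source in $A = L$ to $v \in R$, the last node $w$ not in $R$ lies in $L \cup C$, and its successor lies in $R - F$ (internal node or $v$), so $w \in \inneighborhood{L \cup C}{R-F}{G}$ with $w \ne v$. This bounds the number of node-disjoint $Av$-paths terminating at $v$ by $\abs{\inneighborhood{L \cup C}{R-F}{G}} \le f$, so $A \notpropagate{F} B - F$. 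The two bounds together violate condition SC with parameter $F$.

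I expect the main obstacle to be the cut argument itself: verifying carefully that the relevant in-neighborhood intercepts \emph{every} excluding path (the ``last-crossing'' node analysis, including the boundary case where the successor is the terminal $v$), and that node-disjointness forces the intercepting nodes to be distinct so that the cardinality bound $\le f$ genuinely caps the number of node-disjoint paths. A secondary point warranting care is the asymmetric placement of $C$: putting $C$ entirely into $B$ makes the cut for the $A \notpropagate{F} B-F$ direction use vertices of both $A$ and $B$, which is harmless for a vertex cut but should be noted explicitly rather than assumed to lie inside $A$.
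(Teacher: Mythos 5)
Your proof is correct and follows essentially the same route as the paper's: both argue the contrapositive with the same partition $A = L$, $B = R \cup C$, and both use the small in-neighborhoods $\inneighborhood{R \cup C}{L-F}{}$ and $\inneighborhood{L \cup C}{R-F}{}$ as vertex cuts that bound the number of node-disjoint paths into $L-F$ and $R-F$ respectively. The only difference is that the paper justifies that bound by invoking Menger's theorem on an auxiliary graph $G'$ with an added super-source $s$, whereas you prove the needed (easy) direction of that fact directly via your last-crossing argument; both are sound.
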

\begin{proof}
	Fix an arbitrary set $F$.
	We show the contrapositive that if $G$ does not satisfy condition NC with parameter $F$, then $G$ does not satisfy condition SC with parameter $F$.
	Let $(L, C, R)$ be a partition that violates the requirements of condition NC.
	Let $A = L$ and $B = R \cup C$.
	Then $A - F = L - F$ is non-empty as well as $B - F \supseteq R - F$.
	
	We first show that $B \notpropagate{F}_G A - F$.
	Note that any path terminating in $V - F$ and excluding $F$ is also a path in $G_{-F}$.
	So we show that there are at most $f$ node-disjoint $Bt$-paths for every node $t \in A - F$ in $G_{-F}$, i.e., $B \notpropagate{F}_{G_{-F}} A - F$.
	Obtain $G'$ from $G_{-F}$ by adding a node $s$ with no incoming edges but an outgoing edge to each node in $B = R \cup C$.
	Consider any node $t \in A - F = L - F$.
	Then $\inneighborhood{R \cup C}{L - F}{G'} = \inneighborhood{R \cup C}{L - F}{G}$ is an $st$-cut in $G'$.
	Since $R \cup C \notadjacent{F}_G L - F$, we have that $\abs{\inneighborhood{R \cup C}{L - F}{G}} \le f$.
	So by Menger's Theorem, at most $f$ node-disjoint $st$-paths exist in $G'$.
	Therefore, there are at most $f$ node-disjoint paths from $B = R \cup C$ to $t \in A - F = L - F$ in $G_{-F}$, i.e., $B \notpropagate{F}_{G_{-F}} A - F$.
	
    We now show that $A = L \notpropagate{F}_G R - F \subseteq B - F$ using a similar argument.
    As before, we show that there are at most $f$ node-disjoint $At$-paths for every node $t \in R - F \subseteq B - F$ in $G_{-F}$, i.e., $A \notpropagate{F}_{G_{-F}} R - F \subseteq B - F$.
	Obtain $G'$ from $G_{-F}$ by adding a node $s$ with no incoming edges but an outgoing edge to each node in $A = L$.
	Consider any node $t \in R - F \subseteq B - F$.
	Then $\inneighborhood{L \cup C}{R - F}{G'} = \inneighborhood{L \cup C}{R - F}{G}$ is an $st$-cut in $G'$.
	Since $L \cup C \notadjacent{F}_G R - F$, we have that $\abs{\inneighborhood{L \cup C}{R - F}{G}} \le f$.
	So by Menger's Theorem, at most $f$ node-disjoint $st$-paths exist in $G'$.
	Therefore, there are at most $f$ node-disjoint paths from $A = L$ to $t \in R - F \subseteq B - F$ in $G_{-F}$, i.e., $A \notpropagate{F}_{G_{-F}} B - F$.
\end{proof} 
	\section{Proof of Theorem \ref{theorem necessity}} \label{section proof necessity}
		\begin{proof}[Proof of Theorem \ref{theorem necessity}]
	\begin{figure}[htb]
		\centering
		\begin{tikzpicture}[scale=0.75, every node/.style={scale=0.75}]
		\node[draw, circle, minimum size=1cm] at (0, 0) (L_1 cap F) {$L_1 \cap F_1$};
		\node[draw, circle, minimum size=1cm] at (8, 0) (R_1 cap F) {$R_1 \cap F_1$};
		\node[draw, circle, minimum size=1cm] at (0, -3) (L_1) {$L_1 - F_1$};
		\node[draw, circle, minimum size=1cm] at (8, -3) (R_1) {$R_1 - F_1$};
		
		\node[draw, ellipse, minimum height=2cm] at (0, -6) (NL cap FR) {$\inneighborhood{L \cap F}{R-F}{}$};
		\node[draw, ellipse, minimum height=2cm] at (8, -6) (NR cap FL) {$\inneighborhood{R \cap F}{L-F}{}$};
		\node[draw, ellipse, minimum height=2cm] at (0, -9) (NL minus FR) {$\inneighborhood{L - F}{R-F}{}$};
		\node[draw, ellipse, minimum height=2cm] at (8, -9) (NR minus FL) {$\inneighborhood{R - F}{L-F}{}$};
		
		\node[draw, circle, minimum size=1cm] at (0, -12) (L_0) {$L_0 - F_0$};
		\node[draw, circle, minimum size=1cm] at (8, -12) (R_0) {$R_0 - F_0$};
		\node[draw, circle, minimum size=1cm] at (0, -15) (L_0 cap F) {$L_0 \cap F_0$};
		\node[draw, circle, minimum size=1cm] at (8, -15) (R_0 cap F) {$R_0 \cap F_0$};
		
		\draw[-{Latex[width=3mm,length=3mm]},out=180, in=180,red] (NL cap FR) to (L_1 cap F);
		\draw[-{Latex[width=3mm,length=3mm]},out=180, in=180,red] (NL minus FR) to (L_1 cap F);
		\draw[-{Latex[width=3mm,length=3mm]},out=180, in=180,red] (NL cap FR) to (L_1);
		\draw[-{Latex[width=3mm,length=3mm]},out=180, in=180,red] (NL minus FR) to (L_1);
		
		\draw[-,out=0, in=0] (NR cap FL) to (R_1 cap F);
		\draw[-,out=0, in=0] (NR minus FL) to (R_1 cap F);
		\draw[-,out=0, in=0] (NR cap FL) to (R_1);
		\draw[-,out=0, in=0] (NR minus FL) to (R_1);
		
		\draw[-] (L_1 cap F) to (L_1);
		\draw[-] (R_1 cap F) to (R_1);
		
		\draw[-] (L_1 cap F) to (R_1 cap F);
		\draw[-] (NR cap FL) to (L_1 cap F);
		\draw[-{Latex[width=3mm,length=3mm]}] (NR minus FL) to (L_1 cap F);
		\draw[-{Latex[width=3mm,length=3mm]}] (R_1) to (L_1 cap F);
		
		\draw[-{Latex[width=3mm,length=3mm]},red] (NL cap FR) to (R_1 cap F);
		\draw[-{Latex[width=3mm,length=3mm]}] (NL minus FR) to (R_1 cap F);
		\draw[-{Latex[width=3mm,length=3mm]}] (L_1) to (R_1 cap F);
		
		\draw[-] (NR cap FL) to (L_1);
		\draw[-{Latex[width=3mm,length=3mm]}] (NR minus FL) to (L_1);
		
		\draw[-{Latex[width=3mm,length=3mm]},red] (NL cap FR) to (R_1);
		\draw[-{Latex[width=3mm,length=3mm]}] (NL minus FR) to (R_1);
		
		\draw[-] (NR cap FL) to (NR minus FL);
		\draw[-] (NL cap FR) to (NL minus FR);
		\draw[-] (NL cap FR) to (NR cap FL);
		\draw[-] (NL minus FR) to (NR minus FL);
		\draw[-] (NL cap FR) to (NR minus FL);
		\draw[-] (NR cap FL) to (NL minus FR);
		
		\draw[-{Latex[width=3mm,length=3mm]},out=0, in=0,red] (NR cap FL) to (R_0 cap F);
		\draw[-{Latex[width=3mm,length=3mm]},out=0, in=0,red] (NR minus FL) to (R_0 cap F);
		\draw[-{Latex[width=3mm,length=3mm]},out=0, in=0,red] (NR cap FL) to (R_0);
		\draw[-{Latex[width=3mm,length=3mm]},out=0, in=0,red] (NR minus FL) to (R_0);
		
		\draw[-,out=180, in=180] (NL cap FR) to (L_0 cap F);
		\draw[-,out=180, in=180] (NL minus FR) to (L_0 cap F);
		\draw[-,out=180, in=180] (NL cap FR) to (L_0);
		\draw[-,out=180, in=180] (NL minus FR) to (L_0);
		
		\draw[-] (L_0 cap F) to (L_0);
		\draw[-] (R_0 cap F) to (R_0);
		
		\draw[-] (L_0 cap F) to (R_0 cap F);
		\draw[-{Latex[width=3mm,length=3mm]},red] (NR cap FL) to (L_0 cap F);
		\draw[-{Latex[width=3mm,length=3mm]}] (NR minus FL) to (L_0 cap F);
		\draw[-{Latex[width=3mm,length=3mm]}] (R_0) to (L_0 cap F);
		
		\draw[-] (NL cap FR) to (R_0 cap F);
		\draw[-{Latex[width=3mm,length=3mm]}] (NL minus FR) to (R_0 cap F);
		\draw[-{Latex[width=3mm,length=3mm]}] (L_0) to (R_0 cap F);
		
		\draw[-{Latex[width=3mm,length=3mm]},red] (NR cap FL) to (L_0);
		\draw[-{Latex[width=3mm,length=3mm]}] (NR minus FL) to (L_0);
		
		\draw[-] (NL cap FR) to (R_0);
		\draw[-{Latex[width=3mm,length=3mm]}] (NL minus FR) to (R_0);
		\end{tikzpicture}
		\caption{Network $\mathcal{G}$: edges between copies of nodes in $L$ and $R$.}
		\label{figure necessity}
	\end{figure}
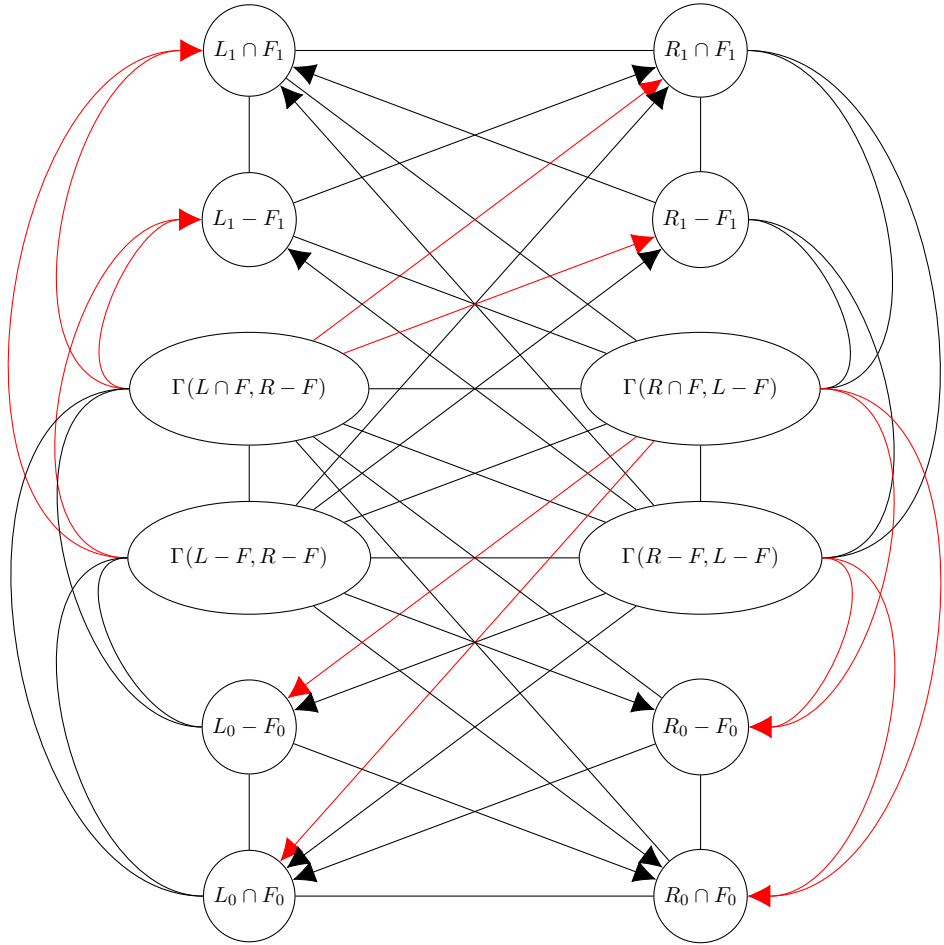
	
	\begin{figure}[htb]
		\centering
		\begin{tikzpicture}[scale=0.75, every node/.style={scale=0.75}]
		\node[draw, circle, minimum size=1cm] at (0, 0) (L_1 cap F) {$L_1 \cap F_1$};
		\node[draw, circle, minimum size=1cm] at (0, -3) (L_1) {$L_1 - F_1$};
		\node[draw, ellipse, minimum height=2cm] at (0, -6) (NL cap FR) {$\inneighborhood{L \cap F}{R-F}{}$};
		\node[draw, ellipse, minimum height=2cm] at (0, -9) (NL minus FR) {$\inneighborhood{L - F}{R-F}{}$};
		\node[draw, circle, minimum size=1cm] at (0, -12) (L_0) {$L_0 - F_0$};
		\node[draw, circle, minimum size=1cm] at (0, -15) (L_0 cap F) {$L_0 \cap F_0$};
		
		\node[draw, circle, minimum size=1cm, label=right:{$\del{\inneighborhood{C}{R-F}{} - \inneighborhood{C}{L - F}{}}_1$}] at (6, -4.5) (NCR_1) {};
		\node[draw, circle, minimum size=1cm, label=right:{$\del{\inneighborhood{C}{L - F}{} - \inneighborhood{C}{R-F}{}}_1$}] at (6, -1.5) (NCL_1) {};
		\node[draw, circle, minimum size=1cm, label=right:{$\inneighborhood{C}{L - F}{} \cap \inneighborhood{C}{R-F}{}$}] at (6, -7.5) (NCLR) {};
		\node[draw, circle, minimum size=1cm, label=right:{$\del{\inneighborhood{C}{L - F}{} - \inneighborhood{C}{R-F}{}}_0$}] at (6, -13.5) (NCL_0) {};
		\node[draw, circle, minimum size=1cm, label=right:{$\del{\inneighborhood{C}{R-F}{} - \inneighborhood{C}{L - F}{}}_0$}] at (6, -10.5) (NCR_0) {};
		
		\node[draw, circle, minimum size=1cm] at (-5, -11.5) (C_0) {$C_0$};
		\node[draw, circle, minimum size=1cm] at (-5, -7.5) (C'_1) {$C'_1$};
		\node[draw, circle, minimum size=1cm] at (-5, -3.5) (C_1) {$C_1$};
		
		\draw[-] (L_1 cap F) to (C_1);
		\draw[-{Latex[width=3mm,length=3mm]}] (L_1) to (C_1);
		\draw[-] (L_0 cap F) to (C_0);
		\draw[-{Latex[width=3mm,length=3mm]}] (L_0) to (C_0);
		\draw[-{Latex[width=3mm,length=3mm]},red] (L_0 cap F) to (C'_1);
		\draw[-{Latex[width=3mm,length=3mm]}] (L_0) to (C'_1);
		\draw[-] (NL cap FR) to (C_0);
		\draw[-{Latex[width=3mm,length=3mm]},red] (NL cap FR) to (C_1);
		\draw[-{Latex[width=3mm,length=3mm]},red] (NL cap FR) to (C'_1);
		\draw[-{Latex[width=3mm,length=3mm]}] (NL minus FR) to (C_1);
		\draw[-{Latex[width=3mm,length=3mm]}] (NL minus FR) to (C'_1);
		\draw[-{Latex[width=3mm,length=3mm]}] (NL minus FR) to (C_0);
		
		\draw[-] (L_0) to (NCL_0);
		\draw[-] (L_0) to (NCLR);
		\draw[-{Latex[width=3mm,length=3mm]}] (L_0) to (NCR_0);
		\draw[-{Latex[width=3mm,length=3mm]}] (L_0) to (NCR_1);
		\draw[-] (L_0 cap F) to (NCL_0);
		\draw[-] (L_0 cap F) to (NCLR);
		\draw[-] (L_0 cap F) to (NCR_0);
		\draw[-{Latex[width=3mm,length=3mm]},red] (L_0 cap F) to (NCR_1);
		
		\draw[-] (L_1) to (NCL_1);
		\draw[-{Latex[width=3mm,length=3mm]},red] (NCLR) to (L_1);
		\draw[-] (L_1 cap F) to (NCL_1);
		\draw[-{Latex[width=3mm,length=3mm]},red] (NCR_1) to (L_1 cap F);
		\draw[-{Latex[width=3mm,length=3mm]},red] (NCLR) to (L_1 cap F);
		
		\draw[-] (NL cap FR) to (NCL_0);
		\draw[-{Latex[width=3mm,length=3mm]},red] (NL cap FR) to (NCL_1);
		\draw[-] (NL cap FR) to (NCR_0);
		\draw[-{Latex[width=3mm,length=3mm]},red] (NL cap FR) to (NCR_1);
		\draw[-] (NL cap FR) to (NCLR);
		
		\draw[-] (NL minus FR) to (NCL_0);
		\draw[-{Latex[width=3mm,length=3mm]},red] (NL minus FR) to (NCL_1);
		\draw[-{Latex[width=3mm,length=3mm]}] (NL minus FR) to (NCR_0);
		\draw[-{Latex[width=3mm,length=3mm]}] (NL minus FR) to (NCR_1);
		\draw[-] (NL minus FR) to (NCLR);
		\end{tikzpicture}
		\caption{Network $\mathcal{G}$: edges between copies of nodes in $L$ and $C$.}
		\label{figure necessity LC}
	\end{figure}
	
	\begin{figure}[htb]
		\centering
		\begin{tikzpicture}[scale=0.75, every node/.style={scale=0.75}]
		\node[draw, circle, minimum size=1cm] at (0, 0) (R_1 cap F) {$R_1 \cap F_1$};
		\node[draw, circle, minimum size=1cm] at (0, -3) (R_1) {$R_1 - F_1$};
		\node[draw, ellipse, minimum height=2cm] at (0, -6) (NR cap FL) {$\inneighborhood{R \cap F}{L - F}{}$};
		\node[draw, ellipse, minimum height=2cm] at (0, -9) (NR minus FL) {$\inneighborhood{R - F}{L - F}{}$};
		\node[draw, circle, minimum size=1cm] at (0, -12) (R_0) {$R_0 - F_0$};
		\node[draw, circle, minimum size=1cm] at (0, -15) (R_0 cap F) {$R_0 \cap F_0$};
		
		\node[draw, circle, minimum size=1cm, label=right:{$\del{\inneighborhood{C}{R - F}{} - \inneighborhood{C}{L - F}{}}_1$}] at (6, -4.5) (NCR_1) {};
		\node[draw, circle, minimum size=1cm, label=right:{$\del{\inneighborhood{C}{L - F}{} - \inneighborhood{C}{R - F}{}}_1$}] at (6, -1.5) (NCL_1) {};
		\node[draw, circle, minimum size=1cm, label=right:{$\inneighborhood{C}{L - F}{} \cap \inneighborhood{C}{R - F}{}$}] at (6, -7.5) (NCLR) {};
		\node[draw, circle, minimum size=1cm, label=right:{$\del{\inneighborhood{C}{L - F}{} - \inneighborhood{C}{R - F}{}}_0$}] at (6, -13.5) (NCL_0) {};
		\node[draw, circle, minimum size=1cm, label=right:{$\del{\inneighborhood{C}{R - F}{} - \inneighborhood{C}{L - F}{}}_0$}] at (6, -10.5) (NCR_0) {};
		
		\node[draw, circle, minimum size=1cm] at (-5, -11.5) (C_0) {$C_0$};
		\node[draw, circle, minimum size=1cm] at (-5, -7.5) (C'_1) {$C'_1$};
		\node[draw, circle, minimum size=1cm] at (-5, -3.5) (C_1) {$C_1$};
		
		\draw[-] (R_1 cap F) to (C_1);
		\draw[-{Latex[width=3mm,length=3mm]}] (R_1) to (C_1);
		\draw[-] (R_0 cap F) to (C_0);
		\draw[-{Latex[width=3mm,length=3mm]}] (R_0) to (C_0);
		\draw[-{Latex[width=3mm,length=3mm]},red] (R_1 cap F) to (C'_1);
		\draw[-{Latex[width=3mm,length=3mm]}] (R_1) to (C'_1);
		\draw[-] (NR cap FL) to (C_1);
		\draw[-{Latex[width=3mm,length=3mm]},red] (NR cap FL) to (C_0);
		\draw[-{Latex[width=3mm,length=3mm]},red] (NR cap FL) to (C'_1);
		\draw[-{Latex[width=3mm,length=3mm]}] (NR minus FL) to (C_1);
		\draw[-{Latex[width=3mm,length=3mm]}] (NR minus FL) to (C'_1);
		\draw[-{Latex[width=3mm,length=3mm]}] (NR minus FL) to (C_0);
		
		\draw[-] (R_0) to (NCR_0);
		\draw[-{Latex[width=3mm,length=3mm]},red] (NCLR) to (R_0);
		\draw[-{Latex[width=3mm,length=3mm]},red] (NCLR) to (R_0 cap F);
		\draw[-{Latex[width=3mm,length=3mm]},red] (R_1 cap F) to (NCL_0);
		\draw[-] (R_0 cap F) to (NCR_0);
		
		\draw[-] (R_1) to (NCR_1);
		\draw[-] (R_1) to (NCLR);
		\draw[-{Latex[width=3mm,length=3mm]}] (R_1) to (NCL_1);
		\draw[-{Latex[width=3mm,length=3mm]}] (R_1) to (NCL_0);
		\draw[-{Latex[width=3mm,length=3mm]},red] (NCL_0) to (R_0 cap F);
		\draw[-] (R_1 cap F) to (NCR_1);
		\draw[-] (R_1 cap F) to (NCLR);
		\draw[-] (NCL_1) to (L_1 cap F);
		
		\draw[-] (NR cap FL) to (NCR_1);
		\draw[-{Latex[width=3mm,length=3mm]},red] (NR cap FL) to (NCR_0);
		\draw[-] (NR cap FL) to (NCL_1);
		\draw[-{Latex[width=3mm,length=3mm]},red] (NR cap FL) to (NCL_0);
		\draw[-] (NR cap FL) to (NCLR);
		
		\draw[-] (NR minus FL) to (NCR_1);
		\draw[-{Latex[width=3mm,length=3mm]},red] (NR minus FL) to (NCR_0);
		\draw[-{Latex[width=3mm,length=3mm]}] (NR minus FL) to (NCL_0);
		\draw[-{Latex[width=3mm,length=3mm]}] (NR minus FL) to (NCL_1);
		\draw[-] (NR minus FL) to (NCLR);
		\end{tikzpicture}
		\caption{Network $\mathcal{G}$: edges between copies of nodes in $R$ and $C$.}
		\label{figure necessity RC}
	\end{figure}
	
	\begin{figure}[htb]
		\centering
		\begin{tikzpicture}[scale=0.85, every node/.style={scale=0.85}]
		\node[draw, circle, minimum size=1cm] at (135:4) (C_0) {$C_0$};
		\node[draw, circle, minimum size=1cm] at (180:4) (C'_1) {$C'_1$};
		\node[draw, circle, minimum size=1cm] at (225:4) (C_1) {$C_1$};
		\node[draw, circle, minimum size=1cm, label=below:{$\del{\inneighborhood{C}{L - F}{} - \inneighborhood{C}{R - F}{}}_0$}] at (270:4) (NCL_0) {};
		\node[draw, circle, minimum size=1cm, label=below right:{$\del{\inneighborhood{C}{R - F}{} - \inneighborhood{C}{L - F}{}}_0$}] at (315:4) (NCR_0) {};
		\node[draw, circle, minimum size=1cm, label=right:{$\inneighborhood{C}{L - F}{} \cap \inneighborhood{C}{R - F}{}$}] at (360:4) (NCLR) {};
		\node[draw, circle, minimum size=1cm, label=above right:{$\del{\inneighborhood{C}{L - F}{} - \inneighborhood{C}{R - F}{}}_1$}] at (45:4) (NCL_1) {};
		\node[draw, circle, minimum size=1cm, label=above:{$\del{\inneighborhood{C}{R - F}{} - \inneighborhood{C}{L - F}{}}_1$}] at (90:4) (NCR_1) {};
		
		\draw[-{Latex[width=3mm,length=3mm]},red] (NCLR) to (C_0);
		\draw[-] (NCLR) to (C'_1);
		\draw[-{Latex[width=3mm,length=3mm]},red] (NCLR) to (C_1);
		\draw[-] (NCLR) to (NCL_0);
		\draw[-{Latex[width=3mm,length=3mm]},red] (NCLR) to (NCL_1);
		\draw[-{Latex[width=3mm,length=3mm]},red] (NCLR) to (NCR_0);
		\draw[-] (NCLR) to (NCR_1);
		
		\draw[-{Latex[width=3mm,length=3mm]},red] (NCL_0) to (C_0);
		\draw[-] (NCL_0) to (C'_1);
		\draw[-{Latex[width=3mm,length=3mm]},red] (NCL_0) to (NCR_0);
		\draw[-] (NCL_0) to (NCR_1);
		
		\draw[-{Latex[width=3mm,length=3mm]},red] (NCR_1) to (C_1);
		\draw[-] (NCR_1) to (C'_1);
		\draw[-{Latex[width=3mm,length=3mm]},red] (NCR_1) to (NCL_1);
		\draw[-] (NCR_1) to (NCL_0);
		
		\draw[-] (NCR_0) to (C_0);
		\draw[-] (NCL_1) to (C_1);
		\end{tikzpicture}
		\caption{Network $\mathcal{G}$: edges between copies of nodes in $C$.}
		\label{figure necessity C}
	\end{figure}
	
	Suppose for the sake of contradiction that $G$ does not satisfy condition NC and there exists an algorithm $\mathcal{A}$ that solves Byzantine consensus under the local broadcast model.
	The Byzantine consensus algorithm $\mathcal{A}$ outlines a procedure $\mathcal{A}_u$ for each node $u \in V$ that describes state transitions of $u$.
	In each synchronous round, each node $u$ optionally sends messages to its out-neighbors, receives messages from its in-neighbors, and then updates its state.
	The new state of $u$ depends entirely on $u$'s previous state and the messages received by $u$ from its neighbors.
	The state of $u$ determines the messages sent by $u$.
	
	We first create a network $\mathcal{G}$ to model behavior of nodes in $G$ in three different executions $E_1$, $E_2$, and $E_3$, which we will describe later.
	Figures \ref{figure necessity}-\ref{figure necessity C} depict the network $\mathcal{G}$.
	A node $u$ in $G$ may have multiple copies in $\mathcal{G}$, as listed below.
	
	\begin{enumerate}[1)]
		\item
		If $u \in L - \inneighborhood{L}{R - F}{}$, then $u$ has two copies $u_0$ and $u_1$.
		$u_0$ starts with input $0$ while $u_1$ starts with input $1$.
		$L_0$ and $L_1$ denote the two sets corresponding to the two copies of nodes in $L - \inneighborhood{L}{R - F}{}$.
		
		\item
		If $u \in R - \inneighborhood{R}{L - F}{}$, then $u$ has two copies $u_0$ and $u_1$.
		$u_0$ starts with input $0$ while $u_1$ starts with input $1$.
		$R_0$ and $R_1$ denote the two sets corresponding to the two copies of nodes in $R - \inneighborhood{R}{L - F}{}$.
		
		\item
		If $u \in \inneighborhood{L}{R - F}{}$, then $u$ has a single copy that starts with input $0$.
		
		\item
		If $u \in \inneighborhood{R}{L - F}{}$, then $u$ has a single copy that starts with input $1$.
		
		\item
		If $u \in \inneighborhood{C}{L - F}{} \cap \inneighborhood{C}{R - F}{}$, then $u$ has a single copy that starts with input $1$.
		
		\item
		If $u \in \inneighborhood{C}{L - F}{} - \inneighborhood{C}{R - F}{}$, then $u$ has two copies $u_0$ and $u_1$.
		$u_0$ starts with input $0$ while $u_1$ starts with input $1$.
		We denote the two sets corresponding to the two copies of nodes in $\inneighborhood{C}{L - F}{} - \inneighborhood{C}{R - F}{}$ by $\del[1]{\inneighborhood{C}{L - F}{} - \inneighborhood{C}{R - F}{}}_0$ and $\del[1]{\inneighborhood{C}{L - F}{} - \inneighborhood{C}{R - F}{}}_1$.
		
		\item
		If $u \in \inneighborhood{C}{R - F}{} - \inneighborhood{C}{L - F}{}$, then $u$ has two copies $u_0$ and $u_1$.
		$u_0$ starts with input $0$ while $u_1$ starts with input $1$.
		We denote the two sets corresponding to the two copies of nodes in $\inneighborhood{C}{R - F}{} - \inneighborhood{C}{L - F}{}$ by $\del[1]{\inneighborhood{C}{R - F}{} - \inneighborhood{C}{L - F}{}}_0$ and $\del[1]{\inneighborhood{C}{R - F}{} - \inneighborhood{C}{L - F}{}}_1$.
		
		\item
		If $u \in C - \inneighborhood{C}{R-F}{} - \inneighborhood{C}{L-F}{}$, then $u$ has three copies $u_0$, $u_1$, and $u'_1$.
		$u_0$ starts with input $0$ while $u_1$ and $u'_1$ start with input $1$.
		$C_0$, $C_1$, and $C'_1$ denote the three sets corresponding to the three copies of nodes in $C - \inneighborhood{C}{R-F}{} - \inneighborhood{C}{L-F}{}$.
	\end{enumerate}
	
	We use the following convention to depict edges of $\mathcal{G}$ in the figures.
	Consider any two distinct sets of nodes $A$ and $B$ of graph $G$, and two nodes $u \in A$ and $v \in B$.
	Let $A_i$ be one copy of $A$ and $B_j$ a copy of $B$ in $\mathcal{G}$, with $u_i \in A_i$ the corresponding copy of $u$ and $v_j \in B_j$ the corresponding copy of $v$.
	There are the following cases to consider.

	\begin{enumerate}[1)]
		\item
		There is an undirected edge between $A_i$ and $B_j$.
		Then $\mathcal{G}$ has following edges between $u_i$ and $v_j$:
		\begin{itemize}
			\item If $(u, v)$ is an edge in $G$, then $(u_i, v_j)$ is an edge in $\mathcal{G}$.
			\item If $(v, u)$ is an edge in $G$, then $(v_j, u_i)$ is an edge in $\mathcal{G}$.
		\end{itemize}
		
		\item
		There is a black directed edge from $A_i$ to $B_j$ (the case where the edge is from $B_j$ to $A_i$ follows similarly).
		Then, in $G$, the only edges between $A$ and $B$ are from $A$ to $B$, and there are no edges from $B$ to $A$.
		In this case, if $(u, v)$ is an edge in $G$, then $(u_i, v_j)$ is an edge in $\mathcal{G}$.
		
		\item
		There is a red directed edge from $A_i$ to $B_j$ (the case where the edge is from $B_j$ to $A_i$ follows similarly).
		Then, in $G$, the edges between $A$ and $B$ could be in either direction.
		$\mathcal{G}$ has following edges between $u_i$ and $v_j$:
		\begin{itemize}
			\item If $(u, v)$ is an edge in $G$, then $(u_i, v_j)$ is an edge in $\mathcal{G}$.
			\item $(v_j, u_i)$ is \underline{not} an edge in $\mathcal{G}$, even if $(v, u)$ is an edge in $G$.
		\end{itemize}
		
		\item
		There is no edge between $A_i$ and $B_j$.
		Then there are no edges between $u_i$ and $v_j$ in $\mathcal{G}$, even if there is an edge between the two nodes in $G$.
	\end{enumerate}
	
	All messages are transmitted via local broadcast in $\mathcal{G}$.
	In the construction of $\mathcal{G}$, we have carefully maintained the following property: if $(u, v)$ is an edge in the original graph $G$, then each copy of $v$ has exactly one copy of $u$ as an in-neighbor in $\mathcal{G}$.
	Note, however, that there could be multiple copies of $v$ (or none) that are out-neighbors of a copy of $u$.
	This allows us to create an algorithm for $\mathcal{G}$ corresponding to $\mathcal{A}$ as follows.
	For each node $u$ in $G$, each copy of $u$ in $\mathcal{G}$ runs $\mathcal{A}_u$.
	
	Now, consider an execution $\mathcal{E}$ of the above algorithm on $\mathcal{G}$ as follows.
	The inputs to each node were described earlier when listing the different copies of nodes.
	Observe that it is not guaranteed that nodes in $\mathcal{G}$ will reach an agreement or even if the algorithm will terminate.
	We will show that the algorithm indeed terminates but nodes do not reach agreement, which will be useful in deriving the desired contradiction.
	We use $\mathcal{E}$ to describe three executions $E_1$, $E_2$, and $E_3$ of $\mathcal{A}$ on the original graph $G$ as follows.
	
	\begin{enumerate}[$E_1$:]
		\item
		$\inneighborhood{R \cup C}{L - F}{}$ is the set of faulty nodes.
		In each round, a faulty node broadcasts the same messages as the corresponding copy of the node in $\mathcal{G}$ in execution $\mathcal{E}$ in the same round.
		All non-faulty nodes have input $0$.
		The behavior of non-faulty nodes is also modeled by the corresponding copies of nodes in $\mathcal{E}$.
		Figures \ref{figure necessity execution 1}-\ref{figure necessity C execution 1} depict the copies of nodes in $\mathcal{G}$ that model $E_1$, with faulty nodes drawn in red.
		Since $\mathcal{A}$ solves Byzantine consensus on $G$, so non-faulty nodes decide on output $0$ by validity in finite time.
		
		\item
		$\inneighborhood{L \cup C}{R - F}{}$ is the set of faulty nodes.
		In each round, a faulty node broadcasts the same messages as the corresponding copy of the node in $\mathcal{G}$ in execution $\mathcal{E}$ in the same round.
		All non-faulty nodes have input $1$.
		The behavior of non-faulty nodes is also modeled by the corresponding copies of nodes in $\mathcal{E}$.
		Figures \ref{figure necessity execution 2}-\ref{figure necessity C execution 2} depict the copies of nodes in $\mathcal{G}$ that model $E_2$, with faulty nodes drawn in red.
		Since $\mathcal{A}$ solves Byzantine consensus on $G$, so non-faulty nodes decide on output $1$ by validity in finite time.
		
		\item
		$F \cap (L \cup R)$ is the set of faulty nodes.
		In each round, a faulty node broadcasts the same messages as the corresponding copy of the node in $\mathcal{G}$ in execution $\mathcal{E}$ in the same round.
		Non-faulty nodes in $L$ have input $0$ while all other non-faulty nodes have input $1$.
		The behavior of non-faulty nodes is also modeled by the corresponding copies of nodes in $\mathcal{E}$.
		Figures \ref{figure necessity execution 3}-\ref{figure necessity C execution 3} depict the copies of nodes in $\mathcal{G}$ that model $E_3$, with faulty nodes drawn in red.
		The output of non-faulty nodes in $E_3$ will be described later.
	\end{enumerate}

	Since all non-faulty nodes in $L - F$ output $0$ in $E_1$, so nodes in $L_0 - F_0$ output $0$ in $\mathcal{E}$.
	Similarly, all non-faulty nodes in $R - F$ output $1$ in $E_2$ and so nodes in $R_0 - F_0$ output $1$ in $\mathcal{E}$.
	It follows that in $E_3$, nodes in $L - F$, as modeled by $L_0 - F_0$ in $\mathcal{E}$, output $0$ while nodes in $R - F$, as modeled by $R_0 - F_0$ in $\mathcal{E}$, output $1$.
	Since both $L - F$ and $R - F$ are non-empty by construction, we have that agreement is violated in $E_3$, a contradiction.
\end{proof} 		\begin{figure}[htb]
	\centering
	\begin{tikzpicture}[scale=0.75, every node/.style={scale=0.75}]
	\node[draw, circle, minimum size=1cm, gray] at (0, 0) (L_1 cap F) {$L_1 \cap F_1$};
	\node[draw, circle, minimum size=1cm, gray] at (8, 0) (R_1 cap F) {$R_1 \cap F_1$};
	\node[draw, circle, minimum size=1cm, gray] at (0, -3) (L_1) {$L_1 - F_1$};
	\node[draw, circle, minimum size=1cm, gray] at (8, -3) (R_1) {$R_1 - F_1$};
	
	\node[draw, ellipse, minimum height=2cm] at (0, -6) (NL cap FR) {$\inneighborhood{L \cap F}{R-F}{}$};
	\node[draw, ellipse, minimum height=2cm, red] at (8, -6) (NR cap FL) {$\inneighborhood{R \cap F}{L-F}{}$};
	\node[draw, ellipse, minimum height=2cm] at (0, -9) (NL minus FR) {$\inneighborhood{L - F}{R-F}{}$};
	\node[draw, ellipse, minimum height=2cm, red] at (8, -9) (NR minus FL) {$\inneighborhood{R - F}{L-F}{}$};
	
	\node[draw, circle, minimum size=1cm] at (0, -12) (L_0) {$L_0 - F_0$};
	\node[draw, circle, minimum size=1cm] at (8, -12) (R_0) {$R_0 - F_0$};
	\node[draw, circle, minimum size=1cm] at (0, -15) (L_0 cap F) {$L_0 \cap F_0$};
	\node[draw, circle, minimum size=1cm] at (8, -15) (R_0 cap F) {$R_0 \cap F_0$};

	\draw[-] (NR cap FL) to (NR minus FL);
	\draw[-] (NL cap FR) to (NL minus FR);
	\draw[-] (NL cap FR) to (NR cap FL);
	\draw[-] (NL minus FR) to (NR minus FL);
	\draw[-] (NL cap FR) to (NR minus FL);
	\draw[-] (NR cap FL) to (NL minus FR);
	
	\draw[-{Latex[width=3mm,length=3mm]},out=0, in=0,red] (NR cap FL) to (R_0 cap F);
	\draw[-{Latex[width=3mm,length=3mm]},out=0, in=0,red] (NR minus FL) to (R_0 cap F);
	\draw[-{Latex[width=3mm,length=3mm]},out=0, in=0,red] (NR cap FL) to (R_0);
	\draw[-{Latex[width=3mm,length=3mm]},out=0, in=0,red] (NR minus FL) to (R_0);
	
	\draw[-,out=180, in=180] (NL cap FR) to (L_0 cap F);
	\draw[-,out=180, in=180] (NL minus FR) to (L_0 cap F);
	\draw[-,out=180, in=180] (NL cap FR) to (L_0);
	\draw[-,out=180, in=180] (NL minus FR) to (L_0);
	
	\draw[-] (L_0 cap F) to (L_0);
	\draw[-] (R_0 cap F) to (R_0);
	
	\draw[-] (L_0 cap F) to (R_0 cap F);
	\draw[-{Latex[width=3mm,length=3mm]},red] (NR cap FL) to (L_0 cap F);
	\draw[-{Latex[width=3mm,length=3mm]}] (NR minus FL) to (L_0 cap F);
	\draw[-{Latex[width=3mm,length=3mm]}] (R_0) to (L_0 cap F);
	
	\draw[-] (NL cap FR) to (R_0 cap F);
	\draw[-{Latex[width=3mm,length=3mm]}] (NL minus FR) to (R_0 cap F);
	\draw[-{Latex[width=3mm,length=3mm]}] (L_0) to (R_0 cap F);
	
	\draw[-{Latex[width=3mm,length=3mm]},red] (NR cap FL) to (L_0);
	\draw[-{Latex[width=3mm,length=3mm]}] (NR minus FL) to (L_0);
	
	\draw[-] (NL cap FR) to (R_0);
	\draw[-{Latex[width=3mm,length=3mm]}] (NL minus FR) to (R_0);
	\end{tikzpicture}
	\caption{Execution $E_1$ as modeled by network $\mathcal{G}$: edges between copies of nodes in $L$ and $R$.}
	\label{figure necessity execution 1}
\end{figure}
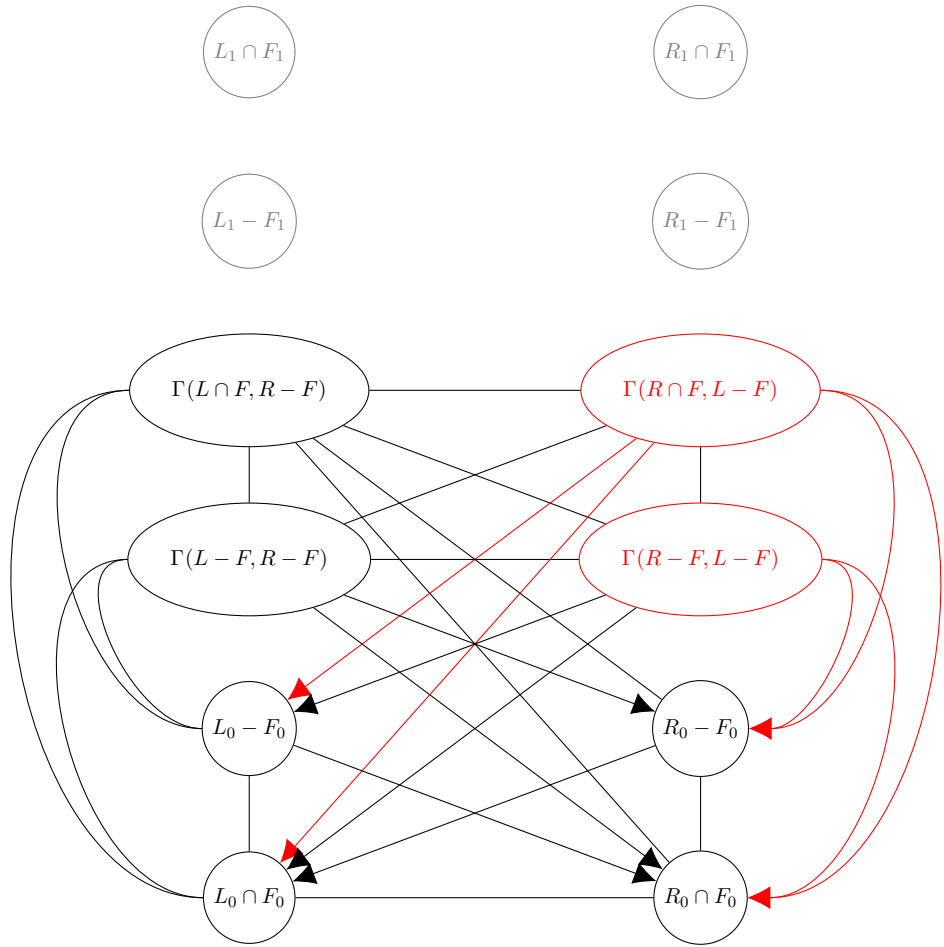

\begin{figure}[htb]
	\centering
	\begin{tikzpicture}[scale=0.75, every node/.style={scale=0.75}]
	\node[draw, circle, minimum size=1cm, gray] at (0, 0) (L_1 cap F) {$L_1 \cap F_1$};
	\node[draw, circle, minimum size=1cm, gray] at (0, -3) (L_1) {$L_1 - F_1$};
	\node[draw, ellipse, minimum height=2cm] at (0, -6) (NL cap FR) {$\inneighborhood{L \cap F}{R-F}{}$};
	\node[draw, ellipse, minimum height=2cm] at (0, -9) (NL minus FR) {$\inneighborhood{L - F}{R-F}{}$};
	\node[draw, circle, minimum size=1cm] at (0, -12) (L_0) {$L_0 - F_0$};
	\node[draw, circle, minimum size=1cm] at (0, -15) (L_0 cap F) {$L_0 \cap F_0$};
	
	\node[draw, circle, minimum size=1cm, label={[gray]right:{$\del{\inneighborhood{C}{R-F}{} - \inneighborhood{C}{L - F}{}}_1$}}, gray] at (6, -4.5) (NCR_1) {};
	\node[draw, circle, minimum size=1cm, label={[gray]right:{$\del{\inneighborhood{C}{L - F}{} - \inneighborhood{C}{R-F}{}}_1$}}, gray] at (6, -1.5) (NCL_1) {};
	\node[draw, circle, minimum size=1cm, label={[red]right:{$\inneighborhood{C}{L - F}{} \cap \inneighborhood{C}{R-F}{}$}}, red] at (6, -7.5) (NCLR) {};
	\node[draw, circle, minimum size=1cm, label={[red]right:{$\del{\inneighborhood{C}{L - F}{} - \inneighborhood{C}{R-F}{}}_0$}}, red] at (6, -13.5) (NCL_0) {};
	\node[draw, circle, minimum size=1cm, label=right:{{$\del{\inneighborhood{C}{R-F}{} - \inneighborhood{C}{L - F}{}}_0$}}] at (6, -10.5) (NCR_0) {};
	
	\node[draw, circle, minimum size=1cm] at (-5, -11.5) (C_0) {$C_0$};
	\node[draw, circle, minimum size=1cm, gray] at (-5, -7.5) (C'_1) {$C'_1$};
	\node[draw, circle, minimum size=1cm, gray] at (-5, -3.5) (C_1) {$C_1$};
	
	\draw[-] (L_0 cap F) to (C_0);
	\draw[-{Latex[width=3mm,length=3mm]}] (L_0) to (C_0);
	\draw[-] (NL cap FR) to (C_0);
	\draw[-{Latex[width=3mm,length=3mm]}] (NL minus FR) to (C_0);
	
	\draw[-] (L_0) to (NCL_0);
	\draw[-] (L_0) to (NCLR);
	\draw[-{Latex[width=3mm,length=3mm]}] (L_0) to (NCR_0);
	\draw[-] (L_0 cap F) to (NCL_0);
	\draw[-] (L_0 cap F) to (NCLR);
	\draw[-] (L_0 cap F) to (NCR_0);

	\draw[-] (NL cap FR) to (NCL_0);
	\draw[-] (NL cap FR) to (NCR_0);
	\draw[-] (NL cap FR) to (NCLR);
	
	\draw[-] (NL minus FR) to (NCL_0);
	\draw[-{Latex[width=3mm,length=3mm]}] (NL minus FR) to (NCR_0);
	\draw[-] (NL minus FR) to (NCLR);
	\end{tikzpicture}
	\caption{Execution $E_1$ as modeled by network $\mathcal{G}$: edges between copies of nodes in $L$ and $C$.}
	\label{figure necessity LC execution 1}
\end{figure}

\begin{figure}[htb]
	\centering
	\begin{tikzpicture}[scale=0.75, every node/.style={scale=0.75}]
	\node[draw, circle, minimum size=1cm, gray] at (0, 0) (R_1 cap F) {$R_1 \cap F_1$};
	\node[draw, circle, minimum size=1cm, gray] at (0, -3) (R_1) {$R_1 - F_1$};
	\node[draw, ellipse, minimum height=2cm, red] at (0, -6) (NR cap FL) {$\inneighborhood{R \cap F}{L - F}{}$};
	\node[draw, ellipse, minimum height=2cm, red] at (0, -9) (NR minus FL) {$\inneighborhood{R - F}{L - F}{}$};
	\node[draw, circle, minimum size=1cm] at (0, -12) (R_0) {$R_0 - F_0$};
	\node[draw, circle, minimum size=1cm] at (0, -15) (R_0 cap F) {$R_0 \cap F_0$};
	
	\node[draw, circle, minimum size=1cm, label={[gray]right:{$\del{\inneighborhood{C}{R - F}{} - \inneighborhood{C}{L - F}{}}_1$}}, gray] at (6, -4.5) (NCR_1) {};
	\node[draw, circle, minimum size=1cm, label={[gray]right:{$\del{\inneighborhood{C}{L - F}{} - \inneighborhood{C}{R - F}{}}_1$}}, gray] at (6, -1.5) (NCL_1) {};
	\node[draw, circle, minimum size=1cm, label={[red]right:{$\inneighborhood{C}{L - F}{} \cap \inneighborhood{C}{R - F}{}$}}, red] at (6, -7.5) (NCLR) {};
	\node[draw, circle, minimum size=1cm, label={[red]right:{$\del{\inneighborhood{C}{L - F}{} - \inneighborhood{C}{R - F}{}}_0$}}, red] at (6, -13.5) (NCL_0) {};
	\node[draw, circle, minimum size=1cm, label={right:{$\del{\inneighborhood{C}{R - F}{} - \inneighborhood{C}{L - F}{}}_0$}}] at (6, -10.5) (NCR_0) {};
	
	\node[draw, circle, minimum size=1cm] at (-5, -11.5) (C_0) {$C_0$};
	\node[draw, circle, minimum size=1cm, gray] at (-5, -7.5) (C'_1) {$C'_1$};
	\node[draw, circle, minimum size=1cm, gray] at (-5, -3.5) (C_1) {$C_1$};
	
	\draw[-] (R_0 cap F) to (C_0);
	\draw[-{Latex[width=3mm,length=3mm]}] (R_0) to (C_0);
	\draw[-{Latex[width=3mm,length=3mm]},red] (NR cap FL) to (C_0);
	\draw[-{Latex[width=3mm,length=3mm]}] (NR minus FL) to (C_0);
	
	\draw[-] (R_0) to (NCR_0);
	\draw[-{Latex[width=3mm,length=3mm]},red] (NCLR) to (R_0);
	\draw[-{Latex[width=3mm,length=3mm]},red] (NCLR) to (R_0 cap F);
	\draw[-] (R_0 cap F) to (NCR_0);
	
	\draw[-{Latex[width=3mm,length=3mm]},red] (NCL_0) to (R_0 cap F);
	
	\draw[-{Latex[width=3mm,length=3mm]},red] (NR cap FL) to (NCR_0);
	\draw[-{Latex[width=3mm,length=3mm]},red] (NR cap FL) to (NCL_0);
	\draw[-] (NR cap FL) to (NCLR);
	
	\draw[-{Latex[width=3mm,length=3mm]},red] (NR minus FL) to (NCR_0);
	\draw[-{Latex[width=3mm,length=3mm]}] (NR minus FL) to (NCL_0);
	\draw[-] (NR minus FL) to (NCLR);
	\end{tikzpicture}
	\caption{Execution $E_1$ as modeled by network $\mathcal{G}$: edges between copies of nodes in $R$ and $C$.}
	\label{figure necessity RC execution 1}
\end{figure}

\begin{figure}[htb]
	\centering
	\begin{tikzpicture}[scale=0.85, every node/.style={scale=0.85}]
	\node[draw, circle, minimum size=1cm] at (135:4) (C_0) {$C_0$};
	\node[draw, circle, minimum size=1cm, gray] at (180:4) (C'_1) {$C'_1$};
	\node[draw, circle, minimum size=1cm, gray] at (225:4) (C_1) {$C_1$};
	\node[draw, circle, minimum size=1cm, label={[red]below:{$\del{\inneighborhood{C}{L - F}{} - \inneighborhood{C}{R - F}{}}_0$}}, red] at (270:4) (NCL_0) {};
	\node[draw, circle, minimum size=1cm, label={below right:{$\del{\inneighborhood{C}{R - F}{} - \inneighborhood{C}{L - F}{}}_0$}}] at (315:4) (NCR_0) {};
	\node[draw, circle, minimum size=1cm, label={[red]right:{$\inneighborhood{C}{L - F}{} \cap \inneighborhood{C}{R - F}{}$}}, red] at (360:4) (NCLR) {};
	\node[draw, circle, minimum size=1cm, label={[gray]above right:{$\del{\inneighborhood{C}{L - F}{} - \inneighborhood{C}{R - F}{}}_1$}}, gray] at (45:4) (NCL_1) {};
	\node[draw, circle, minimum size=1cm, label={[gray]above:{$\del{\inneighborhood{C}{R - F}{} - \inneighborhood{C}{L - F}{}}_1$}}, gray] at (90:4) (NCR_1) {};
	
	\draw[-{Latex[width=3mm,length=3mm]},red] (NCLR) to (C_0);
	\draw[-] (NCLR) to (NCL_0);
	\draw[-{Latex[width=3mm,length=3mm]},red] (NCLR) to (NCR_0);
	
	\draw[-{Latex[width=3mm,length=3mm]},red] (NCL_0) to (C_0);
	\draw[-{Latex[width=3mm,length=3mm]},red] (NCL_0) to (NCR_0);

	\draw[-] (NCR_0) to (C_0);
	\end{tikzpicture}
	\caption{Execution $E_1$ as modeled by network $\mathcal{G}$: edges between copies of nodes in $C$.}
	\label{figure necessity C execution 1}
\end{figure} 		\begin{figure}[htb]
	\centering
	\begin{tikzpicture}[scale=0.75, every node/.style={scale=0.75}]
	\node[draw, circle, minimum size=1cm] at (0, 0) (L_1 cap F) {$L_1 \cap F_1$};
	\node[draw, circle, minimum size=1cm] at (8, 0) (R_1 cap F) {$R_1 \cap F_1$};
	\node[draw, circle, minimum size=1cm] at (0, -3) (L_1) {$L_1 - F_1$};
	\node[draw, circle, minimum size=1cm] at (8, -3) (R_1) {$R_1 - F_1$};
	
	\node[draw, ellipse, minimum height=2cm, red] at (0, -6) (NL cap FR) {$\inneighborhood{L \cap F}{R-F}{}$};
	\node[draw, ellipse, minimum height=2cm] at (8, -6) (NR cap FL) {$\inneighborhood{R \cap F}{L-F}{}$};
	\node[draw, ellipse, minimum height=2cm, red] at (0, -9) (NL minus FR) {$\inneighborhood{L - F}{R-F}{}$};
	\node[draw, ellipse, minimum height=2cm] at (8, -9) (NR minus FL) {$\inneighborhood{R - F}{L-F}{}$};
	
	\node[draw, circle, minimum size=1cm, gray] at (0, -12) (L_0) {$L_0 - F_0$};
	\node[draw, circle, minimum size=1cm, gray] at (8, -12) (R_0) {$R_0 - F_0$};
	\node[draw, circle, minimum size=1cm, gray] at (0, -15) (L_0 cap F) {$L_0 \cap F_0$};
	\node[draw, circle, minimum size=1cm, gray] at (8, -15) (R_0 cap F) {$R_0 \cap F_0$};
	
		\draw[-{Latex[width=3mm,length=3mm]},out=180, in=180,red] (NL cap FR) to (L_1 cap F);
		\draw[-{Latex[width=3mm,length=3mm]},out=180, in=180,red] (NL minus FR) to (L_1 cap F);
		\draw[-{Latex[width=3mm,length=3mm]},out=180, in=180,red] (NL cap FR) to (L_1);
		\draw[-{Latex[width=3mm,length=3mm]},out=180, in=180,red] (NL minus FR) to (L_1);
	
		\draw[-,out=0, in=0] (NR cap FL) to (R_1 cap F);
		\draw[-,out=0, in=0] (NR minus FL) to (R_1 cap F);
		\draw[-,out=0, in=0] (NR cap FL) to (R_1);
		\draw[-,out=0, in=0] (NR minus FL) to (R_1);
	
		\draw[-] (L_1 cap F) to (L_1);
		\draw[-] (R_1 cap F) to (R_1);
	
		\draw[-] (L_1 cap F) to (R_1 cap F);
		\draw[-] (NR cap FL) to (L_1 cap F);
		\draw[-{Latex[width=3mm,length=3mm]}] (NR minus FL) to (L_1 cap F);
		\draw[-{Latex[width=3mm,length=3mm]}] (R_1) to (L_1 cap F);
	
		\draw[-{Latex[width=3mm,length=3mm]},red] (NL cap FR) to (R_1 cap F);
		\draw[-{Latex[width=3mm,length=3mm]}] (NL minus FR) to (R_1 cap F);
		\draw[-{Latex[width=3mm,length=3mm]}] (L_1) to (R_1 cap F);
	
		\draw[-] (NR cap FL) to (L_1);
		\draw[-{Latex[width=3mm,length=3mm]}] (NR minus FL) to (L_1);
	
		\draw[-{Latex[width=3mm,length=3mm]},red] (NL cap FR) to (R_1);
		\draw[-{Latex[width=3mm,length=3mm]}] (NL minus FR) to (R_1);
	
	\draw[-] (NR cap FL) to (NR minus FL);
	\draw[-] (NL cap FR) to (NL minus FR);
	\draw[-] (NL cap FR) to (NR cap FL);
	\draw[-] (NL minus FR) to (NR minus FL);
	\draw[-] (NL cap FR) to (NR minus FL);
	\draw[-] (NR cap FL) to (NL minus FR);
	
	\end{tikzpicture}
	\caption{Execution $E_2$ as modeled by network $\mathcal{G}$: edges between copies of nodes in $L$ and $R$.}
	\label{figure necessity execution 2}
\end{figure}

\begin{figure}[htb]
	\centering
	\begin{tikzpicture}[scale=0.75, every node/.style={scale=0.75}]
	\node[draw, circle, minimum size=1cm] at (0, 0) (L_1 cap F) {$L_1 \cap F_1$};
	\node[draw, circle, minimum size=1cm] at (0, -3) (L_1) {$L_1 - F_1$};
	\node[draw, ellipse, minimum height=2cm, red] at (0, -6) (NL cap FR) {$\inneighborhood{L \cap F}{R-F}{}$};
	\node[draw, ellipse, minimum height=2cm, red] at (0, -9) (NL minus FR) {$\inneighborhood{L - F}{R-F}{}$};
	\node[draw, circle, minimum size=1cm, gray] at (0, -12) (L_0) {$L_0 - F_0$};
	\node[draw, circle, minimum size=1cm, gray] at (0, -15) (L_0 cap F) {$L_0 \cap F_0$};
	
	\node[draw, circle, minimum size=1cm, label={[red]right:{$\del{\inneighborhood{C}{R-F}{} - \inneighborhood{C}{L - F}{}}_1$}}, red] at (6, -4.5) (NCR_1) {};
	\node[draw, circle, minimum size=1cm, label={right:{$\del{\inneighborhood{C}{L - F}{} - \inneighborhood{C}{R-F}{}}_1$}}] at (6, -1.5) (NCL_1) {};
	\node[draw, circle, minimum size=1cm, label={[red]right:{$\inneighborhood{C}{L - F}{} \cap \inneighborhood{C}{R-F}{}$}}, red] at (6, -7.5) (NCLR) {};
	\node[draw, circle, minimum size=1cm, label={[gray]right:{{$\del{\inneighborhood{C}{L - F}{} - \inneighborhood{C}{R-F}{}}_0$}}}, gray] at (6, -13.5) (NCL_0) {};
	\node[draw, circle, minimum size=1cm, label={[gray]right:{{$\del{\inneighborhood{C}{R-F}{} - \inneighborhood{C}{L - F}{}}_0$}}}, gray] at (6, -10.5) (NCR_0) {};
	
	\node[draw, circle, minimum size=1cm, gray] at (-5, -11.5) (C_0) {$C_0$};
	\node[draw, circle, minimum size=1cm, gray] at (-5, -7.5) (C'_1) {$C'_1$};
	\node[draw, circle, minimum size=1cm] at (-5, -3.5) (C_1) {$C_1$};
	
		\draw[-] (L_1 cap F) to (C_1);
		\draw[-{Latex[width=3mm,length=3mm]}] (L_1) to (C_1);
		\draw[-{Latex[width=3mm,length=3mm]},red] (NL cap FR) to (C_1);
		\draw[-{Latex[width=3mm,length=3mm]}] (NL minus FR) to (C_1);

		\draw[-] (L_1) to (NCL_1);
		\draw[-{Latex[width=3mm,length=3mm]},red] (NCLR) to (L_1);
		\draw[-] (L_1 cap F) to (NCL_1);
		\draw[-{Latex[width=3mm,length=3mm]},red] (NCR_1) to (L_1 cap F);
		\draw[-{Latex[width=3mm,length=3mm]},red] (NCLR) to (L_1 cap F);
	
		\draw[-{Latex[width=3mm,length=3mm]},red] (NL cap FR) to (NCL_1);
		\draw[-{Latex[width=3mm,length=3mm]},red] (NL cap FR) to (NCR_1);
	\draw[-] (NL cap FR) to (NCLR);
	
		\draw[-{Latex[width=3mm,length=3mm]},red] (NL minus FR) to (NCL_1);
		\draw[-{Latex[width=3mm,length=3mm]}] (NL minus FR) to (NCR_1);
	\draw[-] (NL minus FR) to (NCLR);
	\end{tikzpicture}
	\caption{Execution $E_2$ as modeled by network $\mathcal{G}$: edges between copies of nodes in $L$ and $C$.}
	\label{figure necessity LC execution 2}
\end{figure}

\begin{figure}[htb]
	\centering
	\begin{tikzpicture}[scale=0.75, every node/.style={scale=0.75}]
	\node[draw, circle, minimum size=1cm] at (0, 0) (R_1 cap F) {$R_1 \cap F_1$};
	\node[draw, circle, minimum size=1cm] at (0, -3) (R_1) {$R_1 - F_1$};
	\node[draw, ellipse, minimum height=2cm] at (0, -6) (NR cap FL) {$\inneighborhood{R \cap F}{L - F}{}$};
	\node[draw, ellipse, minimum height=2cm] at (0, -9) (NR minus FL) {$\inneighborhood{R - F}{L - F}{}$};
	\node[draw, circle, minimum size=1cm, gray] at (0, -12) (R_0) {$R_0 - F_0$};
	\node[draw, circle, minimum size=1cm, gray] at (0, -15) (R_0 cap F) {$R_0 \cap F_0$};
	
	\node[draw, circle, minimum size=1cm, label={[red]right:{$\del{\inneighborhood{C}{R - F}{} - \inneighborhood{C}{L - F}{}}_1$}}, red] at (6, -4.5) (NCR_1) {};
	\node[draw, circle, minimum size=1cm, label={right:{$\del{\inneighborhood{C}{L - F}{} - \inneighborhood{C}{R - F}{}}_1$}}] at (6, -1.5) (NCL_1) {};
	\node[draw, circle, minimum size=1cm, label={[red]right:{$\inneighborhood{C}{L - F}{} \cap \inneighborhood{C}{R - F}{}$}}, red] at (6, -7.5) (NCLR) {};
	\node[draw, circle, minimum size=1cm, label={[gray]right:{$\del{\inneighborhood{C}{L - F}{} - \inneighborhood{C}{R - F}{}}_0$}}, gray] at (6, -13.5) (NCL_0) {};
	\node[draw, circle, minimum size=1cm, label={[gray]right:{$\del{\inneighborhood{C}{R - F}{} - \inneighborhood{C}{L - F}{}}_0$}}, gray] at (6, -10.5) (NCR_0) {};
	
	\node[draw, circle, minimum size=1cm, gray] at (-5, -11.5) (C_0) {$C_0$};
	\node[draw, circle, minimum size=1cm, gray] at (-5, -7.5) (C'_1) {$C'_1$};
	\node[draw, circle, minimum size=1cm] at (-5, -3.5) (C_1) {$C_1$};
	
		\draw[-] (R_1 cap F) to (C_1);
		\draw[-{Latex[width=3mm,length=3mm]}] (R_1) to (C_1);
		\draw[-] (NR cap FL) to (C_1);
		\draw[-{Latex[width=3mm,length=3mm]}] (NR minus FL) to (C_1);

		\draw[-] (R_1) to (NCR_1);
		\draw[-] (R_1) to (NCLR);
		\draw[-{Latex[width=3mm,length=3mm]}] (R_1) to (NCL_1);
		\draw[-] (R_1 cap F) to (NCR_1);
		\draw[-] (R_1 cap F) to (NCLR);
		\draw[-] (NCL_1) to (L_1 cap F);
	
		\draw[-] (NR cap FL) to (NCR_1);
		\draw[-] (NR cap FL) to (NCL_1);
	\draw[-] (NR cap FL) to (NCLR);
	
		\draw[-] (NR minus FL) to (NCR_1);
		\draw[-{Latex[width=3mm,length=3mm]}] (NR minus FL) to (NCL_1);
	\draw[-] (NR minus FL) to (NCLR);
	\end{tikzpicture}
	\caption{Execution $E_2$ as modeled by network $\mathcal{G}$: edges between copies of nodes in $R$ and $C$.}
	\label{figure necessity RC execution 2}
\end{figure}

\begin{figure}[htb]
	\centering
	\begin{tikzpicture}[scale=0.85, every node/.style={scale=0.85}]
	\node[draw, circle, minimum size=1cm, gray] at (135:4) (C_0) {$C_0$};
	\node[draw, circle, minimum size=1cm, gray] at (180:4) (C'_1) {$C'_1$};
	\node[draw, circle, minimum size=1cm] at (225:4) (C_1) {$C_1$};
	\node[draw, circle, minimum size=1cm, label={[gray]below:{$\del{\inneighborhood{C}{L - F}{} - \inneighborhood{C}{R - F}{}}_0$}}, gray] at (270:4) (NCL_0) {};
	\node[draw, circle, minimum size=1cm, label={[gray]below right:{$\del{\inneighborhood{C}{R - F}{} - \inneighborhood{C}{L - F}{}}_0$}}, gray] at (315:4) (NCR_0) {};
	\node[draw, circle, minimum size=1cm, label={[red]right:{$\inneighborhood{C}{L - F}{} \cap \inneighborhood{C}{R - F}{}$}}, red] at (360:4) (NCLR) {};
	\node[draw, circle, minimum size=1cm, label={above right:{$\del{\inneighborhood{C}{L - F}{} - \inneighborhood{C}{R - F}{}}_1$}}] at (45:4) (NCL_1) {};
	\node[draw, circle, minimum size=1cm, label={[red]above:{$\del{\inneighborhood{C}{R - F}{} - \inneighborhood{C}{L - F}{}}_1$}}, red] at (90:4) (NCR_1) {};
	
		\draw[-{Latex[width=3mm,length=3mm]},red] (NCLR) to (C_1);
		\draw[-{Latex[width=3mm,length=3mm]},red] (NCLR) to (NCL_1);
		\draw[-] (NCLR) to (NCR_1);

		\draw[-{Latex[width=3mm,length=3mm]},red] (NCR_1) to (C_1);
		\draw[-{Latex[width=3mm,length=3mm]},red] (NCR_1) to (NCL_1);
	
		\draw[-] (NCL_1) to (C_1);
	\end{tikzpicture}
	\caption{Execution $E_2$ as modeled by network $\mathcal{G}$: edges between copies of nodes in $C$.}
	\label{figure necessity C execution 2}
\end{figure} 		\begin{figure}[htb]
	\centering
	\begin{tikzpicture}[scale=0.75, every node/.style={scale=0.75}]
	\node[draw, circle, minimum size=1cm, gray] at (0, 0) (L_1 cap F) {$L_1 \cap F_1$};
	\node[draw, circle, minimum size=1cm, red] at (8, 0) (R_1 cap F) {$R_1 \cap F_1$};
	\node[draw, circle, minimum size=1cm, gray] at (0, -3) (L_1) {$L_1 - F_1$};
	\node[draw, circle, minimum size=1cm] at (8, -3) (R_1) {$R_1 - F_1$};
	
	\node[draw, ellipse, minimum height=2cm, red] at (0, -6) (NL cap FR) {$\inneighborhood{L \cap F}{R-F}{}$};
	\node[draw, ellipse, minimum height=2cm, red] at (8, -6) (NR cap FL) {$\inneighborhood{R \cap F}{L-F}{}$};
	\node[draw, ellipse, minimum height=2cm] at (0, -9) (NL minus FR) {$\inneighborhood{L - F}{R-F}{}$};
	\node[draw, ellipse, minimum height=2cm] at (8, -9) (NR minus FL) {$\inneighborhood{R - F}{L-F}{}$};
	
	\node[draw, circle, minimum size=1cm] at (0, -12) (L_0) {$L_0 - F_0$};
	\node[draw, circle, minimum size=1cm, gray] at (8, -12) (R_0) {$R_0 - F_0$};
	\node[draw, circle, minimum size=1cm, red] at (0, -15) (L_0 cap F) {$L_0 \cap F_0$};
	\node[draw, circle, minimum size=1cm, gray] at (8, -15) (R_0 cap F) {$R_0 \cap F_0$};

		\draw[-,out=0, in=0] (NR cap FL) to (R_1 cap F);
		\draw[-,out=0, in=0] (NR minus FL) to (R_1 cap F);
		\draw[-,out=0, in=0] (NR cap FL) to (R_1);
		\draw[-,out=0, in=0] (NR minus FL) to (R_1);
	
		\draw[-] (R_1 cap F) to (R_1);

		\draw[-{Latex[width=3mm,length=3mm]},red] (NL cap FR) to (R_1 cap F);
		\draw[-{Latex[width=3mm,length=3mm]}] (NL minus FR) to (R_1 cap F);

		\draw[-{Latex[width=3mm,length=3mm]},red] (NL cap FR) to (R_1);
		\draw[-{Latex[width=3mm,length=3mm]}] (NL minus FR) to (R_1);
	
	\draw[-] (NR cap FL) to (NR minus FL);
	\draw[-] (NL cap FR) to (NL minus FR);
	\draw[-] (NL cap FR) to (NR cap FL);
	\draw[-] (NL minus FR) to (NR minus FL);
	\draw[-] (NL cap FR) to (NR minus FL);
	\draw[-] (NR cap FL) to (NL minus FR);

	\draw[-,out=180, in=180] (NL cap FR) to (L_0 cap F);
	\draw[-,out=180, in=180] (NL minus FR) to (L_0 cap F);
	\draw[-,out=180, in=180] (NL cap FR) to (L_0);
	\draw[-,out=180, in=180] (NL minus FR) to (L_0);
	
	\draw[-] (L_0 cap F) to (L_0);
	
	\draw[-{Latex[width=3mm,length=3mm]},red] (NR cap FL) to (L_0 cap F);
	\draw[-{Latex[width=3mm,length=3mm]}] (NR minus FL) to (L_0 cap F);

	\draw[-{Latex[width=3mm,length=3mm]},red] (NR cap FL) to (L_0);
	\draw[-{Latex[width=3mm,length=3mm]}] (NR minus FL) to (L_0);
	
	\end{tikzpicture}
	\caption{Execution $E_3$ as modeled by network $\mathcal{G}$: edges between copies of nodes in $L$ and $R$.}
	\label{figure necessity execution 3}
\end{figure}

\begin{figure}[htb]
	\centering
	\begin{tikzpicture}[scale=0.75, every node/.style={scale=0.75}]
	\node[draw, circle, minimum size=1cm, gray] at (0, 0) (L_1 cap F) {$L_1 \cap F_1$};
	\node[draw, circle, minimum size=1cm, gray] at (0, -3) (L_1) {$L_1 - F_1$};
	\node[draw, ellipse, minimum height=2cm, red] at (0, -6) (NL cap FR) {$\inneighborhood{L \cap F}{R-F}{}$};
	\node[draw, ellipse, minimum height=2cm] at (0, -9) (NL minus FR) {$\inneighborhood{L - F}{R-F}{}$};
	\node[draw, circle, minimum size=1cm] at (0, -12) (L_0) {$L_0 - F_0$};
	\node[draw, circle, minimum size=1cm, red] at (0, -15) (L_0 cap F) {$L_0 \cap F_0$};
	
	\node[draw, circle, minimum size=1cm, label={[gray]right:{$\del{\inneighborhood{C}{R-F}{} - \inneighborhood{C}{L - F}{}}_1$}}, gray] at (6, -4.5) (NCR_1) {};
	\node[draw, circle, minimum size=1cm, label={[gray]right:{$\del{\inneighborhood{C}{L - F}{} - \inneighborhood{C}{R-F}{}}_1$}}, gray] at (6, -1.5) (NCL_1) {};
	\node[draw, circle, minimum size=1cm, label=right:{$\inneighborhood{C}{L - F}{} \cap \inneighborhood{C}{R-F}{}$}] at (6, -7.5) (NCLR) {};
	\node[draw, circle, minimum size=1cm, label=right:{{$\del{\inneighborhood{C}{L - F}{} - \inneighborhood{C}{R-F}{}}_0$}}] at (6, -13.5) (NCL_0) {};
	\node[draw, circle, minimum size=1cm, label=right:{{$\del{\inneighborhood{C}{R-F}{} - \inneighborhood{C}{L - F}{}}_0$}}] at (6, -10.5) (NCR_0) {};
	
	\node[draw, circle, minimum size=1cm, gray] at (-5, -11.5) (C_0) {$C_0$};
	\node[draw, circle, minimum size=1cm] at (-5, -7.5) (C'_1) {$C'_1$};
	\node[draw, circle, minimum size=1cm, gray] at (-5, -3.5) (C_1) {$C_1$};
	
		\draw[-{Latex[width=3mm,length=3mm]},red] (L_0 cap F) to (C'_1);
		\draw[-{Latex[width=3mm,length=3mm]}] (L_0) to (C'_1);
		\draw[-{Latex[width=3mm,length=3mm]},red] (NL cap FR) to (C'_1);
		\draw[-{Latex[width=3mm,length=3mm]}] (NL minus FR) to (C'_1);
	
	\draw[-] (L_0) to (NCL_0);
	\draw[-] (L_0) to (NCLR);
	\draw[-{Latex[width=3mm,length=3mm]}] (L_0) to (NCR_0);
	\draw[-] (L_0 cap F) to (NCL_0);
	\draw[-] (L_0 cap F) to (NCLR);
	\draw[-] (L_0 cap F) to (NCR_0);

	\draw[-] (NL cap FR) to (NCL_0);
	\draw[-] (NL cap FR) to (NCR_0);
	\draw[-] (NL cap FR) to (NCLR);
	
	\draw[-] (NL minus FR) to (NCL_0);
	\draw[-{Latex[width=3mm,length=3mm]}] (NL minus FR) to (NCR_0);
	\draw[-] (NL minus FR) to (NCLR);
	\end{tikzpicture}
	\caption{Execution $E_3$ as modeled by network $\mathcal{G}$: edges between copies of nodes in $L$ and $C$.}
	\label{figure necessity LC execution 3}
\end{figure}

\begin{figure}[htb]
	\centering
	\begin{tikzpicture}[scale=0.75, every node/.style={scale=0.75}]
	\node[draw, circle, minimum size=1cm, red] at (0, 0) (R_1 cap F) {$R_1 \cap F_1$};
	\node[draw, circle, minimum size=1cm] at (0, -3) (R_1) {$R_1 - F_1$};
	\node[draw, ellipse, minimum height=2cm, red] at (0, -6) (NR cap FL) {$\inneighborhood{R \cap F}{L - F}{}$};
	\node[draw, ellipse, minimum height=2cm] at (0, -9) (NR minus FL) {$\inneighborhood{R - F}{L - F}{}$};
	\node[draw, circle, minimum size=1cm, gray] at (0, -12) (R_0) {$R_0 - F_0$};
	\node[draw, circle, minimum size=1cm, gray] at (0, -15) (R_0 cap F) {$R_0 \cap F_0$};
	
	\node[draw, circle, minimum size=1cm, label={right:{$\del{\inneighborhood{C}{R - F}{} - \inneighborhood{C}{L - F}{}}_1$}}] at (6, -4.5) (NCR_1) {};
	\node[draw, circle, minimum size=1cm, label={right:{$\del{\inneighborhood{C}{L - F}{} - \inneighborhood{C}{R - F}{}}_1$}}] at (6, -1.5) (NCL_1) {};
	\node[draw, circle, minimum size=1cm, label=right:{$\inneighborhood{C}{L - F}{} \cap \inneighborhood{C}{R - F}{}$}] at (6, -7.5) (NCLR) {};
	\node[draw, circle, minimum size=1cm, label={[gray]right:{$\del{\inneighborhood{C}{L - F}{} - \inneighborhood{C}{R - F}{}}_0$}}, gray] at (6, -13.5) (NCL_0) {};
	\node[draw, circle, minimum size=1cm, label={[gray]right:{$\del{\inneighborhood{C}{R - F}{} - \inneighborhood{C}{L - F}{}}_0$}}, gray] at (6, -10.5) (NCR_0) {};
	
	\node[draw, circle, minimum size=1cm, gray] at (-5, -11.5) (C_0) {$C_0$};
	\node[draw, circle, minimum size=1cm] at (-5, -7.5) (C'_1) {$C'_1$};
	\node[draw, circle, minimum size=1cm, gray] at (-5, -3.5) (C_1) {$C_1$};
	
		\draw[-{Latex[width=3mm,length=3mm]},red] (R_1 cap F) to (C'_1);
		\draw[-{Latex[width=3mm,length=3mm]}] (R_1) to (C'_1);
		\draw[-{Latex[width=3mm,length=3mm]},red] (NR cap FL) to (C'_1);
		\draw[-{Latex[width=3mm,length=3mm]}] (NR minus FL) to (C'_1);

	\draw[-] (R_1) to (NCR_1);
	\draw[-] (R_1) to (NCLR);
	\draw[-{Latex[width=3mm,length=3mm]}] (R_1) to (NCL_1);
	\draw[-] (R_1 cap F) to (NCR_1);
	\draw[-] (R_1 cap F) to (NCLR);
	\draw[-] (NCL_1) to (L_1 cap F);
	
	\draw[-] (NR cap FL) to (NCR_1);
	\draw[-] (NR cap FL) to (NCL_1);
	\draw[-] (NR cap FL) to (NCLR);
	
	\draw[-] (NR minus FL) to (NCR_1);
	\draw[-{Latex[width=3mm,length=3mm]}] (NR minus FL) to (NCL_1);
	\draw[-] (NR minus FL) to (NCLR);
	\end{tikzpicture}
	\caption{Execution $E_3$ as modeled by network $\mathcal{G}$: edges between copies of nodes in $R$ and $C$.}
	\label{figure necessity RC execution 3}
\end{figure}

\begin{figure}[htb]
	\centering
	\begin{tikzpicture}[scale=0.85, every node/.style={scale=0.85}]
	\node[draw, circle, minimum size=1cm, gray] at (135:4) (C_0) {$C_0$};
	\node[draw, circle, minimum size=1cm] at (180:4) (C'_1) {$C'_1$};
	\node[draw, circle, minimum size=1cm, gray] at (225:4) (C_1) {$C_1$};
	\node[draw, circle, minimum size=1cm, label={below:{$\del{\inneighborhood{C}{L - F}{} - \inneighborhood{C}{R - F}{}}_0$}}] at (270:4) (NCL_0) {};
	\node[draw, circle, minimum size=1cm, label={[gray]below right:{$\del{\inneighborhood{C}{R - F}{} - \inneighborhood{C}{L - F}{}}_0$}}, gray] at (315:4) (NCR_0) {};
	\node[draw, circle, minimum size=1cm, label={right:{$\inneighborhood{C}{L - F}{} \cap \inneighborhood{C}{R - F}{}$}}] at (360:4) (NCLR) {};
	\node[draw, circle, minimum size=1cm, label={[gray]above right:{$\del{\inneighborhood{C}{L - F}{} - \inneighborhood{C}{R - F}{}}_1$}}, gray] at (45:4) (NCL_1) {};
	\node[draw, circle, minimum size=1cm, label={above:{$\del{\inneighborhood{C}{R - F}{} - \inneighborhood{C}{L - F}{}}_1$}}] at (90:4) (NCR_1) {};
	
		\draw[-] (NCLR) to (C'_1);
		\draw[-] (NCLR) to (NCL_0);
	\draw[-] (NCLR) to (NCR_1);
	
		\draw[-] (NCL_0) to (C'_1);
		\draw[-] (NCL_0) to (NCR_1);
	
		\draw[-] (NCR_1) to (C'_1);
	
	\end{tikzpicture}
	\caption{Execution $E_3$ as modeled by network $\mathcal{G}$: edges between copies of nodes in $C$.}
	\label{figure necessity C execution 3}
\end{figure} \end{document}